\documentclass[12pt]{article}
\usepackage{titling}
\usepackage[normalem]{ulem}
\newcommand{\cO}[1]{}
\usepackage[longnamesfirst]{natbib}
\usepackage{enumitem}

\usepackage{xcolor,hyperref}
\usepackage{amsthm}
\usepackage{comment}
\definecolor{darkblue}{rgb}{0.0,0.0,0.7}
\hypersetup{colorlinks,breaklinks,linkcolor=darkblue,urlcolor=darkblue,anchorcolor=darkblue,citecolor=darkblue}

\usepackage{geometry}
\usepackage{setspace}

\theoremstyle{plain}
\newtheorem{theorem}{Theorem}

\newtheorem{cor}{Corollary}

\newtheorem{definition}{Definition}

\newtheorem{lemma}{Lemma}

\newtheorem{proposition}{Proposition}

\theoremstyle{plain}
\newcommand{\thistheoremname}{}
\newtheorem{genericthm}[theorem]{\thistheoremname}

\usepackage{units}
\usepackage{bbm}
\usepackage{enumerate}
\usepackage{booktabs}
\usepackage{tabularx}
\usepackage{arydshln}
\usepackage{amssymb}
\usepackage{amsmath}
\allowdisplaybreaks[1]
\usepackage{xcolor}
\usepackage{graphicx}
\usepackage{subfigure}
\usepackage{tikz}
\usepackage{pgfplots}
\usepackage{caption}
\usetikzlibrary{decorations.pathreplacing,calligraphy}
\pgfplotsset{width=10cm,compat=1.9}

\newcommand{\pbar}{\overline{p}}
\newcommand{\pund}{\underline{p}}
\newcommand{\Tbar}{\overline{T}}
\newcommand{\Tund}{\underline{T}}

\newcommand{\Bun}{B^{\mathrm{un}}}
\newcommand{\dBun}{\dot{B}^{\mathrm{un}}}
\newcommand{\Nun}{N^{\mathrm{un}}}

\newcommand{\Dr}{\overline\Delta(r)}

\definecolor{green}{HTML}{30AE17}
\usetikzlibrary{graphs}

\title{Dynamic Disclosure with(out) Timestamps\thanks{
We thank Simon Board, Brendan Daley, Bart Lipman, Elliot Lipnowski, George Mailath, Harry Pei, Andy Skrzypacz, Curtis Taylor, and seminar audiences at Boston University, Duke University, Indiana University, 2026 AEA CSWEP Session on Economic Theory, Bonn Christmas Microeconomic Theory Conference, University of Georgia Microeconomic Theory Mini-Conference, IIOC 2026, Pennsylvania Economic Theory Conference, Stanford University SITE sessions on Dynamic Games, Contracts, and Markets and on The Economics of Transparency, and University of Miami Theory Day for helpful comments and conversations. Kolb acknowledges financial support from the Philpott Faculty Fellowship at Indiana University.}}
\author{Aaron Kolb\thanks{Kelley School of Business, Indiana University, \url{kolba@iu.edu}.} \and Beixi
Zhou\thanks{Department of Economics, University of Pittsburgh, \url{beixi.zhou@pitt.edu}.}}
\date{September 18, 2026}

\begin{document}

\maketitle

\begin{abstract}
We study how timestamps affect dynamic disclosure. At a random date, a sender privately obtains hard evidence about an evolving binary state and chooses when to disclose it. With timestamps, the unique equilibrium features immediate good-evidence disclosure and timestamp-dependent bad-evidence delay. Without timestamps, high priors relative to impatience generate a stock of undisclosed good evidence, which is stochastically purged before permanent transparency. High impatience yields immediate good-evidence disclosure, while bad evidence can be arbitrarily delayed and disclosed in bursts. Timestamps prevent pretending old good evidence is fresh and certify bad evidence is old, accelerating good-evidence disclosure and facilitating bad-evidence disclosure.

\bigskip

\noindent Keywords: dynamic disclosure, timestamps, hidden evidence arrival, dynamic signaling

\noindent JEL Classification: C73, D82, D83

\end{abstract}

\newpage
\onehalfspacing

\section{Introduction}
A key feature of many evidentiary documents is the presence or absence of a timestamp: evidence may convey not only information about an underlying state, but also information about when that evidence was generated. Medical test results, bankruptcy filings, audit reports, and legal records such as signed contracts and digital forensic logs often bear verifiable dates and can reveal the age of the evidence. By contrast, many commonly used documents are not timestamped. For example, sensationalist news media sometimes reuse old photos to support current narratives when original dates are difficult or costly for the public to verify. In addition, evidence of demand may be effectively untimestamped when recorded transaction dates do not reveal when the underlying demand arose, such as in sales ``sandbagging,'' where ready-to-close deals are deliberately deferred to a later reporting period.

There has been growing attention in the literature to how the timing of disclosure affects beliefs and incentives in dynamic environments. However, existing work typically abstracts from whether disclosure also reveals when the underlying evidence was generated. It remains an open question whether the presence or absence of timestamps plays a role in strategic disclosure over time.

We address this question by introducing a simple continuous-time sender-receiver model of dynamic disclosure of hard evidence about an evolving state, without commitment. Specifically, there is a hidden state $\theta_t$ that evolves according to Markov switching dynamics between two states, $1$ (``good'') and $0$ (``bad''). A sender privately obtains one piece of hard evidence at some random, exponentially distributed date $\tau$ and can choose to disclose it (or not) at any time. We consider two distinct environments: \textit{timestamps} and \textit{no timestamps.} In the timestamps environment, upon disclosure, the receiver sees both $\theta_\tau$ and $\tau$; in the no-timestamps environment, the receiver only sees $\theta_\tau$ and must infer $\tau$ itself. The sender's flow payoff is the receiver's posterior belief that the state is $1$, and the sender discounts future flow payoffs at rate $r$.  
This simple environment generates rich equilibrium dynamics that differ qualitatively between the timestamps and no-timestamps environments.

In the timestamps environment, we show that there is a unique equilibrium, and good evidence is disclosed immediately while bad evidence is disclosed after a deterministic, timestamp-dependent delay. To understand the intuition, consider first the incentive to disclose good evidence. If the sender possesses good evidence $\theta_\tau=1$ at $\tau$ and discloses it immediately, the receiver's belief jumps to $1$ and decays according to the hidden Markov switching dynamics thereafter. If instead the sender were to delay until some time $t>\tau$, the receiver would retrospectively update about $\theta_\tau$, so her belief path after $t$ would be exactly the same as under immediate disclosure. In equilibrium, the only effect of delay is to lower the receiver's belief during the interval $[\tau,t)$, while evidence is being withheld.\footnote{For some non-equilibrium conjectures, delay could temporarily raise the receiver's belief; our formal argument rules these out.} Timestamps prevent the sender from pretending good evidence is fresh, and thus the sender cannot benefit from delaying disclosure of good evidence.

In addition to this implication for good evidence, timestamps allow the sender to prove that old bad evidence is in fact old. Therefore, when the sender discloses old bad evidence, beliefs update retrospectively downward to $0$, but drift up according to the hidden Markov switching after that. Furthermore, since evidence only arrives once, once bad evidence has been disclosed, the receiver no longer expects good evidence to be disclosed. This shuts down a ``no news is bad news'' effect, thereby raising the trajectory of  beliefs. 
For each bad evidence arrival date, there is a unique date at which the receiver's belief after disclosure of that evidence crosses the no-disclosure belief path from below. This is the optimal time to disclose bad evidence. Although the full no-disclosure path depends on the full disclosure strategy, which in turn depends on the no-disclosure path, creating an infinite-dimensional fixed point problem, we introduce an auxiliary belief that allows us to reduce the problem to a set of one-dimensional fixed point problems.

A property of the unique  equilibrium under timestamps is that all evidence is disclosed at the first instant it is myopically optimal to do so; hence, the equilibrium is independent of the sender's discount rate. Additionally, we show that the Minimum Principle \citep{acharya2011endogenous,guttman2014not} applies in this timestamped setting: at all times, the receiver's equilibrium belief conditional on no disclosure is the most pessimistic belief consistent with Bayes' rule among all feasible disclosure strategies. Neither of these properties generally holds outside of the timestamps environment.

In the no-timestamps environment, we first show that if the prior belief is sufficiently high relative to the sender's discount rate, equilibrium always gives rise to a stock, from the receiver's perspective, of undisclosed evidence of various dates. If the receiver conjectured immediate disclosure of good evidence, a sender with good evidence could disclose it after a brief delay and convince the receiver that the evidence is fresh. The sender would obtain a lower flow payoff during the delay, but a higher payoff at all future times. When the prior belief is above a threshold, this trade-off favors early delay.\footnote{In standard disclosure models, equilibria often feature cutoff strategies, with highest types disclosing immediately. The delayed disclosure of good evidence here does not contradict this theme, because the notion of ``type'' applies not only to whether evidence is good or bad but also its arrival date, and from this perspective, there is no highest type --- there are always future arrival dates for the sender to mimic.} We also show that there must be a finite date $T$ at which the stock of undisclosed evidence becomes empty; furthermore, under a mild refinement on off-path beliefs, it must empty continuously by $T$. In particular, there cannot be an abrupt disclosure of the entire remaining stock at time $T$. Intuitively, the receiver's belief would not jump all the way to $1$ due to the pooling of old evidence, and the sender could simply wait an instant longer to separate from pre-$T$ types. Hence, the stock must be continuously cleared; this rules out, for instance, a bang-bang strategy of delay followed immediately by transparency.

Having established these general results for the no-timestamps environment, we specialize to the case where the bad state is absorbing to cleanly characterize  strategic delay of good evidence disclosure. This condition ensures that bad evidence is never disclosed.\footnote{We show that our construction goes through with a nonabsorbing but sufficiently persistent bad state.} For high priors, we construct an equilibrium consisting of three distinct phases, which we label \textit{stockpiling}, \textit{purging}, and \textit{transparency}. In the stockpiling phase, the sender does not disclose good evidence. In the purging phase, the sender stochastically discloses old evidence at an intensity that depends on calendar time but not on the evidence's arrival date; since the arrival date is payoff irrelevant for the sender, indifference for one type implies indifference for all types. During purging, the receiver's belief over past arrival dates therefore maintains full support. Eventually, the disclosure intensity increases, which causes the stockpile to deplete, shifting the distribution of evidence to more recent dates. This means that disclosure is eventually interpreted more favorably, which sustains the sender's indifference to delay in the first place. By the end of this phase, the disclosure intensity explodes and the stockpile fully (but continuously) empties. At that point, the receiver's belief is sufficiently low that the sender is willing to disclose any good evidence immediately.

The sender's incentive to delay in the purging phase depends not only on the receiver's belief about the current state, but also on the receiver's beliefs about the sender's evidence, since the latter belief determines the receiver's belief after disclosure and the evolution of the receiver's belief conditional on no disclosure. Technically, this presents a challenge in that the receiver's current belief about the state is not the only state variable. Moreover, the receiver must form a belief distribution at each instant about the continuum of past evidence arrival dates. However, with age-independent disclosure rates, we show that these beliefs can be summarized by a few stock variables that track the joint probability that the sender has good or bad evidence and the current state is good or bad, conditional on no disclosure. We describe the rest of the construction in detail in Section \ref{subsec:construction}.

Finally, we analyze disclosure of bad evidence. We focus on an impatient sender, which sustains immediate disclosure of good evidence and streamlines comparison to the timestamps environment. Using a stationary prior for convenience, we construct an equilibrium in which good evidence is disclosed immediately, while bad evidence is disclosed with delay in isolated bursts. Intuitively, a transparency phase cannot exist for bad evidence, since disclosures after the first instant would be interpreted as fresh. The delays between bursts dampen the belief drop at the time of disclosure by causing the pool of undisclosed bad evidence to age, ensuring that disclosure at the next burst is optimal; these delays become arbitrarily large as the discount rate increases.

Our analysis has implications for receiver welfare and policy when we microfound the sender's flow payoff as the result of a population of receivers choosing actions to match the state and minimize a quadratic loss. Timestamps not only convey information directly, strictly benefiting receivers at times when multiple evidence arrival dates would otherwise pool, but also change strategic disclosure incentives.  Without timestamps, there can be delayed disclosure of good evidence, so institutions that require timestamps can help the receiver get information sooner. Moreover, timestamps can facilitate the disclosure of bad evidence (when the bad state is not absorbing) by eliminating equilibria in which bad evidence is disclosed after long delays or never.

The remainder of this section discusses related literature. Section \ref{sec:model} introduces the model. Sections \ref{sec:timestamps} and \ref{sec:no_timestamps} analyze the timestamps and no-timestamps environments, respectively, and present our main results. Section \ref{sec:discussion} discusses receiver welfare, an alternative equilibrium, and modeling assumptions, and concludes. The Appendix and Online Appendix contain proofs.

\subsection*{Related literature}
Our paper contributes to the literature on strategic disclosure of hard information, or evidence, pioneered by \cite{grossman_disclosure_1980}, \cite{grossman_informational_1981}, and \cite{milgrom_good_1981}. They establish an unraveling result, where essentially all sender types disclose  to separate themselves from lower types. \cite{dye1985disclosure} and \cite{jung1988disclosure} show that when there is uncertainty about whether the sender has information, unraveling may fail, as low types with evidence can refrain from disclosing it to mimic high types that do not have evidence. 

More recent literature has studied disclosure in dynamic settings. In \cite{acharya2011endogenous}, a sender chooses when to disclose information about a fixed state of the world when there is exogenous public news about that state. The authors show that exogenous news creates a real options problem for the sender: by waiting to disclose, the sender retains the option to not disclose in case very favorable news arrives.

In \cite{guttman2014not}, the sender can obtain  multiple pieces of evidence sequentially across two stages, but the state (firm value) is constant and independent of the evidence arrival times. The authors show that late disclosures are interpreted more favorably. \cite{antic_pei_2026} study selective disclosure in an overlapping generations model with multiple, noisy pieces of evidence. In their model, evidence does not carry a timestamp. Although the state is fixed, the absence of timestamps is relevant in that disclosed signals convey information about the history of evidence arrival and what signals could have been concealed. Our focus is different from these papers; evidence arrival time and the current state are intimately linked due to hidden evolution of the state. 
This allows us to study optimal disclosure timing even when there is only one piece of evidence.

\cite{gratton2018bombshell} study a sender who privately observes a binary state and chooses when to initiate public information flow, trading off the benefit of signaling confidence through earlier disclosure with the cost of longer-term scrutiny before a deadline. \cite{zhou2025optimal} allows the sender to choose both when to open and when to close a disclosure window; delaying the start allows the sender to privately learn, and the resulting information asymmetry determines the duration of disclosure. \cite{marinovic2016nonews} consider disclosure about an evolving binary Markov state under disclosure costs and litigation risk. In contrast to our model, inferences about evidence age do not arise in these papers.

\cite{kremer2024disclosing} study disclosure of evidence about an evolving state. In their model, evidence is effectively timestamped, but the sender must either disclose it immediately or never.
In our model, delayed disclosure is permitted and is central to the analysis: with timestamps, bad evidence is optimally disclosed only after a delay, while without timestamps, good evidence may be delayed in order to appear fresh. Thus, relative to \cite{kremer2024disclosing}, our paper studies strategic delay of disclosure and its interaction with the presence or absence of timestamps. A further difference relates to the threshold for disclosing evidence. In \cite{kremer2024disclosing}, this threshold lies below the current mean belief of the receiver. In our model, with timestamps, bad evidence is disclosed only once it is old enough that disclosure doesn't lower the receiver's belief about the current state. Furthermore, without timestamps, good evidence is sometimes withheld even though disclosure would immediately raise the receiver’s belief.

A separate literature studies dynamic disclosure or persuasion with commitment.  \cite{ely2017beeps} and \cite{renault2017optimal} study dynamic persuasion with commitment about an evolving state. In   \cite{ely2025feedback} a principal with commitment provides feedback about success to an agent; optimal feedback is bang-bang: first silence, then transparency. In our no-timestamps environment, a bang-bang policy cannot occur in equilibrium. \cite{knoepfle2024dynamic} study social learning where a designer commits to a dynamic disclosure policy about a fixed state to induce experimentation. Bad evidence is disclosed immediately because the absence of bad news is good news. The distinction between timestamps and no timestamps in our model arises precisely from the combination of the sender's lack of commitment power and the evolving state.

\section{Model}\label{sec:model}
The game is played between a sender and a receiver in continuous time over an infinite horizon. There is a hidden, binary underlying state $\theta_t\in \{0,1\}$ that evolves according to Markov switching; when the state is $i$, it switches to $j\neq i$ at constant exponential rate $\lambda_i$. We assume $\lambda_1>0$ and $\lambda_0\geq 0$. Neither the sender nor the receiver observes the realization of the state process, and they share a common prior over the initial state $p_0=\Pr(\theta_0=1)\in (0,1)$. 

At some random date $\tau$, the sender privately obtains hard evidence of the current state $\theta_\tau$. The receiver does not directly observe $\tau$ or $\theta_\tau$. We assume that $\tau$ is distributed exponentially with parameter $\mu$ and is independent of the state process $(\theta_t)_{t\geq 0}$. The sender can receive evidence at most once. The sender then chooses to disclose this evidence at any date $\sigma\geq \tau$ or never disclose it, which we denote by $\sigma=\infty$. Disclosure effectively ends the game, although flow payoffs continue as described below. We consider two environments: \textit{timestamps} and \textit{no timestamps}. In the timestamps environment, when evidence is disclosed, the receiver observes both the realization $\theta_\tau$ and the timestamp $\tau$. In the no-timestamps environment, the receiver only observes the realization $\theta_\tau$ (and must infer $\tau$). Equivalently, evidence age is verifiable in the former environment but not the latter.  

A strategy for the sender is a collection of cumulative distribution functions $H(\cdot;\theta_\tau,\tau,t)$ on $[t,\infty]$, for $\theta_\tau\in \{0,1\}$ and $0\leq \tau\leq t$, where $H(s;\theta_\tau,\tau,t)$ is the probability that a sender who obtained evidence $\theta_\tau$ at time $\tau$ and has not disclosed it before $t$ plans to disclose by $s$. Strategies must satisfy the usual internal consistency requirement that $H(s;x,\tau,t')=\frac{H(s;x,\tau,t)-H(t'-;x,\tau,t)}{1-H(t'-;x,\tau,t)}$ whenever $s\geq t'>t$ and $H(t'-;x,\tau,t)<1$, where $t'-$ denotes the left limit at $t'$. We say that a strategy is \textit{age-independent} if for all $\tau,\tau',t$ with $0\leq \tau,\tau' \leq t$ and $x\in \{0,1\}$, $H(\cdot;x,\tau,t)=H(\cdot;x,\tau',t)$; that is, the continuation strategy at each time is independent of evidence age. We primarily focus on age-independent equilibria in the no-timestamps environment, where the timestamp is payoff-irrelevant to the sender.

A belief system for the receiver specifies at each time $t$, given the public history, a joint distribution over the current state $\theta_t$ and the sender's evidence history $E_t$, where $E_t=\emptyset$ denotes no evidence has arrived and $E_t=(\tau,x)\in [0,t]\times\{0,1\}$ if evidence arrived at time $\tau$ and has type  $x=\theta_\tau$.  We denote by $\pi_t$ the receiver's posterior belief at time $t$ that $\theta_t=1$; the process $\pi=(\pi_t)_{t\geq 0}$ is $[0,1]$-valued and c\`adl\`ag.\footnote{Note that $\pi_0=p_0$, but we use $\pi$ to distinguish the realized belief process from $p$, which we will define as the belief path conditional on no disclosure and which plays a central role in the analysis.} The sender's objective is to maximize
\begin{align*}
    \mathbb{E}\left[\int_0^\infty e^{-rt}\pi_t\,dt\right]
\end{align*}
where $r>0$ is the discount rate. Note that our receiver is passive, simply performing Bayesian updating; in Section \ref{sec:discussion}, we provide a microfoundation and discuss receiver welfare.

A perfect Bayesian equilibrium (henceforth, equilibrium) is a strategy and belief system such that the sender's continuation strategy $H(\cdot;x,\tau,t)$ is sequentially rational for all $(x,\tau,t)$ with $t\geq \tau$, and beliefs are consistent with Bayes' rule whenever possible.
Since evidence arrives stochastically, nondisclosure is always on path. Disclosure can be off-path. In the timestamps case, the receiver's belief after off-path disclosure is already determined by the evidence type and its timestamp. In the no-timestamps case, there is flexibility.\footnote{Recall that the sender does not observe the state evolution, and therefore the receiver's off-path beliefs are pinned down by her beliefs about the evidence arrival date; with timestamps, the latter is known.}

\paragraph{Belief evolution.} Given the continuous-time Markov chain, two belief processes will be useful throughout. The first is what we refer to as the no-information belief:
$$\phi_t := \Pr(\theta_t=1)=p^* + (p_0 - p^*) e^{-(\lambda_0 + \lambda_1) t},$$
where $p^*:=\frac{\lambda_0}{\lambda_0+\lambda_1}$ is the long-run steady-state belief. 
The no-information belief $\phi_t$ converges monotonically to $p^*$ from above (below) when $p_0 > p^*$ ($p_0 < p^*$). Second, adopting the convention that $G$ (``good'') refers to state $1$ and $B$ (``bad'') to state $0$, define $q^{G,s}_t := \Pr(\theta_t = 1 \mid \theta_s = 1)$ and $q^{B,s}_t := \Pr(\theta_t = 1 \mid \theta_s = 0)$ as the probabilities that $\theta_t = 1$ conditional on the state at $s \leq t$ being good or bad, respectively. Then for all $t \geq s \ge 0$ and $i\in\{G,B\}$, beliefs evolve according to
$$\dot{q}^{i,s}_t = \lambda_0 (1 - q^{i,s}_t) - \lambda_1 q^{i,s}_t$$
with initial conditions $q^{G,s}_s = 1$ and $q^{B,s}_s = 0$. These admit closed-form solutions 
\begin{align}
q^{G,s}_t = p^* + (1-p^*) e^{-(\lambda_0 + \lambda_1)(t - s)} \quad \text{and} \quad q^{B,s}_t = p^*- p^* e^{-(\lambda_0 + \lambda_1)(t - s)}.\label{qG and qB}
\end{align}
As $t\to \infty$, $q^{G,s}_t$ converges to $p^*$ from above and $q^{B,s}_t$ converges to $p^*$ from below.

Two particular cases are useful for our analysis. First, when $\lambda_0=0$, the bad state is absorbing. This corresponds to settings where, for example, the state is an attribute such as productivity or demand that can permanently deteriorate. In this case, $\phi_t$ converges to $p^*=0$ and $q_t^{B,s}=0$ for all $s$ and all $t\ge s$. An implication is that bad evidence disclosure is never optimal, with or without timestamps. This case allows us to isolate the role of timestamps in strategic disclosure of good evidence. Second, for any $\lambda_0$, when $p_0=p^*$, which we refer to as the stationary prior case, the no-information belief is constant. This shuts down the effect of the prior and allows us to isolate dynamics arising from stochastic evidence arrival and strategic disclosure alone. 
\section{Timestamps}\label{sec:timestamps}

\subsection{Equilibrium characterization}
In this section, we show that there exists a unique equilibrium under timestamps: good evidence is disclosed immediately, while bad evidence is disclosed after a deterministic, timestamp-dependent delay. This result is formalized in Theorem \ref{thm:timestamp_eq}.

For intuition, suppose the sender obtains good evidence at time $s$. The sender could choose to disclose it immediately and raise the receiver's belief immediately to $1$, after which it would decay toward the steady state due to the hidden Markov switching. Suppose instead the sender chooses to disclose at time $t>s$. Importantly, since the evidence is timestamped, the receiver learns that the state was $1$ at $s$, and the resulting belief is the same for all $u\geq t$ as it would have been after disclosure at time $s$. Moreover, the sender would obtain a lower flow payoff on $[s,t)$: under the conjectured equilibrium strategy, no news is weakly bad news, so the receiver's nondisclosure belief would be weakly lower than the belief given disclosure at time $s$. A subtlety is that immediate disclosure of good evidence is not a best response for \textit{all} conjectures of the sender's strategy: there exist conjectures for which the ranking of beliefs on $[s,t)$ need not hold. However, these conjectures would involve disclosure of bad evidence at a time when it would be suboptimal; the proof of Theorem \ref{thm:timestamp_eq} rules them out.

To understand why bad evidence must be disclosed eventually when the bad state is not absorbing, consider an equilibrium with immediate disclosure of good evidence but \textit{no} disclosure of bad evidence. Then nondisclosure reveals that either the sender has no evidence or has bad evidence: no news is bad news relative to the no-information belief path. A sender with old bad evidence would eventually prefer to disclose that evidence to separate himself from types with more recent bad evidence and to shut down the no news is bad news effect; in doing so, he switches the receiver to a higher belief path. 

This logic generalizes: as long as the bad state is not absorbing, for each timestamp $s$, the sender eventually finds it worthwhile to disclose bad evidence at the time $t$ when
\begin{align}
q^{B,s}_t=p_t,\label{eq:bad_ev_crossing_cond}
\end{align}
where the right-hand side of \eqref{eq:bad_ev_crossing_cond} is the equilibrium no-disclosure belief. As we show, for all $s$, $q^{B,s}_t$ crosses $p_t$ from below and only once, so disclosing at the crossing time is optimal independent of the sender's discount rate; the sender does not face an intertemporal trade-off.

Since the full path $(p_t)_{t\geq 0}$ is an endogenous object that depends on the exact disclosure time for each piece of evidence, the equilibrium characterization a priori involves an infinite-dimensional fixed point. However, by constructing a useful auxiliary belief, the problem reduces to a continuum of independent one-dimensional equations. To that end, we prove and exploit the following strict monotonicity property: at the instant each piece of bad evidence is disclosed, all earlier bad evidence will already have been disclosed, and no later bad evidence will have been disclosed. For each fixed $s$ and all $t\geq s$, define 
\begin{align}\label{eq:p_dagger}
    p^{\dagger}_{t,s}:=\frac{e^{-\mu t}\phi_{t}+\int_{s}^{t}\mu e^{-\mu\tau}(1-\phi_{\tau})q^{B,\tau}_td\tau}{e^{-\mu t}+\int_{s}^{t}\mu e^{-\mu\tau}(1-\phi_{\tau})d\tau}
\end{align}
as the belief at time $t$, conditional on no disclosure, under a conjecture that all good evidence is disclosed immediately, that bad evidence with $\tau\in [0,s]$ would be disclosed by time $t$, and that no bad evidence with timestamp $\tau>s$ would be disclosed by time $t$.

For each fixed $s$, $(p^{\dagger}_{t,s})_{t\geq s}$ and $(p_t)_{t\geq s}$ need not coincide since they are based on different conjectures. However, the key property is that whenever the optimality condition \eqref{eq:bad_ev_crossing_cond} holds, they do coincide. Hence, instead of characterizing the full path $(p_t)_{t\geq 0}$ for an arbitrary conjecture before solving \eqref{eq:bad_ev_crossing_cond} for each $s$, we can solve the equation (in $t$)
\begin{align}
q^{B,s}_t=p^{\dagger}_{t,s}.\label{eq:aux_belief_crossing}
\end{align}
The proof of our result establishes a unique solution to \eqref{eq:aux_belief_crossing} for all $s\geq 0$, as long as $\lambda_0>0$.

Overall, timestamps prevent the sender from pretending old good evidence is fresh, and allow the sender to prove old bad evidence is old.

\begin{theorem}\label{thm:timestamp_eq}
    There exists a unique equilibrium. If the sender ever possesses good evidence (on or off path), he discloses it immediately. If $\lambda_0=0$, bad evidence is never disclosed. For all $\lambda_0>0$, if the sender obtains bad evidence at time $s$, he discloses it at time $s+a^*(s)$, where for all $s\ge0$, $a^{*}(s)\in(0,\infty)$
 is the unique positive solution to 
\begin{align*}
 q_{s+a^{*}(s)}^{B,s}=p_{s+a^{*}(s),s}^{\dagger}.
\end{align*}
If the sender still possesses this evidence after $s+a^*(s)$ (off path), he discloses it immediately. 
\end{theorem}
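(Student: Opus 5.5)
The proof proceeds in three steps: (i) structural properties that any equilibrium must have, (ii) reduction of the fixed point to the one-dimensional equations \eqref{eq:aux_belief_crossing}, and (iii) existence and uniqueness of their solutions. Throughout, let $p_t$ be the no-disclosure belief under an arbitrary candidate equilibrium.

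\textbf{Step 1: payoffs after disclosure.} With timestamps, the belief after disclosing $(x,\tau)$ at $t$ is $q^{G,\tau}_u$ or $q^{B,\tau}_u$ for all $u\ge t$, whatever $t$ is. So the only payoff effect of a disclosure date $t\ge\tau$ is that the sender receives $p_u$ rather than $q^{x,\tau}_u$ on $[\tau,t)$. Hence the sender's problem is pointwise in time. He wants to be "disclosed" at $u$ exactly when $q^{x,\tau}_u\ge p_u$. Because disclosure is irreversible, the optimal date is the first time after which $q^{x,\tau}-p$ is nonnegative forever, weighed in discounted terms. This is why the discount rate drops out once single crossing is established.

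\textbf{Step 2: good evidence, and $\lambda_0=0$.} I first show $p_u\le q^{G,\tau}_u$ for all $u\ge\tau$ in any equilibrium. The belief $p_u$ is a mixture over three groups: no evidence (belief $\phi_u$), undisclosed good evidence of age $\ge0$ (belief $q^{G,\cdot}_u$), and undisclosed bad evidence (belief $q^{B,\cdot}_u\le p^*$).

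The obstacle is the conjecture in which bad evidence is disclosed early, which can push $p$ above $\phi$. I handle it by contradiction. Let $u$ be a time with $p_u>q^{G,\tau}_u$. Then $p_u>\phi_u$ for the relevant components, so some bad evidence must have been disclosed while $q^{B,\tau'}$ lay strictly below $p$. This is suboptimal, since waiting keeps $p$ in the meantime.

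To make this rigorous, I take the first time $p$ exceeds the maximum of $\phi$ and the good-evidence beliefs. I then show that just before that time some bad type was disclosing while strictly worse off, a contradiction.

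Given this bound, immediate good-evidence disclosure is optimal, and strictly so, since $p<1=q^{G,\tau}_\tau$ near $\tau$ whenever evidence is possible. When $\lambda_0=0$, $q^{B,\cdot}\equiv0<p_u$ for all $u$, so bad evidence is never disclosed. The off-path clauses follow from the same pointwise comparison.

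\textbf{Step 3: bad evidence with $\lambda_0>0$.} Given immediate good disclosure, I show monotonicity. Since $q^{B,s}_t$ is decreasing in $s$, if type $s$ discloses at $t$ then every $s'<s$ has $q^{B,s'}_t>q^{B,s}_t\ge p_t$, so type $s'$ has strictly disclosed before $t$. Disclosure times are therefore strictly increasing in $s$, and at type $s$'s disclosure time the no-disclosure pool is exactly the one described in \eqref{eq:p_dagger}, giving $p_t=p^\dagger_{t,s}$.

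Optimality then requires $q^{B,s}_t=p^\dagger_{t,s}$, an equation in $a=t-s$ alone. For existence, I use the endpoints. At $a=0$, $q^{B,s}_s=0<p^\dagger_{s,s}$. As $a\to\infty$, $q^{B,s}\to p^*$, while $p^\dagger_{t,s}$ is eventually dominated by the mass of bad evidence older than $s$, so $p^\dagger_{t,s}<q^{B,s}_t$ for large $a$. To see this, compare the bad-evidence terms $q^{B,\tau}_t$ with $\tau>s$, which are below $q^{B,s}_t$, against the vanishing weight $e^{-\mu t}$.

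For uniqueness I show single crossing. Differentiating, $\dot p^\dagger_{t,s}$ equals the Markov drift plus a "no news is bad news" term. I verify that at any crossing the difference $q^{B,s}-p^\dagger$ has strictly positive derivative. Both curves obey the same linear drift toward $p^*$, and the extra term in $\dot p^\dagger$ is negative because arriving good evidence is removed while arriving bad evidence enters with belief $0\le q^{B,s}_t$.

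This single-crossing step is where I expect the main computational work. Once it holds, disclosing at the crossing is optimal for every $r$ and is the only optimum, so the constructed profile is an equilibrium and the only one.
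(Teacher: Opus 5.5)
Your architecture (immediate good disclosure, reduction to $q^{B,s}_t=p^\dagger_{t,s}$, existence by the intermediate value theorem, uniqueness by signing the derivative of $q^{B,s}-p^\dagger_{\cdot,s}$ at a crossing) matches the paper's, but there is a genuine gap.

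You establish single crossing only between $q^{B,s}$ and the \emph{auxiliary} path $p^\dagger_{\cdot,s}$. The sender's problem, however, is posed against the \emph{true} no-disclosure path $p$, and $p$ coincides with $p^\dagger_{\cdot,s}$ only at the instant $t^\dagger(s)$. Before that instant, the true pool still contains bad evidence with timestamps below $s$; after it, the pool has already lost bad evidence with timestamps above $s$. So neither ``disclosing at the crossing is optimal for every $r$'' nor the off-path clause follows from what you prove. For existence you need $t^\dagger$ to be strictly increasing and continuous. Then the candidate profile really generates $p_t=p^\dagger_{t,\tau(t)}=q^{B,\tau(t)}_t$ for $t\ge t^\dagger(0)$, where $\tau(\cdot)$ is the inverse of $t^\dagger$. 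Monotonicity of $q^{B,\tau}_t$ in $\tau$ then gives $p_t>q^{B,s}_t$ before $t^\dagger(s)$ and $p_t<q^{B,s}_t$ after. The paper gets this from the implicit function theorem and the fact that $\partial_s p^\dagger_{t,s}=0$ at $t=t^\dagger(s)$. Alternatively, at $t=t^\dagger(s_1)$, removing the types in $(s_1,s_2]$ raises the pool mean, because their beliefs lie below $p^\dagger_{t,s_1}=q^{B,s_1}_t$. Hence $p^\dagger_{t,s_2}>q^{B,s_1}_t>q^{B,s_2}_t$, so type $s_2$ has not yet crossed.

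For uniqueness, your monotonicity step is invalid as stated. The fact $q^{B,s'}_t>p_t$ at the single instant $t$ says nothing about whether type $s'$ preferred to disclose earlier or later. An integral comparison does give weak monotonicity: $\int_t^D e^{-ru}(q^{B,s'}_u-p_u)\,du>\int_t^D e^{-ru}(q^{B,s}_u-p_u)\,du\ge 0$ for all $D>t$. More importantly, you never rule out atoms. An interval of bad types could pool at one date, so that $p_t=p^\dagger_{t,s_2}$ for the top pooled type rather than $p^\dagger_{t,s}$. Also, at a disclosure date optimality only gives $q^{B,s}_t\ge p_t$; equality needs $p$ not to jump down there. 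The paper supplies the missing pieces:
\begin{itemize}
\item continuity of $p$, since an atom would force $p_t<p_{t-}$, so some pooled type had $q^{B,\tau}>p$ just before $t$ and should have disclosed earlier;
\item a single-crossing lemma for $q^{B,\tau}$ against the true $p$, which yields deterministic, increasing dates $D(\tau)$;
\item an argument that no type withholds forever ($\bar\tau=\infty$), which your reasoning silently assumes.
\end{itemize}

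Your Step 2 has a similar hole. By iterated expectations, $p_u>\phi_u$ only implies that some bad evidence was disclosed at some point, possibly while $p$ was low. The paper's first-passage argument at $t^G=\inf\{t:p_t\ge q^{G,0}_t\}$ separates two cases. If $p>q^{B,0}$ just before $t^G$, there is no bad disclosure there, so $p$ is dominated by the Markov flow and stays below $q^{G,0}$. Otherwise $p$ jumps at $t^G$ from below $q^{B,0}$, which requires an atom of bad disclosure at $t^G$ itself. That atom is suboptimal because $p_{t^G}\ge q^{G,0}_{t^G}>q^{B,\tau}_{t^G}$.
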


\begin{figure}[ht!]
\begin{centering}
\includegraphics[scale=0.95]{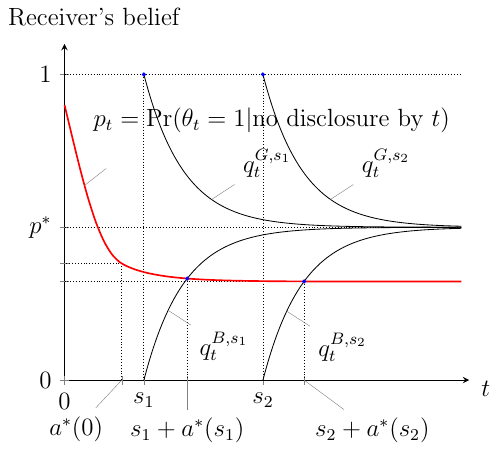}
\includegraphics[scale=0.95]{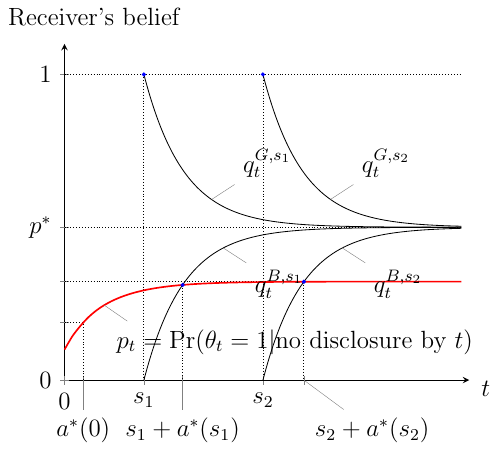}
\par\end{centering}

\caption{Equilibrium belief evolution for priors $p_{0}=0.9$ (left panel) and
$p_{0}=0.1$ (right panel), and parameters $\lambda_{0}=1,\lambda_{1}=1,\mu=0.5$. The equilibrium is independent of $r$.}\label{fig:equilibrium beliefs}
\end{figure}

Figure \ref{fig:equilibrium beliefs} illustrates the equilibrium belief dynamics under a high prior (left panel) and a low prior (right panel). If good evidence arrives at time $s$, where $s=s_1$ or $s_2$, the sender discloses it immediately and the receiver's belief jumps to $1$ and then follows the blue path. If bad evidence arrives at time $s$, the sender waits until the red post-disclosure belief path first reaches the black no-disclosure path. This crossing date is $s+a^*(s)$. Note that the equilibrium no-disclosure belief $p_t$ is piecewise. For $t<a^*(0)$, no bad evidence is disclosed, hence $p_t=p^\dagger_{t,0}.$
For $t\geq a^*(0)$, bad evidence with earlier timestamps has already been disclosed, whereas bad evidence with later timestamps is still being withheld. Hence, $p_t=p^{\dagger}_{s+a^*(s),s}$ where $t=s+a^*(s)$. The belief $p_t$ converges to a level strictly below $p^*$, the steady state of the Markov chain, a topic we explore in the next subsection.

\subsection{Equilibrium properties}\label{subsec:eqm properties}

We establish two sets of equilibrium properties. First, we examine how bad evidence disclosure timing varies with the evidence’s timestamp and how it shapes equilibrium beliefs.

\begin{proposition}\label{prop:timestamp properties}
 Fix $\lambda_0>0$. For all $s\ge0$, (i) $s+a^{*}(s)$
is increasing in $s$;
(ii) $a^{*}(s)$ is monotone in $s$: it is strictly increasing in $s$ if $p_0<p^*$, strictly decreasing if $p_0>p^*$, and constant if $p_0=p^*$; and (iii)  $\lim_{s\to\infty}a^{*}(s)=\alpha^{*}\in (0,\infty)$.
\end{proposition}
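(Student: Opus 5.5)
The plan is to rewrite the defining equation for $a^*(s)$ as a root condition in the age variable $a$, in which the timestamp $s$ enters only through the no-information path $\phi$. All three parts then follow from how that root moves with $s$.

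\textbf{Reduction.} Let $\Lambda:=\lambda_0+\lambda_1$ and $g(a):=q^{B,s}_{s+a}=p^*(1-e^{-\Lambda a})$, which depends only on the age $a$. Clear the (positive) denominator in \eqref{eq:p_dagger}, set $t=s+a$ and $\tau=s+u$, and factor out $e^{-\mu s}$. The condition $q^{B,s}_{s+a}=p^\dagger_{s+a,s}$ is then equivalent to $\Psi(a,s)=0$, where
\begin{align*}
\Psi(a,s):=e^{-\mu a}\bigl(\phi_{s+a}-g(a)\bigr)+\int_0^a \mu e^{-\mu u}\bigl(1-\phi_{s+u}\bigr)\bigl(g(a-u)-g(a)\bigr)\,du .
\end{align*}
By Theorem \ref{thm:timestamp_eq}, for each $s$ the function $\Psi(\cdot,s)$ has a unique positive root $a^*(s)$. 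Since $\Psi(0,s)=\phi_s>0$ and $\Psi$ is continuous, $\Psi(\cdot,s)>0$ on $(0,a^*(s))$. I also need $\Psi(\cdot,s)<0$ on $(a^*(s),\infty)$, i.e.\ a genuine sign change rather than a tangency. I would get this from the single-crossing-from-below property of $q^{B,s}$ versus $p^\dagger_{\cdot,s}$ established in the proof of Theorem \ref{thm:timestamp_eq}. Failing that, I would use the large-$a$ limit: $\Psi(a,s)\to-\int_0^\infty\mu e^{-\mu u}(1-\phi_{s+u})p^*(1-e^{-\Lambda u})\,du<0$. The limit is negative, so $\Psi$ is eventually negative, and uniqueness of the root forces the sign change. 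Securing this sign change cleanly is the step I expect to be the main technical obstacle; everything else is comparison.

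\textbf{Part (ii).} Differentiate $\Psi$ in $s$:
\begin{align*}
\partial_s\Psi(a,s)=e^{-\mu a}\dot\phi_{s+a}-\int_0^a\mu e^{-\mu u}\dot\phi_{s+u}\bigl(g(a-u)-g(a)\bigr)\,du .
\end{align*}
Here $\dot\phi_t=-\Lambda(p_0-p^*)e^{-\Lambda t}$ has the constant sign of $p^*-p_0$, and $g(a-u)-g(a)<0$ for $u\in(0,a]$. So both terms carry the sign of $p^*-p_0$, and $\Psi(a,\cdot)$ is strictly increasing, strictly decreasing, or constant according as $p_0<p^*$, $p_0>p^*$, or $p_0=p^*$.

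Take $s'>s$ with $p_0<p^*$. Then $\Psi(a^*(s),s')>\Psi(a^*(s),s)=0$. By the sign pattern above, this forces $a^*(s')>a^*(s)$. The case $p_0>p^*$ is symmetric. When $p_0=p^*$, $\Psi$ does not depend on $s$ at all, so $a^*$ is constant. This ordering argument avoids needing $\partial_a\Psi\neq0$ at the root.

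\textbf{Part (i).} I would work with $F(t,s):=p^\dagger_{t,s}-q^{B,s}_t$ at fixed calendar time $t$. Differentiating \eqref{eq:p_dagger} in $s$ gives
\begin{align*}
\partial_s p^\dagger_{t,s}=-\mu e^{-\mu s}(1-\phi_s)\bigl(q^{B,s}_t-p^\dagger_{t,s}\bigr)\big/D,
\end{align*}
where $D$ is the denominator in \eqref{eq:p_dagger}. This vanishes at the root. Meanwhile $\partial_s q^{B,s}_t=-p^*\Lambda e^{-\Lambda(t-s)}<0$. So at $t^*(s)=s+a^*(s)$ we have $\partial_sF>0$: moving the timestamp later lowers $q^B$ while leaving $p^\dagger$ unchanged to first order.

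To turn this into a root comparison, I would use the sign pattern of $F(\cdot,s')$ as I did for $\Psi$. It is positive before its root and negative after, since $F$ and $\Psi$ differ by a positive factor. I would evaluate $F(t^*(s),s')$ for $s'$ slightly above $s$; alternatively I would use the implicit function theorem, with $\partial_tF<0$ at the root supplied by the crossing-from-below property. Either route gives $t^*(s')>t^*(s)$. Local strict increase at every $s$ then gives global strict increase.

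\textbf{Part (iii).} As $s\to\infty$, $\phi_{s+u}\to p^*$ uniformly in $u\ge0$. Hence $\Psi(\cdot,s)\to\Psi_\infty$ uniformly on compacts, where $\Psi_\infty$ is the $\Psi$ function of the stationary-prior case. By Theorem \ref{thm:timestamp_eq} applied with $p_0=p^*$, $\Psi_\infty$ has a unique positive root $\alpha^*\in(0,\infty)$, with $\Psi_\infty>0$ before it and $\Psi_\infty<0$ after it, by the same sign argument. The convergence of $\Psi(a,\cdot)$ to $\Psi_\infty$ is monotone, by part (ii). So for $p_0<p^*$, $\Psi(a,s)\le\Psi_\infty(a)$, which gives $a^*(s)\le\alpha^*$, and $a^*$ is increasing. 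Being increasing and bounded, $a^*(s)$ converges to a limit in $[a^*(0),\alpha^*]$. By continuity that limit is a positive root of $\Psi_\infty$, hence equals $\alpha^*$. The case $p_0>p^*$ is symmetric, with $a^*$ decreasing and $a^*(s)\ge\alpha^*>0$. The case $p_0=p^*$ is trivial.
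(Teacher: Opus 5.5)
Your proof is correct. Its skeleton matches the paper's. Part (i) is the same implicit-function computation ($\partial_s p^\dagger_{t,s}=0$ at the root and $\partial_s q^{B,s}_t<0$). Parts (ii)--(iii) rest on the same time-homogeneity observation: in age coordinates, $s$ enters only through $\phi$, whose derivative has the sign of $p^*-p_0$.

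The difference is in how you organize (ii) and (iii), and it is useful. The paper differentiates the root: it shows $a^{*\prime}(s)$ has the sign of $\frac{d}{ds}p^\dagger_{s+a,s}$ at $a=a^*(s)$, and that numerator is exactly your $\partial_s\Psi(a^*(s),s)$. By clearing the denominator, you sign $\partial_s\Psi(a,s)$ at every $a$, not only at the root, and then order the roots using the sign pattern of $\Psi(\cdot,s')$.

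That global ordering gives (iii) almost for free. $\Psi(\cdot,s)<\Psi_\infty$ (respectively $>$) yields the uniform bound $a^*(s)<\alpha^*$ for $p_0<p^*$ (respectively $a^*(s)>\alpha^*$ for $p_0>p^*$). This replaces the paper's separate step ruling out $a^*(s)\to\infty$ when $p_0<p^*$, which picks $K>\alpha^*$ and uses the crossing direction at $K$.

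The same trick would make (i) global without appealing to continuity of $t^*$. Your ``local increase at every $s$ gives global increase'' step implicitly needs that continuity, which the implicit function theorem route does supply. Instead, at fixed $t$ the cleared function $D\,(p^\dagger_{t,s}-q^{B,s}_t)$, with $D$ the denominator of $p^\dagger_{t,s}$, has $s$-derivative $-D\,\partial_s q^{B,s}_t>0$ for every $s\le t$, not just at the root.

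One correction: your fallback for the sign change is wrong. We have $g(a-u)-g(a)=-p^*e^{-\Lambda(a-u)}(1-e^{-\Lambda u})\to 0$ as $a\to\infty$ for each fixed $u$. Dominated convergence then gives $\Psi(a,s)\to 0$, not a negative constant. Eventual negativity does hold, but it only shows up after multiplying by $e^{\mu a}$, which is the normalization the paper uses in its existence argument. After that rescaling, the contribution of recently arrived bad evidence ($u$ near $a$) stays bounded away from zero, while $\phi_{s+a}-g(a)\to 0$.

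You do not need the fallback, however. The proof of Theorem~\ref{thm:timestamp_eq} shows that at every root, $q^{B,s}$ crosses $p^\dagger_{\cdot,s}$ strictly from below. Together with $\Psi(0,s)=\phi_s>0$, this gives the strict sign pattern immediately, so that step is not a real obstacle.
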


\begin{figure}[ht!]
\begin{centering}
\includegraphics[scale=1]{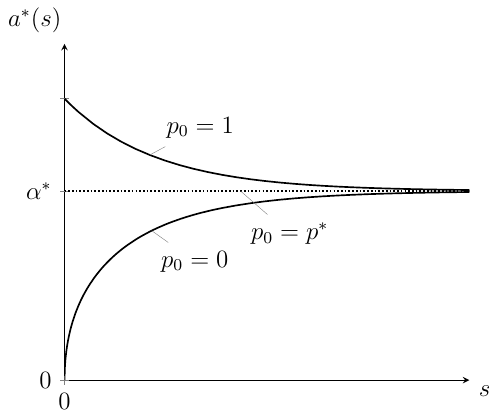}
\par\end{centering}
\caption{Optimal delay $a^*(s)$ as a function of bad evidence arrival time $s$ for priors $p_{0}=0,1,p^*$ and parameters $\lambda_{0}=1,\lambda_{1}=1,\mu=0.5$.}\label{fig:optimal delay}
\end{figure}
The fact that the optimal disclosure time $s+a^*(s)$ is increasing in the timestamp $s$ is due to the crossing condition \eqref{eq:bad_ev_crossing_cond} and the fact that older evidence is less damaging than new; hence, older evidence becomes worth disclosing first.

To understand why $a^*(s)$ is monotone, fix a delay $a$ and compare the disclosure condition at different calendar times. If bad evidence timestamped at $s$ is disclosed at $s+a$, its disclosure value depends only on the delay $a$ and not the calendar time because the Markov process is time-homogeneous. Calendar time therefore affects the disclosure condition only through the value of nondisclosure. For the marginal bad-evidence type, no disclosure pools no evidence with bad evidence that arrived within the last $a$ units of time. When $p_0<p^*$, the unconditional belief is increasing, so no evidence is better news at later dates, and evidence arriving during the later interval is less likely to be bad. No disclosure is therefore more valuable at later dates, so bad evidence must age longer for the belief upon disclosing it to catch up, so $a^*(s)$ is increasing. When $p_0>p^*$, the reverse holds. Figure \ref{fig:optimal delay} illustrates.

The behavior of $a^*(s)$ in turn determines the behavior of the equilibrium no-disclosure belief once bad evidence disclosure begins. As Figures \ref{fig:equilibrium beliefs} and  \ref{fig:optimal delay} illustrate, for $t\geq a^*(0)$, the no-disclosure belief moves in the same direction as $a^*(s)$. At the optimal disclosure date, the no-disclosure belief equals the posterior induced by the marginal bad evidence. When $p_0<p^*$, $a^*(s)$ is increasing, so the marginal bad evidence disclosed at later dates is older and therefore induces a higher post-disclosure belief. When $p_0>p^*$, $a^*(s)$ is decreasing, the reverse holds.\footnote{The claim that $p_t$ is monotone applies only after bad evidence disclosure begins, at $t\geq a^*(0)$; before $a^*(0)$, it may not be monotone.} In the long run, as $a^*(s)$ converges to $\alpha^*$, the age of the marginal bad evidence converges to $\alpha^*$, so the no-disclosure belief converges to the belief induced by bad evidence of exactly that age; since $\alpha^*<\infty$, this lies strictly below $p^*$. Intuitively, at large $t$, there is still a nontrivial probability that the sender received bad evidence within the last $\alpha^*$ units of time and has not yet disclosed it. The following corollary formalizes these results.

\begin{cor}\label{cor:no-disclosure-belief}
    Fix $\lambda_0>0$. At each $t\geq a^*(0)$, the equilibrium no-disclosure belief $p_t$ is strictly increasing in $t$ if $p_0<p^*$, strictly decreasing in $t$ if $p_0>p^*$, and constant in $t$ if $p_0=p^*$. Moreover, $\lim_{t\to\infty}p_t=p^*\left(1-e^{-(\lambda_0+\lambda_1)\alpha^*}\right)<p^*.$
\end{cor}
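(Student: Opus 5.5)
The argument rests on one identity from Theorem \ref{thm:timestamp_eq}. For $t\ge a^*(0)$, Proposition \ref{prop:timestamp properties}(i) gives a unique $s=s(t)\ge 0$ with $t=s(t)+a^*(s(t))$. Whenever bad evidence with timestamp $s$ is disclosed at $t$, the no-disclosure belief coincides with the auxiliary belief and with the post-disclosure belief of the marginal type:
\begin{align*}
p_t=p^{\dagger}_{t,s(t)}=q^{B,s(t)}_{t}=p^*\left(1-e^{-(\lambda_0+\lambda_1)a^*(s(t))}\right).
\end{align*}
The last equality uses the closed form \eqref{qG and qB}, which depends only on the age $a^*(s(t))$. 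Define $g(a):=p^*(1-e^{-(\lambda_0+\lambda_1)a})$. Since $\lambda_0>0$ we have $p^*>0$, so $g$ is strictly increasing on $[0,\infty)$. This gives $p_t=g(a^*(s(t)))$, and both claims of the corollary reduce to the properties of $a^*$ in Proposition \ref{prop:timestamp properties}.

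For monotonicity, I will show that $s(t)$ is strictly increasing in $t$. Part (i) of the proposition says $s\mapsto s+a^*(s)$ is increasing. I read this as strictly increasing, which is also forced by the strict monotonicity property used to build the equilibrium: earlier bad evidence is disclosed strictly before later bad evidence. Continuity of this map follows from the implicit definition of $a^*$ through the crossing equation \eqref{eq:aux_belief_crossing}, because both $q^{B,s}_t$ and $p^\dagger_{t,s}$ are continuous in $(s,t)$ and the crossing is from below, hence transversal. So the map is a continuous strictly increasing bijection from $[0,\infty)$ onto $[a^*(0),\infty)$. It is unbounded because $s+a^*(s)\ge s$, and this justifies that every $t\ge a^*(0)$ has a preimage $s(t)$, which is strictly increasing. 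Then $p_t=g(a^*(s(t)))$ is a composition of strictly monotone maps. By part (ii), $a^*$ is strictly increasing when $p_0<p^*$, so $p_t$ is strictly increasing; it is strictly decreasing when $p_0>p^*$; and it is constant when $p_0=p^*$, since then $a^*$ is constant.

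For the limit, $s(t)\to\infty$ as $t\to\infty$ because $a^*$ is finite-valued and $s(t)=t-a^*(s(t))$; if $s(t)$ stayed bounded, then $t$ would stay bounded. By part (iii), $a^*(s(t))\to\alpha^*$, and continuity of $g$ yields $p_t\to g(\alpha^*)=p^*(1-e^{-(\lambda_0+\lambda_1)\alpha^*})$. This is strictly less than $p^*$ because $\alpha^*<\infty$ and $p^*>0$.

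The step I expect to be the main obstacle is justifying that $s(t)$ is well defined and onto, that is, that $s+a^*(s)$ is strictly increasing and continuous. If part (i) is only weakly monotone, I would argue directly from the single-crossing property in the proof of Theorem \ref{thm:timestamp_eq}. Since $q^{B,s}_t$ is strictly decreasing in $s$ for fixed $t$, two distinct timestamps cannot satisfy the crossing condition $q^{B,s}_t=p_t$ at the same $t$. This gives strictness. Continuity can then be obtained from the implicit function theorem applied to \eqref{eq:aux_belief_crossing}.
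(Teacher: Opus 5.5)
Your proposal is correct and follows essentially the same route as the paper: invert the strictly increasing, continuous map $s\mapsto s+a^*(s)$ onto $[a^*(0),\infty)$, write $p_t=p^{\dagger}_{t,s(t)}=q^{B,s(t)}_t=p^*\bigl(1-e^{-(\lambda_0+\lambda_1)a^*(s(t))}\bigr)$, and read off monotonicity and the limit from parts (ii) and (iii) of Proposition \ref{prop:timestamp properties}. Your fallback argument is not needed, since the paper takes strictness and continuity of $s+a^*(s)$ directly from Lemma \ref{lem: belief intersect}, where $t^\dagger$ is shown to be $C^1$ and strictly increasing; and $s(t)\to\infty$ holds simply because $s(t)$ is the inverse of a bijection onto $[0,\infty)$, which avoids your appeal to $a^*$ being merely finite-valued.
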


Next, we show that the equilibrium minimizes the receiver's belief conditional on nondisclosure at all times among all possible disclosure strategies.\footnote{Note that for any given conjectured strategy, the receiver's  belief conditional on nondisclosure is the same with or without timestamps.} This result extends the \textit{Minimum Principle} from earlier work of \cite{acharya2011endogenous} and \cite{guttman2014not}.\footnote{A related ``suspicious beliefs'' principle holds in \cite{kremer2024disclosing}.} Intuitively, for any fixed time $t$, the sender's strategy has a cutoff structure given by $s+a^*(s)=t$ (or $-\infty$ if there is no such $s$): all bad evidence with timestamp before this cutoff, along with all good evidence of any timestamp, is disclosed in equilibrium by time $t$ and would lead to a higher belief at $t$ than $p_t$; conversely, all bad evidence with timestamp after the cutoff would lead to a lower belief at $t$ than $p_t$. Hence, any alternative strategy would either involve nondisclosure of more favorable evidence or disclosure of less favorable evidence (or both). Both kinds of deviations would improve the nondisclosure pool in the sense of raising the average time-$t$ belief conditional on nondisclosure. 

\begin{proposition}[The Minimum Principle]
At each $t\geq 0$, the equilibrium no-disclosure belief $p_t$ is the minimum among all disclosure strategies.\label{prop:min_princ}
\end{proposition}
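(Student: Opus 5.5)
Fix $t\ge0$. The plan is to treat the no-disclosure belief at $t$ as a ratio: the mass of undisclosed types weighted by their time-$t$ posterior, divided by the total undisclosed mass. Under any disclosure strategy (possibly mixed), let $w(\tau,x)\in[0,1]$ be the probability that a sender with evidence $(\tau,x)$, $\tau\le t$, has not disclosed by $t$. The types are: no evidence, with mass $e^{-\mu t}$ and posterior $\phi_t$; and evidence $(\tau,x)$, with density $\mu e^{-\mu\tau}\Pr(\theta_\tau=x)$ and time-$t$ posterior $v(\tau,x)$, equal to $q^{G,\tau}_t$ if $x=1$ and $q^{B,\tau}_t$ if $x=0$. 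Then
\begin{align*}
p_t(w)=\frac{e^{-\mu t}\phi_t+\int_0^t\mu e^{-\mu\tau}\left[\phi_\tau w(\tau,1)q^{G,\tau}_t+(1-\phi_\tau)w(\tau,0)q^{B,\tau}_t\right]d\tau}{e^{-\mu t}+\int_0^t\mu e^{-\mu\tau}\left[\phi_\tau w(\tau,1)+(1-\phi_\tau)w(\tau,0)\right]d\tau}.
\end{align*}
This is correct because the state after $\tau$ evolves independently of the sender's disclosure decision, which depends only on $(\tau,x)$. The no-evidence mass is strictly positive and cannot be disclosed, so the denominator is bounded away from zero. The problem therefore becomes: minimize a ratio of linear functionals of $w$ over measurable $w:[0,t]\times\{0,1\}\to[0,1]$.

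The key step is the standard fractional-programming argument. Write $p_t(w)=N(w)/D(w)$, and fix $w$ and its value $c=p_t(w)$. Changing $w$ on a set of types whose posterior $v$ is above $c$ in the direction of disclosure, or on types with $v<c$ in the direction of nondisclosure, weakly lowers the ratio. Consequently any minimizer $w^{\min}$, with value $p^{\min}$, must satisfy $w=0$ where $v>p^{\min}$ and $w=1$ where $v<p^{\min}$. Equivalently, $p^{\min}$ is the unique root of $c\mapsto N_c-cD_c$, where $N_c$ and $D_c$ are computed under the cutoff rule ``withhold iff $v<c$''. This map is strictly decreasing in $c$ because the no-evidence mass is positive. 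Existence of the minimizer follows from the same identity, since the infimum over $w$ is attained by the threshold rule.

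Next I will check that the equilibrium of Theorem~\ref{thm:timestamp_eq} is exactly this threshold rule with $c=p_t$. Good evidence satisfies $q^{G,\tau}_t\ge p^*$. If $\lambda_0>0$, then $p_t<p^*$, and I will argue that $p_t\le q^{G,\tau}_t$ always, since $p_t$ is a weighted average of $\phi_t$ and $q^{B}$ terms, each of which is below $q^{G,\tau}_t$. Good evidence is disclosed in equilibrium, so $w=0$ on good types, as the threshold rule requires. For bad evidence, $q^{B,\tau}_t$ is increasing in the age $t-\tau$. By Proposition~\ref{prop:timestamp properties}(i) the equilibrium cutoff $s(t)$ solves $s+a^*(s)=t$: types with $\tau<s(t)$ are disclosed, and their posterior is $q^{B,\tau}_t\ge q^{B,s(t)}_t$. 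By the single-crossing-from-below argument used for Theorem~\ref{thm:timestamp_eq}, that posterior is at least $p_t$. Types with $\tau>s(t)$ have not yet crossed and so satisfy $q^{B,\tau}_t<p_t$. If $t<a^*(0)$, no bad type has crossed. Thus the equilibrium $w$ is the threshold rule at level $p_t$, and $N_{p_t}-p_tD_{p_t}=0$ holds by Bayes consistency. By uniqueness of the root, $p_t=p^{\min}$. Ties on the measure-zero set $v=c$ are irrelevant.

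For $\lambda_0=0$, bad evidence has $q^B\equiv0\le p_t$, and the same argument applies.

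I expect the main obstacle to be the third step: establishing cleanly that every undisclosed bad type at $t$ lies strictly below $p_t$ and every disclosed type lies weakly above it. This requires the once-and-from-below crossing property for all timestamps simultaneously, together with the monotonicity in Proposition~\ref{prop:timestamp properties}(i). I will import both from the proof of Theorem~\ref{thm:timestamp_eq}.
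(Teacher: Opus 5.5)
Your proof is correct and is essentially the paper's argument. The paper's law-of-iterated-expectations identity $\tilde N_t(\tilde p_t-p_t)=\int(\nu-\tilde\nu)\,(\text{mass})\,(v-p_t)\ge 0$ is exactly your observation that the equilibrium threshold rule minimizes $N-p_tD$ and attains value zero there. It rests on the same cutoff facts (i)--(iii), and it simply skips your detour through existence of a minimizer and uniqueness of the root.

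One small correction: your aside that $p_t<p^*$ whenever $\lambda_0>0$ is false, for example when $p_0>p^*$ and $t$ is small. You do not need it, because your weighted-average argument already gives $p_t\le q^{G,\tau}_t$, as does $t^G=\infty$ from the proof of Theorem~\ref{thm:timestamp_eq}.
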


As we shall see next, the Minimum Principle does not generally hold without timestamps.

\section{No timestamps}\label{sec:no_timestamps}
Without timestamps, the sender can attempt to manipulate the receiver's beliefs about the age of evidence by strategically timing its disclosure, but the receiver's beliefs about evidence age are endogenous equilibrium objects. Since this behavior manifests differently for senders with good versus bad evidence, we divide our analysis into two parts.

\subsection{Good evidence: delay and gradual purging}
\label{subsec:no_timestamps_good}

We begin with a simple but general result showing that when the prior is sufficiently high relative to the sender's impatience, there \textit{must} be delayed disclosure of good evidence, creating a stock of undisclosed evidence from the receiver's perspective, in contrast to the timestamps environment.

\begin{proposition}[Immediate undisclosed stock]\label{prop:no_timestamps_necessary_delay}
If $p_0>\frac{\lambda_0+r}{\lambda_0+\lambda_1+r}$, then in any equilibrium, there exist arbitrarily small $t>0$ such that, conditional on no prior disclosure, there is strictly positive probability that the sender possesses undisclosed good evidence.
\end{proposition}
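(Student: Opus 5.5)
Argue by contradiction. Suppose there is an equilibrium and some $\bar t>0$ such that, for all $t\in(0,\bar t)$, conditional on no prior disclosure the sender holds undisclosed good evidence with zero probability. Then on $(0,\bar t)$ any good evidence is disclosed essentially immediately, and since there is no stock, a good-evidence disclosure at any $t<\bar t$ is interpreted as fresh: evidence arriving at $t$ with $\theta_t=1$. The post-disclosure belief therefore jumps to $1$ and thereafter follows $q^{G,t}_u$ for all $u\ge t$. The point is that, without timestamps, the receiver's belief after disclosure at $t$ depends only on the disclosure date. A sender who received good evidence at some $\tau<\bar t$ can therefore wait until $t\in(\tau,\bar t)$ and obtain the belief path $q^{G,t}$ instead of $q^{G,\tau}$. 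Such a disclosure is on path, because fresh good evidence at $t$ occurs with positive density, so this deviation is feasible and its consequences are pinned down by Bayes' rule.

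The plan is to show that, for $\tau$ small and $t=\tau+\varepsilon$ with $\varepsilon$ small, this delay is strictly profitable. Compare the two payoffs from time $\tau$ on. Immediate disclosure gives $\int_\tau^\infty e^{-r(u-\tau)}q^{G,\tau}_u\,du$. Delay gives $\int_\tau^{t} e^{-r(u-\tau)}p_u\,du+\int_t^\infty e^{-r(u-\tau)}q^{G,t}_u\,du$.

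The second piece is computed from \eqref{qG and qB}. Write $k=\lambda_0+\lambda_1$. Then $q^{G,t}_u-q^{G,\tau}_u=(1-p^*)e^{-k(u-t)}(1-e^{-k\varepsilon})$, so the gain on $[t,\infty)$ is approximately $\varepsilon(1-p^*)k/(k+r)$. The flow loss on $[\tau,t)$ is approximately $\varepsilon(1-p_\tau)$, since $q^{G,\tau}\approx1$ near $\tau$. The deviation is therefore strictly profitable to first order when
\[
1-p_{\tau}<\frac{(1-p^*)(\lambda_0+\lambda_1)}{\lambda_0+\lambda_1+r}=\frac{\lambda_1}{\lambda_0+\lambda_1+r},
\]
which is equivalent to $p_\tau>\frac{\lambda_0+r}{\lambda_0+\lambda_1+r}$. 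This matches the hypothesized threshold.

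It remains to control $p_\tau$ for small $\tau$. I expect this step to be the main obstacle, because $p$ is an endogenous no-disclosure path that could in principle jump down at $0^+$ if some mass discloses at time $0$. The intended argument is that $p_0=\pi_0$ is the prior and disclosures can only occur after evidence arrives, which happens at rate $\mu$. The probability of any disclosure (good or bad) by time $\tau$ is therefore at most $1-e^{-\mu\tau}$. Hence the nondisclosure belief satisfies $|p_\tau-\phi_\tau|\le O(\tau)$ by Bayes' rule, so $p_\tau\to p_0$ as $\tau\to0$ regardless of strategies, including any bad-evidence disclosure. Since $p_0$ strictly exceeds the threshold, $p_u$ stays above it on a neighborhood $[0,\delta)$.

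To conclude, pick $\tau<t<\min(\delta,\bar t)$ with $t-\tau$ small enough that the first-order comparison is valid; the error terms are $O(\varepsilon^2)$ uniformly, because all beliefs are Lipschitz near $0$. The delay deviation then yields strictly higher payoff, contradicting sequential rationality. Some care is needed to handle the measure-zero qualifications in "zero probability of undisclosed good evidence." I would phrase the contradiction as follows: immediate disclosure must be optimal for a positive-measure set of arrival dates $\tau$ near $0$, and the strict deviation gain applies to every such $\tau$.
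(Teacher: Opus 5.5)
Your proposal is correct and follows essentially the paper's argument: assume there is no stock near $0$, let an early good-evidence type delay briefly so that the disclosure is read as fresh, and weigh the $\lambda_1\varepsilon/(\lambda_0+\lambda_1+r)$ gain against the roughly $(1-p)\varepsilon$ flow loss, using that $p$ stays near $p_0$ at small times. The paper streamlines the error bookkeeping by using the time-$0$ type and the exact time-homogeneity identity $F(\varepsilon)=\int_0^\varepsilon e^{-rt}p_t\,dt+e^{-r\varepsilon}F(0)$, and then extends to nearby types by continuity; this needs only a lower bound $p_t>\frac{\lambda_0+r}{\lambda_0+\lambda_1+r}$ near $0$, not your Lipschitz claim on $p$, which is unnecessary and can fail because $p$ may jump.
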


To understand Proposition \ref{prop:no_timestamps_necessary_delay} and the cutoff belief, suppose, to the contrary, that there is an equilibrium in which almost surely good evidence is disclosed immediately over some initial interval. If the sender obtains evidence at time $0$ and delays until some small time $\varepsilon>0$ (except on a set of measure $0$), the receiver interprets the evidence as fresh and updates to $1$. On $[0,\varepsilon)$, the posterior belief is lower than had the sender disclosed at time $0$. Since beliefs are right-continuous, to a first-order approximation, the loss from this delay is $(1-p_0)\varepsilon$. However, because of the delay, the belief at time $\varepsilon$ is higher by approximately $\lambda_1\varepsilon$, where $-\lambda_1$ is the slope of the post-disclosure belief at time $0$. Accounting for discounting and mean reversion after the disclosure, the net present value of this difference is $\lambda_1\varepsilon \frac{1}{\lambda_0+\lambda_1+r}$. Put together, delay is optimal if $(1-p_0)\varepsilon < \lambda_1\varepsilon \frac{1}{\lambda_0+\lambda_1+r}$, which rearranges to the condition in Proposition \ref{prop:no_timestamps_necessary_delay}. Intuitively, the trade-off favors delay when the prior is high. This delay is a strictly profitable deviation for almost all sufficiently early good evidence types.

Proposition \ref{prop:no_timestamps_necessary_delay} also implies that, when the prior is sufficiently high, the Minimum Principle fails, unlike in the timestamps environment. Delay means that the nondisclosure pool now contains some good evidence types, raising the nondisclosure belief. 

Our next general result states that in all equilibria, there exists $T>0$ at which the stock of undisclosed good evidence is empty. If not, the sender would be willing to conceal good evidence forever, and we show that this implies he would never disclose bad evidence. It follows that the no-disclosure belief would always lie weakly below the no-information belief, which lies strictly below the belief after disclosure of even time-$0$ good evidence, and there would be a strict incentive to deviate and disclose good evidence.

Furthermore, under a mild restriction on off-path beliefs, disclosure must involve a gradual, rather than abrupt, emptying of the stock. Let $S_t$ denote the probability that the sender has undisclosed good evidence at time $t$, conditional on no disclosure by $t$. We say that beliefs satisfy the \textit{support restriction} if whenever $S_T=0$ for some $T>0$, beliefs after off-path disclosure of good evidence at any $t>T$  place zero probability on arrival dates before $T$. Intuitively, after time $T$, there is zero probability that evidence that arrived before $T$ remains undisclosed, whereas new evidence can arrive after $T$.\footnote{Related support restrictions are used in multi-stage signaling games; see \cite{mcclellan2026signaling}. For a tremble-based motivation and further discussion, see the Online Appendix.} Now suppose, toward a contradiction, that good evidence is disclosed with an atom of probability at time $T$: $S_{T-}>0=S_T$. The receiver's belief given disclosure would jump to a value strictly less than $1$, based on her belief distribution over possible evidence ages. Under the support restriction, by delaying until $T+\varepsilon$, the sender can induce a posterior of at least $p^*+(1-p^*)e^{-(\lambda_0+\lambda_1)\varepsilon}$, since the receiver places probability one on arrival dates after $T$.\footnote{If $T$ is the beginning of a permanent transparency phase, then the result holds even without the support restriction, since then disclosure at $T+\varepsilon$ is on path and leads to a posterior of $1$.} This deviation yields a discrete benefit at arbitrarily small cost of delay, contradicting optimality of disclosure at the atom. In the equilibrium we construct, this dilemma is resolved through mixing.

\begin{lemma}[Eventual and continuous stock emptying]\label{lem:no_good_bang_bang} In any equilibrium, there exists $T>0$ such that the stock of undisclosed good evidence becomes empty: $S_T=0$. Moreover, under the support restriction, for all $T>0$ with $S_T=0$, we have $\lim_{t\uparrow T} S_t=0$.
\end{lemma}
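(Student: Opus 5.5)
The proof has two parts, mirroring the two claims of Lemma \ref{lem:no_good_bang_bang}. Both are by contradiction, and each exhibits a profitable deviation for a sender holding good evidence.

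For the first claim, suppose $S_t>0$ for all $t>0$. Then at every date some sender with good evidence strictly prefers, or is at least willing, to keep concealing it, so concealment forever is a best response for some good-evidence type. First I will show that in this case bad evidence is never disclosed on path. A sender with bad evidence of age $a$ who discloses at $t$ receives at most $q^{B,t-a}_u\le p^*$ afterwards. The good-evidence type who is willing to wait can guarantee a payoff at least as large as any continuation obtained by the bad type, since beliefs after disclosure of good evidence of any age are at least $p^*$. So if waiting is weakly optimal for the good type, disclosure cannot be strictly optimal for the bad type, unless beliefs conditional on no disclosure fall below $q^{B}$. I will handle that case by noting that it already forces the good type to disclose. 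Once bad evidence is never disclosed, the no-disclosure belief $p_t$ is a Bayesian average of $\phi_t$ (no evidence, or undisclosed bad evidence, which lowers the average) and the continuation beliefs of whatever good evidence remains. I then need $p_t\le\phi_t$ along a sequence of times, or at least $p_t$ bounded away from the post-disclosure belief. The belief after disclosing good evidence at $t$ is at least $\min_\tau q^{G,\tau}_t\ge \phi_t$, with strict inequality uniformly because $q^{G,0}_t>\phi_t$. The key estimate is that $p_t$ is bounded by $\phi_t$ plus a term proportional to $S_t$. If $S_t$ stays bounded away from $1$, the good type strictly gains by disclosing, which contradicts indefinite concealment; otherwise the pool is almost entirely good evidence, and I derive a contradiction from Bayes' rule with arrival rate $\mu$ and the bad-state flow.

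\textbf{Main obstacle.} I expect the hardest step to be making the first part fully rigorous: ruling out self-sustaining concealment in which the pool is rich in good evidence and so $p_t$ is high. The plan is to compare the good type's concealment payoff with disclosure using the fact that the unconditional mean of the belief process is $\phi_t$. If every good type conceals, then disclosure (which yields at least $q^{G,0}_u$) strictly dominates the martingale-averaged path. This requires a careful case analysis of off-path beliefs.

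For the second claim, take $T>0$ with $S_T=0$ and suppose $S_{T-}>0$ (taking limits along a subsequence if the limit fails to exist). Then a positive mass of good evidence with arrival dates in $[0,T)$ is disclosed in an atom at $T$, or in a cluster of disclosures accumulating at $T$. Because these senders hold evidence with a nondegenerate distribution of arrival dates before $T$, the receiver's posterior after disclosure at $T$ is a mixture of the $q^{G,\tau}_T$ and is at most $1-\delta$ for some $\delta>0$. The deviation is to wait until $T+\varepsilon$. Under the support restriction, the receiver places all weight on arrival dates in $(T,T+\varepsilon]$, so the belief is at least $q^{G,T}_{T+\varepsilon}$, and afterwards it is at least $q^{G,T}_u$. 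This exceeds the equilibrium path $q$-mixture by a discrete amount for all $u\ge T+\varepsilon$. The cost is at most $\varepsilon$ in flow payoff, so for small $\varepsilon$ the deviation is strictly profitable, contradicting optimality of disclosure at or just before $T$.
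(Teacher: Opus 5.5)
There is a genuine gap at the start of the first claim. Your later chain (bad evidence never disclosed, hence $p_t\le\phi_t$, hence disclosure dominates) matches the paper's final case. But from $S_t>0$ for all $t>0$ you infer that ``concealment forever is a best response for some good-evidence type,'' and nothing in your argument supports this. A profile in which each good type discloses one unit of time after arrival keeps the stock positive forever, yet no type ever conceals forever.

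Ruling out such rolling delays is the core of the paper's proof. It relies on an idea absent from your proposal: without timestamps the arrival date is payoff-irrelevant, so every sender holding good evidence at a given date faces the same continuation problem. Let $\mathcal{T}(\tau)$ be the set of optimal disclosure dates for evidence arriving at $\tau$, and $T^*(\tau)=\sup\mathcal{T}(\tau)$. Then $\mathcal{T}(\tau)\cap[\tau',\infty]=\mathcal{T}(\tau')$ for $\tau<\tau'$ whenever the left side is nonempty. Two consequences follow:
\begin{itemize}
\item If $T^*(\tau_0)\in(\tau_0,\infty)$, it is a common deadline for everyone holding good evidence at $\tau_0$ and everyone arriving in $(\tau_0,T^*(\tau_0))$, so $S_{T^*(\tau_0)}=0$.
\item If $T^*(\tau_0)=\infty$, continuity of discounted payoffs puts $\infty$ itself in $\mathcal{T}(\tau_0)$, and the nesting propagates this to every later date.
\end{itemize}
You need this strong form, with never disclosing optimal from \emph{every} date $t$ and not merely for some type. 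The reason is the bad-evidence step. At each $t$ where bad evidence might be disclosed, you must compare the common concealment value $U(\infty;t)\ge F(q^G_t)>F(p^*)$ with the bad type's disclosure payoff $F(q^B_t)\le F(p^*)$, where $F(q)$ is the payoff from a disclosure inducing posterior $q$. This strict chain is what makes bad disclosure strictly suboptimal; your ``cannot be strictly optimal'' is too weak. Your caveat about $p_t$ falling below $q^B$ is unnecessary, since both types' concealment values coincide.

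The rest of your first part targets an obstacle that does not exist. Once bad evidence is never disclosed, every disclosure by $t$ leaves a time-$t$ belief of at least $q^{G,0}_t>\phi_t$. This holds for any beliefs supported on arrival dates in $[0,t]$, so no case analysis of off-path beliefs is needed. Since also $\mathbb{E}[\pi_t]=\phi_t$, you get $p_t\le\phi_t<q^{G,0}_t$ for all $t>0$, and immediate disclosure beats concealing forever flow by flow. A good-rich pool cannot keep $p_t$ high. Your proposed split on whether $S_t$ is bounded away from $1$ would not close the argument anyway, because the margin $q^{G,0}_t-\phi_t=(1-p_0)e^{-(\lambda_0+\lambda_1)t}$ vanishes.

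Your second part is essentially the paper's argument, with two small remarks. First, $S$ has left limits, and $S_{T-}>0=S_T$ forces an atom of good-evidence disclosure at $T$, so the ``cluster'' case is vacuous. Second, if disclosure at $T+\varepsilon$ is on path, Bayes' rule with $S_T=0$ gives the same bound $p^*+(1-p^*)e^{-(\lambda_0+\lambda_1)\varepsilon}$ that the support restriction gives off path.
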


Our game is a dynamic signaling game and therefore naturally admits multiple equilibria. Nevertheless, Proposition \ref{prop:no_timestamps_necessary_delay} and Lemma \ref{lem:no_good_bang_bang} impose structure on any equilibrium: for a high prior, an initial stock of undisclosed evidence develops, but it must eventually empty out, and (under the support restriction) must do so in a continuous fashion. 

Our goal therefore is not to characterize the full set of equilibria, but to obtain a sharp characterization within a natural class that makes clear the economics of strategic delay without timestamps. To that end, suppose now that $\lambda_0=0$; that is, the bad state is absorbing. (We return to $\lambda_0>0$ later.) Then disclosure of bad evidence is strictly dominated because it guarantees a flow payoff of $0$ forever after, isolating the role of good evidence.

We construct an equilibrium that exhibits continuous stock emptying via age-independent stochastic disclosure: disclosure behavior depends on calendar time but not the evidence's age. This class is natural because the actual evidence arrival date is not directly payoff relevant to the sender. We explore an age-dependent equilibrium in Section \ref{subsec:FIFO}. The following class of equilibria is central to our analysis. 

\begin{definition}\label{def:three_phase_eqbm}
    A three-phase equilibrium is characterized by two deterministic times $\Tund,\Tbar$ with $0\leq \Tund\leq \Tbar<\infty$ and a continuous function $\kappa_t>0$ on $(\Tund,\Tbar)$ such that 
    \begin{itemize}[leftmargin=1.25em,itemsep=.25pt,topsep=3pt] 
        \item \emph{Stockpiling phase:} On $[0,\Tund)$, the sender does not disclose good evidence. Following an off-path disclosure of good evidence, the receiver holds the \emph{neutral belief}: the probability that the current state is good conditional on the sender possessing good evidence.
         \item \emph{Purging phase:} On $[\Tund,\Tbar)$, good evidence disclosure is atomless from the receiver's perspective, and on $(\Tund,\Tbar)$, good evidence is disclosed at age-independent hazard rate $\kappa_t$.
        \item \emph{Transparency phase:} On $[\Tbar,\infty)$, the sender discloses good evidence immediately.
    \end{itemize}
\end{definition}

In particular, our stockpiling phase does not rely on belief threats to punish deviations; neutral beliefs are consistent with a post-disclosure posterior belief distribution over arrival dates identical to the pre-disclosure distribution, reflecting the fact that the arrival dates are not payoff relevant to the sender.\footnote{Neutral beliefs are the limit of strategies with age-independent trembles; that is, for each evidence arrival date $\tau \in [0,\underline T)$, the sender mixes on $[\tau,\underline T)$ by disclosing at constant hazard rate $\varepsilon$.}

We now present the main result of this section, which establishes a unique equilibrium within the three-phase class. Define $\pbar:=\frac{r+\lambda_{1}/2}{r+\lambda_1}$ and $\pund:=\frac{r}{r+\lambda_1}$. 

\begin{theorem}[Existence and uniqueness of three-phase equilibrium]\label{thm:three_phase_eqbm}
  Assume $\lambda_0=0$. For all $p_0,\lambda_1,\mu$ and $r$, there exists a unique three-phase equilibrium. Moreover,
  \begin{itemize}[leftmargin=1.25em,itemsep=.25pt,topsep=3pt] 
      \item If $p_0 >\pbar$, all three phases are nonempty.
      \item If $p_0 \in (\pund,\pbar]$, only the purging and transparency phases are nonempty.
      \item If $p_0\leq \pund$, only the transparency phase is nonempty.
  \end{itemize}
\end{theorem}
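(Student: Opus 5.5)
The plan is to reduce everything to a low-dimensional system of ODEs in calendar time. Age-independence is what makes this possible. Since $\lambda_0=0$, bad evidence is never disclosed, and the only payoff-relevant objects are a few stocks conditional on no disclosure by $t$:
\begin{itemize}[leftmargin=1.25em,itemsep=.25pt,topsep=3pt]
\item $Z_t$, the probability of no evidence with the current state good;
\item $X_t$, the probability of undisclosed good evidence with the current state good;
\item $Y_t$, the probability of undisclosed good evidence with the current state bad;
\item the remaining mass: no evidence with the state bad, or bad evidence.
\end{itemize}
Under an age-independent hazard $\kappa_t$, these stocks evolve linearly. For example, $\dot X_t=\mu Z_t-(\lambda_1+\kappa_t)X_t$ (before normalization), and $Y$ gains $\lambda_1 X$ and loses $\kappa Y$. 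From them we read off two beliefs: the no-disclosure belief $p_t=(Z_t+X_t)/\text{mass}$, and the disclosure belief $x_t=X_t/(X_t+Y_t)$. In the stockpiling phase $x_t$ is the neutral belief, and in the transparency phase it equals $1$. Because post-disclosure beliefs decay as $x_te^{-\lambda_1(u-t)}$, disclosing at $t$ is worth $x_t/(r+\lambda_1)$.

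\textbf{Step 1: optimality conditions for each phase.} On the purging phase the sender must be indifferent at all $t$. Writing $V_t=x_t/(r+\lambda_1)$ for the value of holding evidence, indifference gives $rV_t=p_t+\dot V_t$, i.e.
\begin{align*}
\dot x_t=r x_t-(r+\lambda_1)p_t .
\end{align*}
The hazard $\kappa_t$ is the unknown that makes the stock dynamics deliver this $x$-path; equivalently, $\kappa_t$ is pinned down by differentiating $x=X/(X+Y)$. In the transparency phase, the local-delay argument behind Proposition \ref{prop:no_timestamps_necessary_delay} (now with $\lambda_0=0$, and applicable at every $t$ since any delayed disclosure is read as fresh) shows immediate disclosure is optimal iff $p_t\le \pund$. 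Here $p_t$ is just the no-evidence-or-bad belief, which is decreasing, so this region is an interval $[\Tbar,\infty)$. Lemma \ref{lem:no_good_bang_bang} forces $S_{\Tbar-}=0$ and hence $x_{\Tbar}=1$, and then indifference with $\dot x=0$ gives the terminal condition $p_{\Tbar}=\pund$. In the stockpiling phase I verify that waiting weakly dominates disclosing at the neutral belief, i.e. $rV_t\le p_t+\dot V_t$ along the no-disclosure dynamics. At the switch time $\Tund$, continuity of $x$ together with this inequality binding gives the entry condition for purging. Computing the neutral-belief dynamics at $t=0$ (where the stock is fresh, $x_0=1$, $\dot x_0=-\lambda_1$) should yield exactly the threshold $p_0>\pbar$, i.e. $1-p_0<\lambda_1/(2(r+\lambda_1))$, for a nonempty stockpile. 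The factor $1/2$ arises from averaging over the uniform-in-$[0,t]$ ages of the small initial stock.

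\textbf{Step 2: construct the purging path by backward shooting and prove uniqueness.} Parameterize candidate equilibria by $\Tbar$. Given $\Tbar$, the terminal data ($S=0$, $x=1$, $p=\pund$) together with the indifference ODE and the stock ODEs, integrated backward, determine $(p,x,X,Y,\kappa)$ on $(\Tund,\Tbar)$. Matching $p_{\Tbar}=\pund$ forward from the prior requires that the aggregate path be consistent with the no-information dynamics from $p_0$. I would show that the matching condition is strictly monotone in $\Tbar$, e.g. via a comparison argument on the total good-state mass $Z+X$, which evolves independently of $\kappa$ only through disclosures. This gives existence and uniqueness of $\Tbar$, and hence of $\Tund$, which is the time at which the backward path meets the stockpiling (neutral-belief) path. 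The case split in the theorem then falls out. If $p_0\le\pund$, we have $\Tbar=0$. If $p_0\in(\pund,\pbar]$, purging starts at $0$ with $\Tund=0$. Otherwise $\Tund>0$. Finally, verify $\kappa_t>0$ on $(\Tund,\Tbar)$, which amounts to $x$ rising faster than the age-shifting of the stock alone would produce.

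\textbf{Main obstacle.} I expect the hardest part to be the singular terminal boundary at $\Tbar$. There $S\to0$ and $\kappa_t\to\infty$, so the backward ODE is not Lipschitz at the endpoint. I would first try to change variables to $x$ (or to $\log S$) as the independent variable, or to rescale time by $\int\kappa$, so that the boundary becomes a regular point. The goal is a well-posed initial value problem whose solution varies monotonically with $\Tbar$. Positivity of $\kappa$ and monotonicity of the shooting map are the delicate estimates; the phase-wise optimality verifications are routine once the path exists.
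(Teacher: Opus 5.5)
Your terminal condition $p_{\Tbar}=\pund$ is false, and it sinks the shooting scheme. There is no smooth fit at $\Tbar$. The variable $q$ (your $x$) reaches $1$ with strictly positive left slope and then stays flat, so indifference from the left gives $\dot q_{\Tbar-}=r-(\lambda_1+r)p_{\Tbar}>0$, i.e.\ $p_{\Tbar}<\pund$.

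You can check directly that $(p,q)=(\pund,1)$ is unreachable. On the purging dynamics (after eliminating the stock, see below), at that point one has $\dot q=rq-(\lambda_1+r)p=0$ and $\dot p<0$. Hence $\ddot q=-(\lambda_1+r)\dot p>0$, so any path arriving at $(\pund,1)$ has $q>1$ just before. Backward integration from $(S,x,p)=(0,1,\pund)$ therefore yields no feasible path.

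Relatedly, $\{t:p_t\le\pund\}$ is not the transparency phase. The belief $p_t$ falls below $\pund$ strictly inside the purging phase, and the sender keeps mixing there because $q_t<1$. The inequality $p\le\pund$ is only the condition for transparency to be optimal once it has begun. Since $p$ decreases under transparency, checking it at $\Tbar$ suffices.

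The shooting formulation is also overdetermined. As your Step 1 already says, the stockpiling phase alone pins down $\Tund$ as the unique zero of the local IC along the closed-form no-disclosure path. Uniqueness, and the knife-edge case $p_0=\pbar$, require $b(t)=e^{\lambda_1 t}(rq_t-\dot q_t)/(\lambda_1+r)$ to be strictly increasing. This fixes the full state at $\Tund$, so $\Tbar$ is an output, not an unknown to be matched.

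The missing idea is to eliminate $S$ and $\kappa$ algebraically:
\begin{itemize}
\item Equate the Bayes law of motion for $q$ with the indifference condition to get $S=S(p,q)$.
\item Differentiate that expression and match it to the $S$-dynamics to get $\kappa=\kappa(p,q)$.
\item The resulting autonomous $(p,q)$ system is $C^1$ on $\{q<1+(\lambda_1+r)/\mu\}$, so it is regular at $q=1$. The singularity you anticipate never arises; $S\to0$ and $\kappa\to\infty$ are consequences.
\item Integrate forward from $(p_{\Tund},q_{\Tund})$; Picard--Lindel\"of gives uniqueness.
\item Finite termination comes from $\delta=q-p$ satisfying $\dot\delta=\delta(r+S\kappa)\ge r\delta$ while the path stays in $R=\{0<p<q<1,\ A(p,q)>0\}$.
\item Show the exit from $R$ is through $q=1$, which also keeps $\kappa>0$ until then, and set $\Tbar$ equal to the exit time.
\end{itemize}

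Finally, the case split does not fall out of the construction; it needs two separate arguments. First, stockpiling requires $p_0>\pbar$. Note that $\dot q_0=-\lambda_1/2$, not $-\lambda_1$; the latter would give threshold $1$. Second, starting in purging requires $p_0\le\pbar$. This uses $\dot q_{0+}\ge-\lambda_1/2$ for any positive hazard, because disclosure tilts the age density toward recent arrivals.
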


\begin{figure}[ht!]
    \centering
    \includegraphics[scale=1]{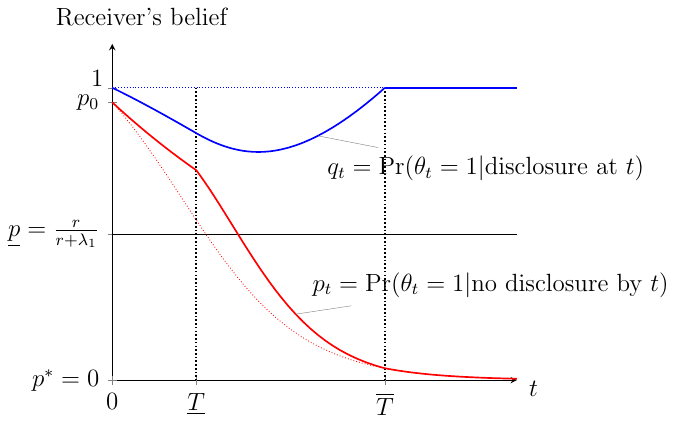}
    \caption{Belief paths in the three-phase equilibrium for $p_0>\pbar$, with vertical dashed lines at $\Tund$ and $\Tbar$. Parameter values are $\lambda_1=0.5,r=0.5$, $\mu=2$, $p_0=0.95$.}
    \label{fig:three_regions}
\end{figure}

Figure \ref{fig:three_regions} illustrates the equilibrium beliefs when $p_0>\pbar$; $p_t$ denotes the receiver's belief about $\theta_t$ conditional on no disclosure, and $q_t$ denotes the receiver's posterior immediately after disclosure of good evidence.\footnote{Note that while the belief jumps to $q_t$ at time $t$, after that, it evolves according to Markov switching; i.e., the receiver's belief does not continue to be given by $q$ at all future times after the disclosure.} For comparison, the dotted blue and red curves show the receiver’s beliefs conditional on disclosure and no disclosure, respectively, in the unique equilibrium with timestamps. As implied by the Minimum Principle, the no-disclosure belief without timestamps is bounded below by its counterpart with timestamps.

In the stockpiling phase, delay leads to an accumulation or stock of undisclosed evidence from the receiver's perspective. The receiver's belief $p$ conditional on no disclosure falls due to Markov switching, but beyond that, ``no news is no news'' as the receiver anticipates delay. Meanwhile, $q$ falls due to the Markov switching, but this fall is dampened by the arrival of new evidence. The local incentive compatibility condition for delay is
\begin{align}
    \dot{q}_t \geq r q_t -(\lambda_1+r)p_t.\label{eq:sec_4_delay_IC}
\end{align}
Intuitively, a marginal delay is beneficial when the current flow payoff $p_t$ is large, the posterior after disclosing good evidence $q_t$ is low, or when $\dot{q}_t$ is large (or not too negative). 

The purging phase begins at the time $\underline{T}$ when, if the receiver had counterfactually continued to conjecture delay, the sender would strictly prefer to disclose.\footnote{Recall that our three-phase equilibrium definition uses neutral off-path beliefs. Although an incentive to delay past $\underline T$ could be sustained for some time by more pessimistic off-path beliefs, the equilibrium conditions and atomless mixing immediately after $\underline T$ rule out longer delay. In other words, $\underline T$ is pinned down regardless of off-path beliefs, unless one goes beyond this class of equilibria to allow for atoms at the start of purging.} At the beginning of the purging phase, $p$ has a downward kink as the rate of disclosure is bounded away from $0$. During this phase, $q$ initially continues to fall as the rate of disclosure is small and does not offset the aging of the stock of undisclosed evidence. However, later in the purging phase, $p$ falls sufficiently low that indifference can only be sustained with an increasing $q$; this requires $\kappa$ to increase. As $q$ converges to its maximum value $1$, it must be that the composition of the pool of undisclosed evidence shifts toward fresh evidence; this requires the purging rate $\kappa$ to tend to infinity as $t\uparrow \Tbar$. When the transparency phase begins, it is no longer possible to maintain indifference, and the sender discloses good evidence immediately as it arrives. 

In Figure \ref{fig:three_regions}, it is noteworthy that the receiver's belief conditional on nondisclosure crosses $\pund$ before the end of the purging phase. In other words, the sender continues to (stochastically) delay even though the receiver's belief is sufficiently low that, had the game started at the same belief, there would be immediate disclosure at all times. This is because the incentive to delay depends not only on the current belief $p_t$ but also the post-disclosure value $q_t$ and its evolution. With a stockpile of undisclosed evidence, immediate disclosure does not convince the receiver that the current state is $1$, and delay can be weakly optimal. Thus, the variable $q$ is a critical additional factor in the sender's best response problem. 

\subsubsection{Three-phase equilibrium construction}\label{subsec:construction}
Recall that $q_t$ is the receiver's belief immediately after disclosure of good evidence. Clearly, $q_0=1$. Under neutral belief updating, an unexpected disclosure says that the sender has good evidence but the receiver infers nothing further about its age.
Rather than maintain a belief over the whole interval of past times, it is convenient to introduce ``stock'' variables
\begin{align*}
    G_{i,t}=\Pr(\tau \leq t \text{ and } \theta_\tau = 1 \text{ and } \theta_t=i | \text{no disclosure by $t$}) \qquad \text{for $i\in \{0,1\}$},
\end{align*}
representing the probability that the sender possesses good evidence \textit{and} the state is currently $i$, conditional on no disclosure yet. It follows that for $t\in (0,\underline T)$,
\begin{align}
    q_t=\frac{G_{1,t}}{G_{1,t}+G_{0,t}}.\label{eq:q_of_G}
\end{align}
The same characterization also applies after disclosure in the purging phase, since the disclosure rate of good evidence is independent of its timestamp.

In either the stockpiling (setting $\kappa=0$) or purging phase, $(G_1,G_0,p)$ evolve according to
 \begin{align}
    \dot{G}_{1,t}&=-\lambda_1 G_{1,t}+(p_t-G_{1,t})\mu -G_{1,t}\kappa_t(1-G_{1,t}-G_{0,t})\label{eq:G1_ODE_mix}\\
\dot{G}_{0,t}&=\lambda_1 G_{1,t}-G_{0,t}\kappa_t(1-G_{1,t}-G_{0,t})\label{eq:G0_ODE_mix}\\
    \dot{p}_t&=-\lambda_1 p_t-G_{1,t}\kappa_t+p_t(G_{0,t}+G_{1,t})\kappa_t,\label{eq:p_ODE_mix}
\end{align}
with initial values $(0,0,p_0)$. In the $G_1$ and $G_0$ ODEs, the $\lambda_1$ terms reflect that Markov switching from state $1$ to state $0$ increases $G_0$ while decreasing $G_1$. The $(p_t-G_{1,t})\mu$ term represents the arrival of good evidence, which occurs at rate $\mu$ when the state is good and the sender does not already possess good evidence. The $-\kappa_t G_{i,t}$ terms reflect outflows due to disclosure, and the adjustment factor $(1-G_{1,t}-G_{0,t})$ accounts for the conditioning on no disclosure yet. The $p$-ODE terms have an analogous interpretation.\footnote{When the bad state is not absorbing ($\lambda_0>0$), we also must introduce analogous variables $B_{i,t}$ corresponding to bad evidence. The variable $B_1$ enters the law of motion for $p$ when $\lambda_0>0$ since it is possible that the state is good but was bad in the past and the sender already obtained bad evidence, preventing the arrival of good evidence. When $\lambda_0=0$, $B_1$ is identically $0$.}

In the stockpiling phase the system has a closed-form solution. Provided $p_0>\pbar$, the local delay IC condition \eqref{eq:sec_4_delay_IC} holds strictly until some time $\Tund$, when the sender becomes locally indifferent. This marks the end of the stockpiling phase.

Throughout the purging phase, the IC condition \eqref{eq:sec_4_delay_IC} must hold with equality. 
The system in $(G_1,G_0,p,q,\kappa)$ is a mixed differential-algebraic system rather than a standard initial value problem (IVP). This is because $\kappa$ is not given exogenously, and instead we have an additional equation \eqref{eq:q_of_G} linking $q$ to $G_0$ and $G_1$. To solve the system, we eliminate $G_0$ and $G_1$ using $S:=G_0+G_1$ and $G_1=q S$, and then we reduce the system to an IVP in two variables $(p,q)$ by solving for $S$ and $\kappa$ as functions of $(p,q)$. The resulting equations are \eqref{eq:mix_IVP_p} and \eqref{eq:mix_IVP_q} in the Appendix, and the dynamics of this system are illustrated in Figure \ref{fig:phase_diagram}. The lightly shaded, outlined region, $R$, is where $S$ and $\kappa$ are well defined and $\kappa$ is positive (and therefore a valid hazard rate). In the proof, we show that the jump in the receiver's belief caused by disclosure at time $t$, $\delta_t:=q_t-p_t$, evolves as
    \begin{align}
\dot{\delta}_t=\delta_t(r+S_t\kappa_t),\label{eq:mix_IVP_delta}
    \end{align}
which implies that sustained indifference requires exponential growth of $q_t-p_t$. Hence, $(p,q)$ must exit this shaded region at some finite time $\Tbar$, and we show that it can only do so through $q_{\Tbar}=1$, marking the end of the purging phase. Hence, as $t\uparrow \Tbar$, the purging rate $\kappa_t$ explodes and the stockpile empties, i.e. $S_t\to 0$.

After the purging phase, the stockpile is empty and the receiver conjectures immediate disclosure at all times. Hence, $q_t=1$ for all $t\geq \Tbar$. Since the receiver's belief $p_t$ is already below the threshold $\pund$, immediate disclosure is optimal from then on. 

\begin{figure}[ht!]
    \centering
\includegraphics[width=0.5\linewidth]{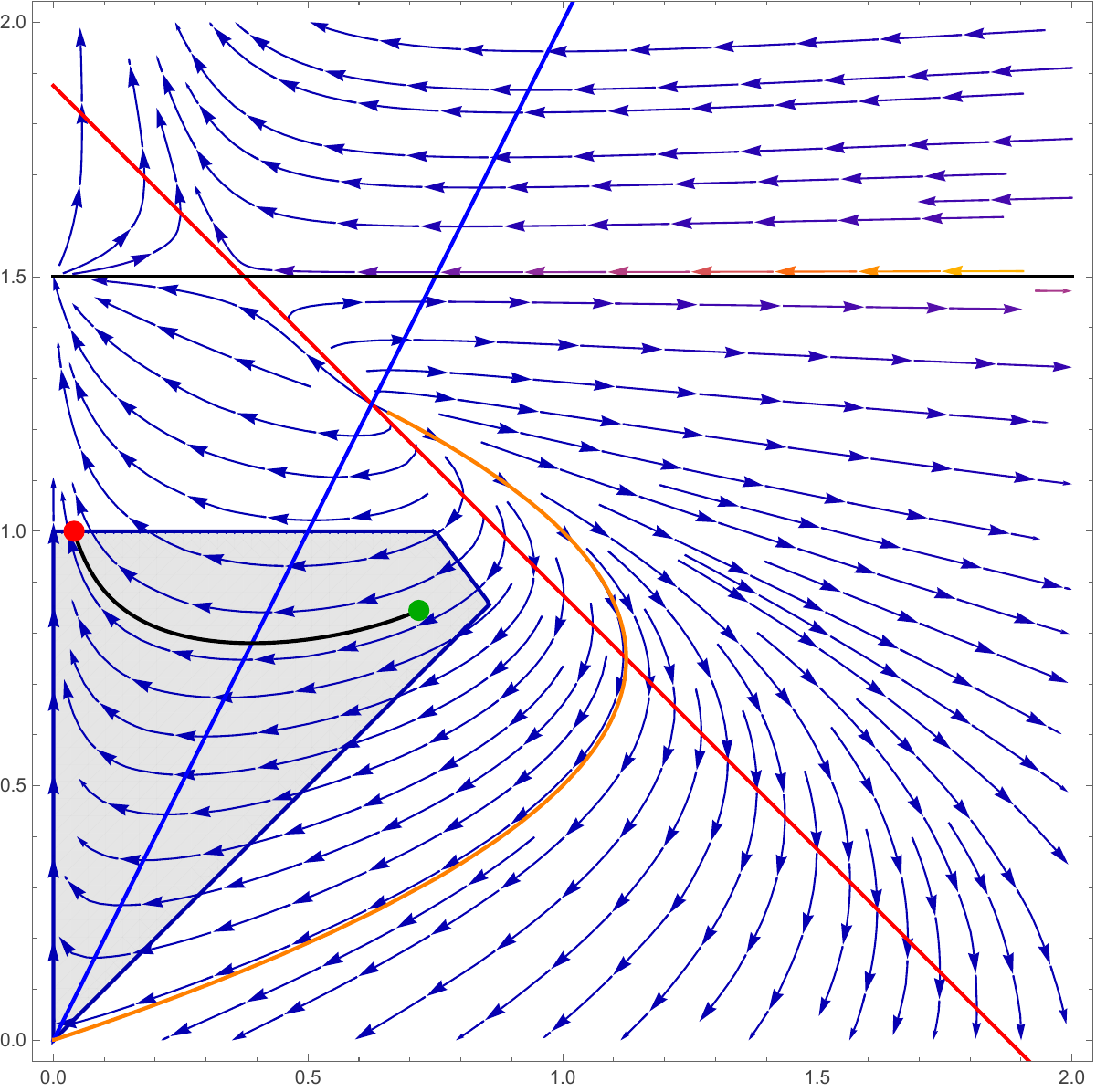}
    \caption{Phase diagram for reduced purging-phase system in $(p,q)$. Beliefs start at the green dot and the solution first exits $R$ through $q_{\Tbar}=1$ at the red dot. The red and blue lines are isoclines for $p$ and $q$, respectively, where their derivatives change sign. The ODEs are well defined below the black horizontal line. The orange curve is the set of initial conditions for which the solution converges to the absorbing steady state $(0,0)$. Parameter values are $\lambda_1=0.5,r=0.5$, $\mu=2$, $p_0=0.95$.}
    \label{fig:phase_diagram}
\end{figure}

\subsubsection{Robustness} 
Our restriction to $\lambda_0=0$ is useful as it isolates the analysis of good evidence disclosure by shutting down disclosure of bad evidence. That said, our results can be extended to $\lambda_0>0$. The following result gives a sufficient condition for there to continue to be an equilibrium with no disclosure of bad evidence.

\begin{proposition}\label{prop:three_phase_lambda0>0}
For all sufficiently small $\lambda_0>0$, there exists an equilibrium  with the same three-phase structure for good evidence and no disclosure of bad evidence, supported by beliefs that bad evidence disclosed off-path is fresh. 
\end{proposition}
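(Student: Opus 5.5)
The plan is a perturbation argument around the $\lambda_0=0$ equilibrium of Theorem \ref{thm:three_phase_eqbm}. For $\lambda_0>0$, I will fix a candidate profile with the following features. Good evidence follows a three-phase structure with times $\Tund(\lambda_0),\Tbar(\lambda_0)$ and hazard $\kappa_t(\lambda_0)$. Bad evidence is never disclosed. Off-path disclosure of bad evidence at any $t$ induces the belief $q^{B,t}_t=0$, i.e.\ it is interpreted as fresh. Good evidence disclosed off path in the stockpiling phase is met with neutral beliefs.

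The state vector enlarges to $(G_1,G_0,B_1,B_0,p)$. Here $B_i$ is the probability of holding bad evidence with current state $i$. The laws of motion generalize \eqref{eq:G1_ODE_mix}--\eqref{eq:p_ODE_mix}: $\lambda_0$ terms are added for switches $0\to1$, bad evidence arrives at rate $\mu$ from $1-p-B_1$ (those not already holding evidence and in state $0$), and $B$ has no disclosure outflow. The posterior after good disclosure is still $q_t=G_1/(G_1+G_0)$. The local IC \eqref{eq:sec_4_delay_IC} generalizes to $\dot q_t\ge (r+\lambda_0)q_t-(\lambda_0+\lambda_1+r)p_t+\lambda_0$ or similar, obtained by the same first-order comparison with continuation value $(q-p^*)/(\lambda_0+\lambda_1+r)$ plus the stationary part.

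\textbf{Step 1: good evidence.} I will repeat the construction of Section \ref{subsec:construction} with $\lambda_0$ as a parameter.
\begin{itemize}
\item In stockpiling, the system is linear with smooth dependence on $\lambda_0$. The time $\Tund$ at which IC first binds is then a transversal crossing (as established in the $\lambda_0=0$ proof), so it moves continuously by the implicit function theorem.
\item In purging, the reduced IVP in $(p,q)$ now also carries $B_1,B_0$, which enter $S$ and $\kappa$ smoothly. The $\delta$-equation \eqref{eq:mix_IVP_delta} acquires $O(\lambda_0)$ terms but keeps the growth rate $r+S\kappa>0$ near $\lambda_0=0$. The exit through $q=1$ at a finite $\Tbar$ should persist by continuous dependence on parameters on compact time intervals.
\item Transparency requires immediate disclosure to be optimal after $\Tbar$, i.e.\ $p_t$ below the appropriate threshold (the analogue of $\pund$). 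In the $\lambda_0=0$ case this holds with slack once $p_{\Tbar}<\pund$. But $p_t$ now drifts toward a positive level of order $\lambda_0$, so I must check that this limit stays below $(\lambda_0+r)/(\lambda_0+\lambda_1+r)$. For small $\lambda_0$ it does, since the limit is $O(\lambda_0)$ against a threshold near $r/(r+\lambda_1)>0$. Global optimality of the good-evidence strategy then follows as in Theorem \ref{thm:three_phase_eqbm}, because the sender's problem has the same single-crossing structure.
\end{itemize}

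\textbf{Step 2: no bad disclosure.} A sender holding bad evidence compares disclosing at $t$, which yields the path $q^{B,t}_u$ starting at $0$, with waiting forever, which yields $p_u$ on the equilibrium no-disclosure path. Because the off-path belief treats the evidence as fresh, disclosure gives exactly $q^{B,t}_u\le p^*\bigl(1-e^{-(\lambda_0+\lambda_1)(u-t)}\bigr)\le p^*=\lambda_0/(\lambda_0+\lambda_1)$. It therefore suffices to show $p_u\ge q^{B,t}_u$ for all $u\ge t$. This is the main obstacle, because of the no-news-is-bad-news effect in the transparency phase: there $p_u$ falls toward a limit $\underline p(\lambda_0)$.

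I will compute this limit from the stationary version of the system under transparency. Evidence is good with probability about $p$, and nondisclosure pools no-evidence types with bad-evidence holders. The stationary $p$ solves a quadratic in which the no-evidence mass and $B_1$ contribute. I expect $\underline p(\lambda_0)=c\,\lambda_0+o(\lambda_0)$ with $c\le 1/\lambda_1$ and strictly comparable to $p^*\approx\lambda_0/\lambda_1$. The hope is that $p_u>q^{B,t}_u$ holds because $q^{B,t}$ starts at $0$ and rises at the same Markov rate, while $p$ is bounded below by an order-$\lambda_0$ quantity. Concretely, I will show that the difference $p_u-q^{B,t}_u$ is initially positive, since $p_u>0=q^{B,t}_t$, and that it cannot cross zero. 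At a crossing, $\dot p-\dot q^B=-\mu p\,(\text{no-evidence share})<0$, which is the wrong sign, so this naive route fails. Hence I would instead use the discounted comparison: disclosure is unprofitable if $\int_t^\infty e^{-r(u-t)}(p_u-q^{B,t}_u)\,du\ge0$ and the same holds from every later date. With $r$ fixed and $\lambda_0\to0$, the early gap $p_u-q^{B,t}_u\approx p_u$ during the first $O(1)$ time is of order $\lambda_0$ times a constant depending on $\mu$. Any late deficit is $O(\lambda_0)$ as well, so I need the ratio to be favorable. If it is not uniformly, I would fall back on choosing the fresh interpretation together with showing that the limit $\underline p$ exceeds $q^{B}$'s discounted average, verifying this via the explicit stationary formula. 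This comparison is where I expect the real work, and where the smallness of $\lambda_0$ must be used quantitatively, not just by continuity.
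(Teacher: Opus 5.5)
Your Step 1 is a reasonable perturbation of the $\lambda_0=0$ construction, and it is consistent with the paper's remark that the construction survives a sufficiently persistent bad state. The exit through $q=1$ persists because it is transversal ($\dot q_{\Tbar-}>0$). Some details are off or missing:
\begin{itemize}
\item The delay IC is $\dot q_t\ge rq_t-(\Lambda+r)p_t+\lambda_0$ with $\Lambda=\lambda_0+\lambda_1$. This comes from the disclosure value $p^*/r+(q-p^*)/(\Lambda+r)$; there is no $\lambda_0 q_t$ term.
\item $B_1$ now enters the good-evidence arrival rate, so the purging system no longer reduces to $(p,q)$ alone.
\item In the moderate-prior case the equilibrium starts at the singular point $q_0=1$, which needs separate handling.
\end{itemize}

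The genuine gap is Step~2, which is the real content of the proposition and which you leave as a hope. The missing idea is that under the fresh interpretation, disclosing bad evidence at \emph{any} date yields the same constant value:
\[
\int_0^\infty e^{-rs}q^{B,0}_s\,ds=\frac{p^*}{r}-\frac{p^*}{\Lambda+r}=\frac{\lambda_0}{r(\Lambda+r)}.
\]
Your criterion is correct, and given this it says exactly that the $r$-discounted average of $p$ from every date must be at least $\lambda_0/(\Lambda+r)$. It therefore suffices that $p_u\ge\lambda_0/(\Lambda+r)$ for all $u$; this is the local IC in the proof of Lemma~\ref{lem:no_timestamps_suff_cond_no_bad_disc} with $q^B\equiv0$. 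This threshold sits a factor $\Lambda/(\Lambda+r)$ below $p^*$. That is why pointwise comparison with $q^{B,t}_u\uparrow p^*$ was the wrong target.

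Your plan for the lower bound on $p$ also fails as written:
\begin{itemize}
\item At early dates $p$ is of order one, not $\lambda_0$.
\item The long-run level of $p$ is $p^*$ itself, not the root of a quasi-stationary quadratic. The no-evidence mass $e^{-\mu t}$ vanishes, so the nondisclosure pool becomes old bad evidence.
\item If you compute that root by dropping $B_1$, as Lemma~\ref{lem:no_timestamps_suff_cond_no_bad_disc} does, you get roughly $\lambda_0/(\lambda_1+\mu)$. This clears $\lambda_0/(\lambda_1+r)$ only when (approximately) $\mu<r$. The proposition has no such restriction, and the parameters of Figure~\ref{fig:three_regions} have $\mu>r$.
\end{itemize}

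What closes the gap is to keep the bad-evidence stock explicit. Define
\[
\hat p_t:=\frac{e^{-\mu t}\phi_t+\Bun_{1,t}}{e^{-\mu t}+\Bun_t},
\]
with $\Bun_{1,t},\Bun_t$ as in \eqref{eq:B1un_closed} and \eqref{eq:Bun_closed}. Then $\hat p_t\le\phi_t$. The actual belief $p_t$ mixes $\hat p_t$ with undisclosed good evidence whose posterior is $q_t\ge q^{G,0}_t>\phi_t$, so $p_t\ge\hat p_t$. A direct computation gives
\[
\hat p_t=p^*-p^*\frac{I_t}{D_t}+(p_0-p^*)\frac{e^{-(\mu+\Lambda)t}}{D_t},\qquad I_t:=\int_0^t\mu e^{-\mu s}(1-\phi_s)e^{-\Lambda(t-s)}\,ds,\quad D_t:=e^{-\mu t}+\Bun_t .
\]
Once $\lambda_0$ is small enough that $p^*<p_0$, we have $D_t\ge 1-p_0$. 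Moreover, $\hat p_t\ge\lambda_0/(\Lambda+r)$ if and only if
\[
(p_0-p^*)e^{-(\mu+\Lambda)t}\ge p^*\bigl(I_t-\tfrac{r}{\Lambda+r}D_t\bigr).
\]
Now bound the two sides:
\begin{itemize}
\item Since $I_t\le\int_0^t\mu e^{-\mu s}e^{-\lambda_1(t-s)}\,ds\to0$, the right side is nonpositive for all $t\ge T_1$, with $T_1$ independent of small $\lambda_0$.
\item On $[0,T_1]$ the left side is bounded away from zero, while the right side is at most $p^*=O(\lambda_0)$.
\end{itemize}
Hence $p_t\ge\lambda_0/(\Lambda+r)$ at all dates once $\lambda_0$ is small, for every $\mu$. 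Early on, $p$ is of order one; late, the pool is old bad evidence with posterior near $p^*$. This is the quantitative use of small $\lambda_0$ that your Step~2 was missing.
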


\subsection{Bad evidence: delay and disclosure bursts}\label{subsec:no_timestamps_bad}
Throughout this section, we assume the bad state is not absorbing, $\lambda_0>0$, which is necessary for bad evidence disclosure to ever be optimal. We focus on equilibria with (i) age-independent disclosure of bad evidence, and (ii) immediate disclosure of good evidence. Property (i) is natural because timestamps are payoff-irrelevant to the sender, as in the three-phase equilibrium. Property (ii) serves two purposes: first, to isolate the effect of timestamps on bad evidence disclosure, as good evidence is also disclosed immediately in the timestamps environment, and second, to isolate the effect of parameter changes on bad evidence disclosure incentives within the no-timestamps environment by holding good evidence disclosure fixed. To accommodate (ii), we focus on large discount rates, under which, as we verify, immediate disclosure of good evidence is indeed optimal whenever it is conjectured.

In choosing whether and when to disclose bad evidence, the sender faces a trade-off between a long-run benefit and a short-run loss. The benefit is the same as with timestamps: because evidence arrives only once, disclosure eliminates the negative inference from not disclosing good evidence. Without timestamps, however, the sender cannot prove that old bad evidence is indeed old and wait until disclosure is myopically beneficial, so the receiver averages over possible arrival dates upon disclosure of bad evidence. With age-independent disclosure, disclosing bad evidence unambiguously and immediately decreases the receiver’s belief about the state. Bad evidence disclosure proves that the sender had bad evidence, and the receiver's distribution over arrival dates simply scales up proportionally to account for the mass that was previously assigned to the probability that no evidence has arrived. 

Thus, the sender's optimal disclosure timing depends critically on his discount rate. This is in stark contrast to the timestamps environment, where the equilibrium is independent of the discount rate. The following result says that this contrast can become extreme: by taking $r$ sufficiently large, bad evidence can only be disclosed after an arbitrarily long delay.

\begin{proposition}\label{prop:no_ts_long_delay_bad}
For any $T>0$, if $r$ is sufficiently large, then in all age-independent equilibria with immediate good evidence disclosure, bad evidence disclosure has zero probability in $[0,T]$.
\end{proposition}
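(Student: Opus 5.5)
Suppose, toward a contradiction, that in some age-independent equilibrium with immediate good-evidence disclosure, bad evidence is disclosed with positive probability in $[0,T]$. The plan is to show that when $r$ is large the sender's payoff is dominated by the first instants after any candidate disclosure date, and that disclosing bad evidence during $[0,T]$ strictly lowers the flow payoff in that window by a margin bounded away from zero, uniformly in $r$. Since the window has length of order $1/r$, this loss outweighs any long-run gain from escaping the ``no news is bad news'' inference. To make this work, I would first get uniform bounds on beliefs over $[0,T]$ that hold in every such equilibrium, and then compare disclosure at $t$ with continued nondisclosure over $[t,t+\varepsilon]$.

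\textbf{Step 1: the belief after disclosure is bounded away below the nondisclosure belief.} Under age-independence, the receiver's posterior over arrival dates after bad evidence is disclosed at $t$ is proportional to her pre-disclosure distribution over undisclosed bad evidence, as noted in the text. So the post-disclosure belief is $q^B_t=B_{1,t}/(B_{0,t}+B_{1,t})$, where the stocks $B_{i,t}$ are defined analogously to $G_{i,t}$. The nondisclosure belief $p_t$ is a mixture of $q^B_t$ and the belief of the no-evidence pool, namely $\phi$ reweighted by the absence of good evidence. On $[0,T+1]$ I expect an explicit lower bound $p_t-q^B_t\ge c>0$ that depends only on $(p_0,\lambda_0,\lambda_1,\mu,T)$. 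The reason is that the no-evidence pool has mass at least $e^{-\mu(T+1)}$, and its state belief is bounded away from the bad-evidence belief, because bad evidence of age at most $T+1$ induces belief at most $q^{B,0}_{T+1}<p^*$. In the other direction, the no-evidence belief satisfies a lower bound from the ODE $\dot p=\lambda_0(1-p)-\lambda_1p-\mu p(1-p)$-type dynamics, and this bound does not depend on $r$ or on the disclosure strategy.

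\textbf{Step 2: the deviation comparison.} If bad evidence is disclosed at some date $t\le T$ on path, then disclosing at $t$ must be weakly better than waiting until $t+\varepsilon$. Both disclosure and nondisclosure beliefs move at bounded rates on $[t,t+\varepsilon]$; the bound on rates is uniform, with the caveat about atoms handled in Step 3. So the flow loss from disclosing immediately is at least $c/2$ on an interval of length $\varepsilon_0$ independent of $r$, giving a discounted loss of order $(c/2)(1-e^{-r\varepsilon_0})/r$. The gain after $t+\varepsilon_0$ is at most $e^{-r\varepsilon_0}/r$, since beliefs lie in $[0,1]$. Taking $r$ large makes the loss dominate, which contradicts optimality of disclosure at $t$. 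The cleaner version is to compare disclosure at $t$ with disclosure at $t+\varepsilon_0$ and reuse the Step 1 bound at $t+\varepsilon_0\le T+1$.

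\textbf{Step 3: the main obstacle, atoms.} The hard part is that $p$ may jump at dates with atoms of bad-evidence disclosure, and the post-disclosure belief $q^B$ at a later date may then be higher. My first attempt is to note that after an atom of disclosure the remaining bad-evidence stock is fresher, so $q^B$ falls rather than rises. Failing that, I would bound $q^B_{t'}$ directly for every $t'\le T+1$: because bad evidence of age at most $T+1$ gives belief at most $q^{B,0}_{T+1}$, every pooled average of such evidence is also at most this value. Meanwhile $p_{t'}$ stays bounded below uniformly. With these two bounds the comparison in Step 2 goes through regardless of jumps.
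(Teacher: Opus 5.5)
Your overall architecture matches the paper's. For large $r$, the comparison between disclosing bad evidence at $t$ and delaying is governed by the belief gap in the first instants. Your Step 1 mixture argument, with no-evidence mass at least $e^{-\mu t}$ in the nondisclosure pool, is exactly how the paper bounds $p_t-q^B_t$ away from zero.

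Step 1 needs a small repair. The no-evidence pool's belief is exactly $\phi_t$, because arrival is independent of the state, so no ODE bound enters. The comparison must also be made pointwise in time: $\bar q^B_t\le q^{B,0}_t$ and $\phi_t-q^{B,0}_t=p_0e^{-\Lambda t}$, with $\Lambda:=\lambda_0+\lambda_1$. Comparing with $q^{B,0}_{T+1}<p^*$ gives nothing when $p_0<p^*$, since $\phi_t$ can then lie below $q^{B,0}_{T+1}$. At an atom at $t$ the post-disclosure belief is $\bar q^B_{t-}$, so apply the mixture identity at $t-$ and use $p_t\ge p_{t-}$, which follows from the one-sided bound below.

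The genuine gap is at the step you flag as the main obstacle. Step 2 needs $p_{t+s}\ge p^*-(p^*-q^B_t)e^{-\Lambda s}+c/2$ on a window of $r$-independent length. Your justification, that both beliefs move at bounded rates, is false for $p$: bad-evidence disclosure at a high hazard, or in an atom, moves $p$ arbitrarily fast. Neither remedy in Step 3 fixes this.
\begin{itemize}
\item Later values of $q^B$ never enter the comparison. Everything after the window is bounded by $e^{-r\varepsilon_0}/r$ whatever the off-path beliefs.
\item Under age-independence, a partial atom thins the bad-evidence pool proportionally; it does not make the pool fresher.
\item The fallback fails because a uniform cap $q^{B,0}_{T+1}$ on post-disclosure beliefs and a uniform floor on $p$ need not be ordered. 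Indeed $p_{t'}$ can lie below even $q^{B,0}_{t'}$: with no bad disclosure and $\mu$ large, $p_{t'}$ is exponentially close to $\bar q^B_{t'}$, which sits a polynomially small amount below $q^{B,0}_{t'}$.
\end{itemize}

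The missing idea is directional. By the mixture identity, the bad-evidence sub-pool has average belief $\bar q^B<p$. So under age-independence any bad disclosure, whether a flow or an atom, can only raise $p$. The only downward forces are switching and immediately disclosed good evidence, which give $p_{t+s}\ge p_t-(\lambda_1+\mu)s$ (Lemma \ref{lem:long_delay_drift_bound}). Combined with $p^*-(p^*-q^B_t)e^{-\Lambda s}\le q^B_t+\Lambda p^*s$, this yields your window bound. Step 2 then goes through exactly as in the paper.
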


It is worth noting that although bad evidence is withheld, the Minimum Principle (Proposition \ref{prop:min_princ}) says that the receiver is no more pessimistic following nondisclosure than in the timestamps environment. This may seem surprising at first because withholding pools bad-evidence types with the no-evidence type. The reason is that, with timestamps, older, and therefore more favorable, bad evidence is disclosed and removed from the nondisclosure pool; without timestamps, such older evidence remains in the pool and raises the average belief.

Proposition \ref{prop:no_ts_long_delay_bad} does not specify how or whether bad evidence is disclosed after time $T$. The answer depends on the receiver's belief following an off-path disclosure of bad evidence. Under belief threats that specify any such disclosure must be fresh, there exists an equilibrium where the sender never discloses bad evidence at any time if $r$ is sufficiently large. Under neutral beliefs,\footnote{Neutral belief is defined analogously to Definition \ref{def:three_phase_eqbm}. It is the probability the state is good conditional on the sender possessing bad evidence. } there is no such equilibrium: at some point, the sender strictly prefers disclosing to never disclosing. The following lemma formalizes this distinction.

\begin{lemma}\label{lem:bad_eventually_disclosed}
Under neutral beliefs after off-path disclosure of bad evidence, there is no equilibrium in which good evidence is immediately disclosed and bad evidence is never disclosed. Under threat beliefs, such an equilibrium exists if $r$ is sufficiently large.
\end{lemma}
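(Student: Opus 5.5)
The plan is to treat the two claims of Lemma \ref{lem:bad_eventually_disclosed} separately. In both we fix the conjecture that good evidence is disclosed immediately and bad evidence is never disclosed. Under this conjecture the no-disclosure belief $p_t$ is pinned down in closed form. Nondisclosure pools the no-evidence type with all bad-evidence types, so writing $B_{i,t}$ for the joint probability of holding bad evidence with current state $i$,
\begin{align*}
p_t=\frac{e^{-\mu t}\phi_t+\int_0^t \mu e^{-\mu\tau}(1-\phi_\tau)q^{B,\tau}_t\,d\tau}{e^{-\mu t}+\int_0^t\mu e^{-\mu\tau}(1-\phi_\tau)\,d\tau}.
\end{align*}
This is $p^\dagger_{t,0}$ from \eqref{eq:p_dagger} with no bad evidence removed. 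Its limit as $t\to\infty$ is
\begin{align*}
p_\infty=\int_0^\infty w(\tau)\,q^{B,\tau}_\infty\,d\tau=p^*\cdot(\text{conditional weight}) ,
\end{align*}
which makes clear that nondisclosure is eventually dominated by bad-evidence types of all ages. The limiting belief is therefore a weighted average of $q^{B,\tau}_t$ over arrival dates weighted by $\mu e^{-\mu\tau}(1-\phi_\tau)$.

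\textbf{Neutral beliefs.} Under neutral beliefs, an off-path disclosure of bad evidence at $t$ yields the probability that the state is good conditional on the sender holding bad evidence, namely
\begin{align*}
n_t=\frac{\int_0^t \mu e^{-\mu\tau}(1-\phi_\tau)q^{B,\tau}_t\,d\tau}{\int_0^t\mu e^{-\mu\tau}(1-\phi_\tau)\,d\tau}.
\end{align*}
The key observation is that $p_t$ is a convex combination of $n_t$ and $\phi_t$, with weight on $\phi_t$ proportional to $e^{-\mu t}$. Moreover, after disclosure the receiver's belief evolves by pure Markov switching from $n_t$.

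Comparison then goes as follows. As $t\to\infty$, the weight on old evidence makes $n_t\to$ some limit $n_\infty<p^*$ (the ages concentrate at large values, so $q^{B,\tau}_t\approx p^*$ for most mass). Indeed $n_t\to p^*$, since the mass of recent arrivals is $O(1-e^{-\mu a})$ relative to $\int_0^t$, which goes to $1-\phi$-weighted total mass, and the ages diverge. The nondisclosure path, however, is that same average but never revised upward by disclosure.

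I will instead compare continuation values directly. After disclosure at $t$ the belief path is $u\mapsto p^*+(n_t-p^*)e^{-(\lambda_0+\lambda_1)(u-t)}$. Never disclosing yields $p_u$ forever, and for large $u$ we have $p_u\approx n_u$. The required strict inequality is
\begin{align*}
p^*+(n_t-p^*)e^{-(\lambda_0+\lambda_1)(u-t)}>n_u
\end{align*}
for all $u\ge t$, or at least in discounted integral. This holds for large $t$ because $n_u$ differs from $p^*$ by a term whose dominant part is recent-arrival mass of order a constant $c=\int_0^\infty\mu e^{-\mu a}p^*e^{-(\lambda_0+\lambda_1)a}da/(\cdot)$, which stays bounded away from zero. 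Meanwhile the disclosure path converges to $p^*$. So disclosure at large $t$ gives belief $n_t$ now, which is close to $n_\infty$, and thereafter strictly more, since the post-disclosure path rises to $p^*$ while nondisclosure stays near $n_\infty<p^*$.

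The instantaneous cost $p_t-n_t$ is of order $e^{-\mu t}$ and vanishes. The long-run gain is bounded below by $\int_0^\infty e^{-ru}(p^*-n_\infty)(1-e^{-(\lambda_0+\lambda_1)u})\,du>0$. Hence for large $t$ the deviation is strictly profitable, giving a contradiction. The main obstacle is handling the limits carefully: showing $n_\infty<p^*$ strictly, with a uniform gap that dominates the vanishing flow cost.

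\textbf{Threat beliefs.} Under threat beliefs, off-path bad disclosure at $t$ is treated as fresh, giving belief $0$, after which the belief follows $q^{B,t}_u$. Never disclosing yields $p_u\ge q^{B,t}_u$ near $u=t$. The relevant gap is
\begin{align*}
\int_t^\infty e^{-r(u-t)}\bigl(q^{B,t}_u-p_u\bigr)\,du .
\end{align*}
As $r\to\infty$ this is dominated by the initial interval, where $p_u-q^{B,t}_u\ge p_t-O(u-t)$. We also need a uniform positive lower bound on $p_t$ in $t$, which holds since $p_t$ is continuous and positive and converges to a positive limit. With this, for $r$ large no disclosure is optimal at every $t$.

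It then remains to verify that immediate good-evidence disclosure is optimal for large $r$. The receiver's belief after a good disclosure is $q^{G,\tau}$ under the fresh-or-neutral belief, and delay loses $1-p_t$ of flow, which dominates for large $r$, as in the argument behind Proposition \ref{prop:no_timestamps_necessary_delay}.
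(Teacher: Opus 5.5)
Your threat-belief half is sound, and simpler than the paper's route. The conjectured strategy does not depend on $r$, so neither does $p_t$, and $\inf_t p_t>0$. Your integral comparison then shows that never disclosing bad evidence and immediately disclosing good evidence are optimal once $r$ is large. The paper instead checks the equivalent bounds $\lambda_0/(\Lambda+r)<p_t<(\lambda_0+r)/(\Lambda+r)$ through an ODE comparison under explicit conditions on $p_0$ and $\mu$.

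The neutral-belief half, however, rests on a false claim. Arrivals have density $\mu e^{-\mu s}$, so the mass of bad evidence that arrived within the last $a$ units of time is $O(e^{-\mu t})$, not a constant. Hence $n_t=\bar q^B_t\to p^*$, as you note in passing before contradicting it, and $p_t\to p^*$ too. There is no $n_\infty<p^*$, so the ``uniform gap'' you name as the main obstacle does not exist. Your lower bound $\int_0^\infty e^{-ru}(p^*-n_\infty)(1-e^{-\Lambda u})\,du$ is therefore zero, and the claim that nondisclosure ``stays near $n_\infty<p^*$'' is false. Both the gain and the loss from disclosing vanish as $t\to\infty$. Bounding the loss merely by $O(e^{-\mu t})$ is too coarse, because the gain is itself only of order $e^{-\mu t}$.

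The missing idea is a rate comparison. The loss is $p_t-\bar q^B_t=\frac{N^{\mathrm{un}}_t}{N^{\mathrm{un}}_t+B^{\mathrm{un}}_t}(\phi_t-\bar q^B_t)$, which is $e^{-\mu t}$ times a factor $\phi_t-\bar q^B_t\to 0$. The gain from disclosing now is that waiting lets fresh bad evidence dilute the pool. This depresses the neutral belief at rate $\bar q^B_t\dot B^{\mathrm{un}}_t/B^{\mathrm{un}}_t\approx p^*\mu(1-p^*)e^{-\mu t}/B^{\mathrm{un}}_\infty$, which is exactly of order $e^{-\mu t}$. The paper writes the local delay IC as $\bar q^B_t\dot B^{\mathrm{un}}_t/B^{\mathrm{un}}_t\le(\Lambda+r)(p_t-\bar q^B_t)$. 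It then shows the ratio of the left side to the right side diverges because $\phi_t-\bar q^B_t\to 0$. So local IC fails at every sufficiently large date, and disclosure at such a $t$ strictly beats every later disclosure time, including never. Your direct comparison of continuation values could be repaired, but only by proving this $o(e^{-\mu t})$ versus order-$e^{-\mu t}$ comparison uniformly over the post-disclosure horizon, not via a gap bounded away from zero.
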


Intuitively, following an off-path disclosure of bad evidence, neutral beliefs specify that the receiver averages over possible arrival dates. Over time, she assigns more weight to older, less informative evidence, so the neutral belief converges to $p^*$. The nondisclosure belief also converges to $p^*$, but remains higher because evidence may not have arrived. Thus, late disclosure causes little belief loss while eliminating future negative inference from the nondisclosure of good evidence, eventually making disclosure profitable.\footnote{Formally, both the benefit and loss vanish as $t\to\infty$, but the loss vanishes faster, as shown in the proof.}

Focusing on neutral beliefs, although Lemma \ref{lem:bad_eventually_disclosed} implies eventual disclosure of bad evidence, it does not specify how disclosure occurs. We first show that immediate disclosure of bad evidence from some point onward cannot be an equilibrium. If all bad evidence were disclosed immediately starting at some $T$, any disclosure after $T$ would be interpreted as fresh. The receiver's belief would drop to $0$ and then evolve according to the ``no news is no news'' Markov switching dynamics. By not disclosing, the receiver's belief stays positive and still evolves according to those same dynamics: the receiver expects both good and bad evidence to be disclosed immediately, nondisclosure means that none has arrived. Disclosure would result not merely in a lower expected payoff but a lower flow payoff at all future times.

\begin{lemma}\label{lem:no_bad_immediate}
    There is no equilibrium in which good evidence is always disclosed immediately and for some $T\geq 0$ any bad evidence possessed at $T$ or later is disclosed immediately.  
\end{lemma}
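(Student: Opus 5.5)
We argue by contradiction. Suppose there is an equilibrium in which good evidence is always disclosed immediately and, for some $T\ge 0$, every piece of bad evidence the sender holds at $T$ or later is disclosed immediately. We then exhibit a profitable deviation for a sender who holds bad evidence at some time $t>T$.

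\emph{Step 1: beliefs after $T$.} Fix any $t>T$. We first show that disclosure of bad evidence at $t$ is interpreted as fresh. Under the conjecture, all bad evidence possessed at $T$ is disclosed at $T$, so no bad evidence with arrival date before $t$ remains undisclosed at $t$. Disclosure of bad evidence at $t$ is on path: the only types disclosing then are those with $\tau=t$. Bayes' rule therefore gives the post-disclosure belief $0$ at $t$, followed by $q^{B,t}_u$ for $u\ge t$.

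Next consider the no-disclosure belief $p_u$ for $u\ge t$. Because all evidence is disclosed immediately from $T$ on (good always, bad from $T$), on the no-disclosure history after $T$ the receiver knows that no evidence is held at $u$. Since $\tau$ is independent of the state process, conditioning on ``no evidence arrived by $u$'' given the state at time $T$ leaves the state's law unchanged. Hence $p$ evolves on $(T,\infty)$ by the pure Markov-switching ODE $\dot p_u=\lambda_0(1-p_u)-\lambda_1 p_u$. The starting value $p_T$ (after the time-$T$ burst) is some number in $(0,1]$, and it is strictly positive because $\phi_T>0$ and no disclosure is possible with positive probability.

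\emph{Step 2: comparison.} Take a sender who obtains bad evidence at $t>T$. If he never discloses, he faces the no-disclosure path $p_u$ for $u\ge t$ until the end. We should check that nothing else can happen to him: since he already holds his one piece of evidence, the continuation belief path is exactly $p$ as long as he withholds. If instead he discloses at $t$, he faces $q^{B,t}_u$. Both paths solve the same linear ODE, with initial conditions $p_t>0=q^{B,t}_t$. By uniqueness and the comparison principle, $p_u-q^{B,t}_u=p_te^{-(\lambda_0+\lambda_1)(u-t)}>0$ for all $u\ge t$. Permanent withholding therefore yields a strictly higher discounted payoff, so immediate disclosure is not optimal for a positive-measure set of bad-evidence types. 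This contradicts equilibrium. It also covers a case where $\lambda_0=0$, since then $q^B\equiv 0$ and the comparison is immediate.

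\emph{Main obstacle.} The main obstacle is the careful justification in Step 1 that $p_t>0$ and that the no-disclosure conditional law after $T$ is exactly the Markov law. This requires handling measure-zero and atom issues at $T$, and confirming that the sender's own continuation (no further arrivals) does not alter the comparison. I would treat the time-$T$ burst explicitly via Bayes' rule and invoke independence of $\tau$ and $\theta$ to argue that the hazard of evidence arrival does not depend on the state. Consequently conditioning on non-arrival leaves the state distribution propagating as $\phi$ does from $p_T$.
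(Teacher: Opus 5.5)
Your proof is correct and follows essentially the same route as the paper. After $T$, bad evidence disclosed at $t$ is read as fresh, so the post-disclosure path is $q^{B,t}$ starting from $0$; nondisclosure is exactly the event $\{\tau>t\}$, so $p_t=\phi_t>0$ (the paper states this identity directly, which removes your ``main obstacle''); and the two strictly ordered solutions of the same Markov-switching ODE make permanent withholding strictly better.
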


The preceding results point toward equilibria in which bad evidence is withheld initially but later released in an \textit{intermittent} fashion. Our main result for bad evidence, Theorem \ref{thm:stationary_bursts}, establishes the existence of an equilibrium with precisely these features: bad evidence is disclosed with delay along an infinite sequence of isolated \emph{bursts}. To simplify the characterization, we use a stationary prior $p_0=p^*$, which admits a periodic equilibrium.

\begin{theorem}[Bad evidence in isolated bursts]\label{thm:stationary_bursts}
    Assume $p_0=p^*$. Then if $r$ is sufficiently large, there exists an equilibrium in which good evidence is disclosed immediately and bad evidence is disclosed at the next time in the set $\{0,\Delta,2\Delta,\dots\}$, where $\Delta>0$. Neutral off-path beliefs are sufficient to support this equilibrium.
\end{theorem}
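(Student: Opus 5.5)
The plan is to build the candidate equilibrium explicitly, using the stationarity of the prior to reduce it to one period of length $\Delta$, and then to check incentives in a single cycle. Under $p_0=p^*$, with good evidence disclosed immediately and bad evidence disclosed at the next grid point $k\Delta$, the nondisclosure pool at $k\Delta$ (just after the burst) contains only the no-evidence type. Its belief is the stationary $p^*$, since conditioning on no arrival is independent of the state when $p_0=p^*$. Every cycle therefore restarts from the same conditional distribution, and all belief paths are $\Delta$-periodic.

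On $u=t-k\Delta\in[0,\Delta)$ the pool contains two groups: the no-evidence type, and senders holding bad evidence that arrived at some $k\Delta+v$ with $v\le u$. Like the construction of $p^\dagger$ in Section~\ref{sec:timestamps}, the nondisclosure belief has the closed form
\[
p_u=\frac{e^{-\mu u}p^*+\int_0^u\mu e^{-\mu v}(1-p^*)q^{B,0}_{u-v}\,dv}{e^{-\mu u}+\int_0^u\mu e^{-\mu v}(1-p^*)\,dv}.
\]
On path, disclosure of bad evidence at $(k+1)\Delta$ induces the posterior obtained by averaging $q^{B,0}_{\Delta-v}$ over the arrival density on $[0,\Delta]$; call it $\beta(\Delta)$. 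For off-path disclosure at interior times $u\in(0,\Delta)$ I use neutral beliefs: the receiver averages over the current bad-evidence arrival distribution, giving $\beta(u)$. After any disclosure the belief simply evolves by Markov switching. Good-evidence disclosure yields posterior $1$, because any good evidence in the pool would have been disclosed already.

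Incentives have to be verified in three places. (a) \emph{Immediate good disclosure.} Here I take $r$ large: the payoff then approximates flow comparison, and since $1>p_u$ at every instant, disclosing good evidence immediately beats any delay; the continuation after disclosure is the same Markov path no matter when disclosure happens. (b) \emph{Bad evidence waits until the next grid point.} Disclosing at $u<\Delta$ replaces the no-disclosure path from $u$ to $\Delta$ with the path starting at $\beta(u)$. Since $\beta(u)<p_u$ (age-independent averaging puts weight on fresh evidence) and $r$ is large, the cost on $[u,\Delta)$ dominates, so waiting to $\Delta$ is preferred. (c) \emph{Disclosure at the grid point rather than delay to a later one.} The sender compares disclosing at $\Delta$, which gives the path $\beta(\Delta)$ followed by Markov drift toward $p^*$, with staying silent and riding $p$. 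Staying silent resets the pool to include fresh bad types, so the belief drops toward the fresh-bad mixture in the next cycle, and future disclosure still gives only $\beta(\Delta)$ evaluated again. The requirement is roughly that $\beta(\Delta)$ is at least the discounted-average nondisclosure belief over the next stretch, which by stationarity reduces to a scalar inequality in $\Delta$ and $r$.

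The main obstacle is choosing $\Delta=\Delta(r)$ so that (b) and (c) hold simultaneously, and this is where Proposition~\ref{prop:no_ts_long_delay_bad} suggests $\Delta$ must grow with $r$. My plan is to let $\Delta\to\infty$ as $r\to\infty$ at a suitable rate. In that limit $\beta(\Delta)\to p^*$, because the arrival distribution concentrates on old evidence. Meanwhile $p_u$, just after the reset, sits at essentially the fresh-pool level. Continuing not to disclose keeps the sender in a pool whose belief is bounded below $p^*$ by an amount bounded away from $0$ (the no-news-is-bad-news effect), while disclosure gives $\beta(\Delta)$ close to $p^*$ and afterward drifts up to $p^*$. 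So for large $\Delta$ disclosure at the grid point is strictly optimal, with the gain measured over the next horizon of order $1/r$. For (b), I must show that $p_u-\beta(u)$ stays bounded below on $[0,\Delta)$ relative to the time remaining until $\Delta$, discounted at rate $r$. I would first try choosing $\Delta$ as the value where the myopic comparison $\beta(\Delta)=p_\Delta$ holds up to an $O(1/r)$ correction, and confirm the sign conditions by expanding the value difference in powers of $1/r$.
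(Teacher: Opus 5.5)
Your decomposition matches the paper's: periodic restart at $p^*$, neutral posterior $\beta(u)=\bar q^B_u$, and your condition (c) is exactly the participation inequality $\Gamma(\Delta)\ge0$ in \eqref{eq:burst_participation}. Your conclusion in (a) is right, but not for the reason you give. Without timestamps the post-disclosure path does depend on the disclosure date, because late good evidence is read as fresh. The correct argument is that the annuity value of disclosing good evidence, $p^*+\frac{r(1-p^*)}{\Lambda+r}$, exceeds $p^*\ge p_t$, and this holds for every $r$.

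The first genuine gap is (b). You compare the two paths only on $[u,\Delta)$. But disclosing at $u$ also gives a strictly higher path after $\Delta$. The early neutral posterior drifts at the Markov rate and reaches $\phi_{\Delta-u}(\beta(u))$ at $\Delta$. By contrast, $\beta(\Delta)$ is diluted by the fresher arrivals in $(u,\Delta]$, which is the term $-\bar q^B_t\dBun_t/\Bun_t$ in its law of motion. As $u\uparrow\Delta$ the cost window vanishes, so large $r$ does not make the cost dominate. The binding comparison is the local condition \eqref{eq:local_IC_eventual_bad_disc2}, $(\Lambda+r)(p_t-\bar q^B_t)\ge\bar q^B_t\dBun_t/\Bun_t$. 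Since $p_t-\bar q^B_t=\frac{\Nun_t}{\Nun_t+\Bun_t}(p^*-\bar q^B_t)$ decays faster than the dilution rate, this condition fails for large $t$ at every $r$, as in the proof of Lemma~\ref{lem:bad_eventually_disclosed}. So (b) imposes a hard upper bound $\Delta\le\Dr<\infty$.

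The second gap is (c). Your claim that silence keeps the belief a bounded distance below $p^*$ is false on the relevant horizon of order $1/r$. Right after a burst $p=p^*$. Writing $y_s:=p^*-p_s$, one has $y_s\approx\dot y_0 s$ with $\dot y_0=\mu p^*(1-p^*)$, so silence costs only about $\dot y_0/r^2$.

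Now compare the two constraints for large arguments. The local condition in (b) reads approximately $r(p^*-\bar q^B_t)\ge\dot y_0$. The participation condition $\Gamma(\Delta)\ge0$ reads approximately $r(p^*-\bar q^B_\Delta)\le\dot y_0$. The constraints therefore coincide at leading order, and at $\Delta=\Dr$ the $r^{-2}$ terms of $\Gamma$ cancel exactly ($r^2\Gamma(\Dr)\to0$). There is no slack to exploit in choosing a ``suitable rate.''

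What is missing is the paper's key step. Set $\Delta=\Dr$ and prove $\lim_{r\to\infty}r^3\Gamma(\Dr)>0$. This uses two expansions:
\begin{itemize}
\item $\int_0^{\Dr}e^{-rs}y_s\,ds=\dot y_0/r^2+\ddot y_0/r^3+o(r^{-3})$;
\item a second-order expansion of $r(p^*-\bar q^B_{\Dr})$, obtained from the equality case of \eqref{eq:local_IC_eventual_bad_disc2}; the term $\max\{\Lambda-\mu,0\}$ there comes from $re^{-\mu\Dr}$.
\end{itemize}
Your selection rule (``$\beta(\Delta)=p_\Delta$ up to $O(1/r)$'') cannot substitute for this. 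Because $p_\Delta-\beta(\Delta)$ decays exponentially in $\Delta$, the rule admits a whole half-line of $\Delta$'s. Admissibility, however, forces $r(p^*-\beta(\Delta))\to\dot y_0$.
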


The equilibrium in Theorem \ref{thm:stationary_bursts} exhibits periodic behavior illustrated in Figure \ref{fig:equilibrium beliefs burst}. Given $p_0=p^*$, the no-information belief $\phi_t$ remains constant at $p^*$. In between bursts, only good evidence is disclosed, so no news is bad news, causing the no-disclosure belief to decay, but this decay can be offset eventually by upward mean-reversion due to the hidden Markov switching, resulting in the hook shape. Since each burst results in the disclosure of any possessed evidence, the belief resets to $p^*$. Meanwhile, the neutral off-path belief resets to $0$ after each burst due to the empty stockpile, and it rises as the stock of bad evidence ages.

The existence of such an equilibrium rests on two key incentive constraints. The first  is that in between bursts, if the sender has bad evidence, he must be willing to wait until the next burst; this constraint places an upper bound on $\Delta$, denoted $\Dr$. The second constraint is that the sender must be willing to ``participate'' in any given burst rather than wait until the next one; this constraint rules out sufficiently small $\Delta$, and the interval of such $\Delta$ becomes arbitrarily large as $r\to \infty$ by Proposition \ref{prop:no_ts_long_delay_bad}. However, as $r\to \infty$, we also have $\Dr\to\infty$. The proof shows that the two constraints are satisfied when $r$ is large by showing the participation constraint is slack at $\Dr$.

\begin{figure}[ht!]
\centering
\includegraphics[scale=1]{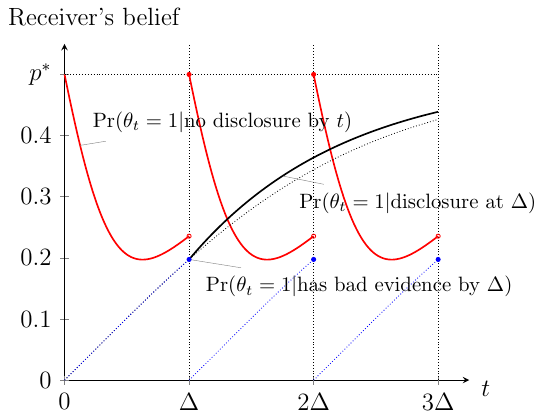}

\caption{The receiver’s beliefs in a periodic burst equilibrium. The red curve shows the no-disclosure belief, the dotted blue curves show off-path beliefs following disclosure between bursts, and the solid black curve shows the belief following disclosure at the burst time \(\Delta\). The dotted black curve gives the counterfactual continuation of the neutral belief after $\Delta$ (under a conjecture of no bad evidence disclosure) for contrast. The  parameters are $\lambda_{0}=1,\lambda_{1}=1,\mu=10,p_0=p^*=1/2$, $r=5$, and $\Delta=0.4$.
}
\label{fig:equilibrium beliefs burst}
\end{figure}

We make two remarks about the burst equilibrium construction. First, the stationary prior gives a clean characterization, but the intermittent burst phenomenon is not knife-edge. If $p_0\ne p^*$, the no-disclosure belief resets to $\phi_t$ at each burst, which is no longer constant, and thus  bursts need not be equally spaced. Nevertheless, we conjecture that the burst structure persists. Second, under neutral beliefs, any equilibrium with disclosure in isolated bursts must feature infinitely many bursts. Immediately after each burst, the evidence stock is empty and the game effectively restarts with a different prior. Lemma \ref{lem:bad_eventually_disclosed} then implies that no burst can be the last one.

In summary, with timestamps, every piece of bad evidence is disclosed in finite time, with timing independent of $r$. Without timestamps, disclosure can be postponed beyond any fixed horizon if $r$ is sufficiently large. The burst equilibrium illustrates a natural way disclosure may occur. Timestamps allow the sender to verify that bad evidence is old, whereas their absence can generate much longer delays and pooling across evidence ages.

\section{Discussion}\label{sec:discussion}
\subsection{Welfare}\label{subsec:welfare}
We first show that the sender’s ex ante expected payoff is the same under any disclosure strategy, as long as the receiver’s conjecture is correct. Thus, the sender cannot benefit from commitment. The lemma below follows directly from the law of iterated expectations.

\begin{lemma}\label{lem:sender_indifference}
    Fix parameter values. Fix any strategy of the sender, and let $(\pi_t)_{t\geq 0}$ denote the receiver's posterior belief process when that strategy is followed and the receiver's conjecture is correct. Then for all $t\geq 0$, $\mathbb{E}[\pi_t]=\phi_t=p^{*}+(p_0-p^{*})e^{-(\lambda_0+\lambda_1)t}$, which is independent of the sender's strategy. Thus, the sender's expected payoff is $\int_0^\infty e^{-rt}\phi_t\,dt$.
\end{lemma}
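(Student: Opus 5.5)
The argument is the law of iterated expectations applied to the receiver's posterior at each fixed date, followed by Fubini to evaluate the sender's objective. Fix a sender strategy, possibly mixed and age-dependent, and let the receiver's conjecture equal it. The receiver's information at time $t$ is then the public history $\mathcal{F}_t$: whether disclosure has occurred by $t$, and if so its date, the evidence realization $\theta_\tau$, and, in the timestamps environment, $\tau$. By consistency of beliefs with Bayes' rule, $\pi_t=\Pr(\theta_t=1\mid\mathcal{F}_t)$ almost surely along the equilibrium path. Here the probability measure is the true joint law of $(\theta,\tau)$ together with the randomization in the sender's strategy. Off-path histories have probability zero under the correct conjecture, so the arbitrary off-path beliefs do not affect the expectation.

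The key step is then immediate:
\[
\mathbb{E}[\pi_t]=\mathbb{E}\bigl[\Pr(\theta_t=1\mid\mathcal{F}_t)\bigr]=\Pr(\theta_t=1)=\phi_t .
\]
The marginal law of $\theta_t$ is unaffected by the sender's strategy, because disclosure does not influence the state process. Indeed, $\theta$ evolves exogenously and independently of $\tau$, and the sender's disclosure choices are measurable with respect to $(\tau,\theta_\tau)$ and his own randomization. Hence $\Pr(\theta_t=1)$ is the unconditional Markov-chain marginal, which solves $\dot\phi=\lambda_0(1-\phi)-\lambda_1\phi$ with $\phi_0=p_0$. This gives the stated closed form.

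To obtain the payoff, note that $\pi_t\in[0,1]$ and $e^{-rt}$ is integrable. Since $\pi$ is càdlàg, it is jointly measurable in $(t,\omega)$, so Fubini/Tonelli applies and
\[
\mathbb{E}\Bigl[\int_0^\infty e^{-rt}\pi_t\,dt\Bigr]=\int_0^\infty e^{-rt}\mathbb{E}[\pi_t]\,dt=\int_0^\infty e^{-rt}\phi_t\,dt .
\]
This expression is independent of the strategy.

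The only delicate point is the first step: making precise that $\pi_t$ is a version of the conditional expectation given the public filtration under the correct conjecture. This requires handling histories with measure-zero disclosure dates, for example atoms or age-independent hazards, and confirming that the choice of version is immaterial inside the expectation. I would address this by defining $\mathcal{F}_t$ as the $\sigma$-algebra generated by the disclosure-outcome process up to $t$. Bayes-consistency then says the equilibrium belief agrees with the regular conditional probability on a full-measure set of histories, and that agreement is sufficient.
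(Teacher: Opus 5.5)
Your proposal is correct and is essentially the paper's argument. The paper simply states that the lemma follows directly from the law of iterated expectations, i.e., $\mathbb{E}[\pi_t]=\mathbb{E}[\Pr(\theta_t=1\mid\mathcal{F}_t)]=\Pr(\theta_t=1)=\phi_t$ because disclosure does not affect the exogenous state process, with the payoff formula then following by exchanging expectation and the time integral; your treatment of versions of the conditional probability at measure-zero disclosure dates, and of joint measurability for Fubini, fills in details the paper leaves implicit.
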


Although the sender is ex ante indifferent across equilibria, this is not generally true path-by-path. For instance, consider the timestamps equilibrium and the burst equilibrium in the no-timestamps environment. By Proposition \ref{prop:min_princ}, the sender's flow payoff prior to any disclosure is weakly higher in the burst equilibrium; furthermore, his payoff after good evidence disclosure is the same in both. The sender therefore weakly prefers the burst equilibrium on every path in which good evidence eventually arrives. Ex ante indifference then implies that, on average over paths in which bad evidence eventually arrives, he weakly prefers the timestamps equilibrium. By the same logic, the comparison is reversed for the three-phase equilibrium of Theorem \ref{thm:three_phase_eqbm} with the timestamps equilibrium (under $\lambda_0=0$): the sender weakly prefers the three-phase equilibrium on paths in which bad evidence arrives, and the timestamps equilibrium averaging over paths in which good evidence arrives.

The sender's flow payoff $\pi_t$ in our model is reduced form and can be microfounded as follows. Consider a market with a unit mass of receivers indexed by $i\in [0,1]$ who choose actions $a_{i,t}$ at each instant $t$ and receive (unobserved) payoffs $-\int_0^\infty e^{-\rho t} (a_{i,t}-\theta_t)^2\,dt$. The sender's flow payoff is $\int_0^1 a_{i,t}\,di$. Then $a_{i,t}=\pi_t$ for all $(i,t)$, and the sender's flow payoff is exactly $\pi_t$ as in our model. We then use ``receiver'' to refer to a representative receiver.

This microfoundation allows us to compare receiver welfare across environments. In the three-phase equilibrium, delayed disclosure of good evidence reduces receiver welfare relative to immediate disclosure under timestamps. And in our three-phase equilibrium extension to small $\lambda_0>0$, bad evidence is withheld permanently, whereas it is eventually disclosed with timestamps. In both cases, the timestamps equilibrium Blackwell-dominates the three-phase equilibrium. In the burst equilibrium, good evidence is disclosed immediately, but delayed disclosure of bad evidence reduces receiver welfare when the sender is sufficiently impatient. As the sender’s discount rate $r$ increases, bad evidence disclosure can be postponed beyond any fixed horizon, whereas its delay under timestamps remains finite and independent of the discount rate. Holding the receiver’s discount rate $\rho$ fixed, sufficiently large $r$ therefore reduces receiver welfare relative to timestamps.\footnote{This holds more generally beyond the burst equilibrium: for large $r$, the receiver is worse off in any age-independent equilibrium with immediate good evidence disclosure than in the timestamps equilibrium.} We state and prove this result formally in the Online Appendix. Thus, removing timestamps can reduce receiver welfare both directly, by hiding the evidence's age, and indirectly, by delaying or preventing disclosure.

\subsection{Age-dependent equilibrium}\label{subsec:FIFO}
In the no-timestamps environment, our three-phase equilibrium uses delay and then age-independent mixing to generate and empty the good evidence stockpile. An age-dependent disclosure strategy can also perform this task, illustrating the robustness of the phenomenon.

Define a ``first in, first out'' (``FIFO'') equilibrium as one in which bad evidence is never disclosed, and good evidence disclosure is characterized by a continuous, strictly increasing function $D$ on $[0,\infty)$ and a time $T^F\geq 0$ such that $D(t)>t$ for all $t\in (0,T^F)$ and $D(t)=t$ for all $t\in \{0\}\cup [T^F,\infty)$. The sender plays a pure, separating strategy, which effectively reveals the timestamp of each disclosure. Disclosure begins at time $0$, even when $p_0 > \pbar$, but an undisclosed stock still develops as evidence arrival outpaces disclosure. Eventually, disclosure accelerates and the stock is emptied by time $T^F$.\footnote{There cannot be a FIFO equilibrium variant with $D(0)>0$: the $0$ type would prefer to disclose at $0$.}

\begin{proposition}\label{prop:FIFO}
    Fix $\lambda_0=0$. For all $p_0>\pund$, there exists a unique FIFO equilibrium. Moreover, its transparency phase starts strictly later than in the three-phase equilibrium for all $p_0 >\overline p$, and at the same time when $p_0 \in (\underline p,\overline p]$.
\end{proposition}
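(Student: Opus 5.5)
The plan is to construct the FIFO equilibrium directly through the separating structure, which turns the problem into a one-dimensional ODE for the inverse disclosure map. Because $D$ is strictly increasing and continuous, disclosure at date $t$ reveals the arrival date $\tau=D^{-1}(t)$. The receiver's post-disclosure belief is therefore $q_t=e^{-\lambda_1(t-D^{-1}(t))}$, using $\lambda_0=0$.

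\textbf{Step 1: the no-disclosure belief and the indifference condition.} Write $\tau(t):=D^{-1}(t)\le t$. With bad evidence never disclosed, the no-disclosure pool at $t$ contains three groups: no evidence, bad evidence, and good evidence with arrival dates in $(\tau(t),t]$. This gives $p_t$ as an explicit functional of $\tau(\cdot)$. In the stock-building region, a type $\tau$ must weakly prefer $D(\tau)$ to nearby dates. Its payoff from disclosing at $t$ is $\int_{\tau}^{t}e^{-rs}p_s\,ds+e^{-rt}q_t/(r+\lambda_1)$; this uses that after disclosure the belief decays at rate $\lambda_1$. The first-order condition at $t=D(\tau)$ reproduces \eqref{eq:sec_4_delay_IC} with equality:
\[
\dot q_t=rq_t-(\lambda_1+r)p_t .
\]
Substituting $q_t=e^{-\lambda_1(t-\tau(t))}$ gives $\lambda_1 q_t(\dot\tau(t)-1)=rq_t-(r+\lambda_1)p_t$. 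Differentiating the expression for $p_t$ then yields a closed ODE system in $(\tau,p)$, or equivalently in $(q,p)$, with initial condition $\tau(0)=0$, $q_0=1$, $p_0$ given.

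\textbf{Step 2: analysing the system and the terminal time.} The initial condition is singular: $q_0=1$ and $\tau(t)<t$ for small $t>0$. I would therefore reparametrize by age $a=t-\tau(t)$ and run a local analysis near $t=0$. The aim is to show that a unique solution leaves the corner with $\dot\tau(0)\in(0,1)$ exactly when $p_0>\pund$. At $t=0$ the condition reads $\lambda_1(\dot\tau-1)=r-(r+\lambda_1)p_0$, so $\dot\tau(0)=(r+\lambda_1)(1-p_0)/\lambda_1$. This lies in $(0,1)$ iff $p_0>\pund$, and it explains why disclosure starts at time $0$ even when $p_0>\pbar$.

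Along the solution I would define $\delta_t=q_t-p_t$ and derive an analogue of \eqref{eq:mix_IVP_delta}, showing that $\delta$ grows while the stock is nonempty. This forces $q$ to return to $1$, i.e. $\tau(t)=t$, at a finite first time $T^F$. Uniqueness follows from uniqueness of the IVP once the singular start is handled, for example by a contraction argument in the age variable. After $T^F$ the play is transparency. Optimality there requires $p_{T^F}\le\pund$, which I would obtain from $\dot q\ge0$ at the hitting time together with the equality condition.

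\textbf{Step 3: global optimality and the comparison.} Local indifference plus a single-crossing argument in $(\tau,t)$ establishes global optimality. Arrival dates are payoff irrelevant, so every type faces the same continuation $t\mapsto$ value. I would then check that the value function is maximized exactly on $\{t\ge \tau\}$ along $D$, using that $\int e^{-rs}(p_s-\text{flow after disclosure})$ has the right sign.

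For the comparison with Theorem \ref{thm:three_phase_eqbm}, both equilibria satisfy the same indifference ODE for $(p,q)$ and end at $q=1$. They differ only in the law of motion of the pool. When $p_0\in(\pund,\pbar]$ the three-phase equilibrium starts purging at time $0$. I would argue that the relevant one-dimensional hitting problem then coincides, giving equal terminal times; this step is delicate. When $p_0>\pbar$, FIFO removes the oldest evidence first, which keeps $q$ higher during the stock phase. I would show that $\delta$ grows more slowly because the rate term is smaller. Comparing the two hitting times of $q=1$ via a comparison lemma for the $\delta$-ODE then gives $T^F>\Tbar$.

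\textbf{Main obstacle.} I expect the main obstacle to be this comparison step, together with the singular initial value at $(q,\tau)=(1,0)$; I would attack both through the age reparametrization.
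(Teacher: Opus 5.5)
Your construction of the FIFO equilibrium is sound, and the singular start you fear does not arise. With $\lambda_0=0$, good evidence arriving at $s$ with $\theta_t=1$ has joint density $\mu e^{-\mu s}p_0e^{-\lambda_1 t}$. So the numerator of $p_t$ collapses to $p_0e^{-\lambda_1 t-\mu\tau(t)}$ and $p_t/q_t=g(\tau(t))$, where $g(\tau)=p_0e^{-(\mu+\lambda_1)\tau}/[1-\frac{p_0\mu}{\mu+\lambda_1}(1-e^{-(\mu+\lambda_1)\tau})]$. Indifference is then the regular autonomous ODE $\dot\tau=\frac{r+\lambda_1}{\lambda_1}(1-g(\tau))$ with $\tau(0)=0$.

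The genuine gap is the comparison with the three-phase equilibrium. For $p_0\in(\pund,\pbar]$ the two hitting problems do not coincide. Only the $q$-equation is shared; the $p$-dynamics differ through the disclosure hazard $h_t$ in $\dot p_t=-\lambda_1p_t-h_t(q_t-p_t)$. At $t=0^+$ the three-phase hazard is $2\mu p_0(\pbar-p_0)/(1-p_0)$. This is strictly below the FIFO hazard $\mu p_0(r+\lambda_1)(1-p_0)/\lambda_1$ for every such $p_0$. So the $(p,q)$ paths differ, and equal end dates cannot come from identical dynamics. For $p_0>\pbar$, your premise that FIFO keeps $q$ higher is backwards near $0$: FIFO has $\dot q_0=r-(r+\lambda_1)p_0<-\lambda_1/2$, the stockpiling slope. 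There is also no pointwise ordering of the growth rates of $\delta$ along two different paths, so a comparison lemma for the $\delta$-ODE has nothing to work with.

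The missing idea is to integrate the $\delta$-equation up to the start $T$ of transparency and use that the stock is empty there.
\begin{itemize}
\item Let $N_t$ be the unconditional probability of no disclosure by $t$, so $h_t=-\dot N_t/N_t$. Then $\frac{d}{dt}\log\delta_t\ge r+h_t$. Equality holds under indifference, and the inequality is strict during strict stockpiling, where $h=0$ and $\dot q>rq-(r+\lambda_1)p$.
\item Integrating and using $q_T=1$ gives $N_T(1-p_T)\ge(1-p_0)e^{rT}$.
\item Since no good evidence remains undisclosed at $T$, $N_T(1-p_T)=\Pr(\sigma>T,\theta_T=0)=\Omega(T)$, where $\Omega(T):=1-\frac{p_0\mu}{\mu+\lambda_1}-\frac{p_0\lambda_1}{\mu+\lambda_1}e^{-(\mu+\lambda_1)T}$. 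This depends on $T$ alone, not on how the stock was emptied.
\item The function $\Omega(T)-(1-p_0)e^{rT}$ is strictly concave and vanishes at $0$. Its slope there is $p_0(\lambda_1+r)-r$, which is positive iff $p_0>\pund$. So it has a unique positive root $T^*$ and is positive on $(0,T^*)$.
\end{itemize}
FIFO, and the three-phase equilibrium with $p_0\in(\pund,\pbar]$, are indifferent throughout, so both end exactly at $T^*$. This also gives finiteness and uniqueness of $T^F$. For $p_0>\pbar$, the stockpiling phase makes the inequality strict, so $\Tbar<T^*=T^F$. Combining the ex ante payoff identity of Lemma \ref{lem:sender_indifference} with the same empty-stock observation gives an equivalent route.
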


The FIFO equilibrium admits a closed-form solution for $D(t)$. Nonetheless, we contend that the three-phase equilibrium is more focal for several reasons. First, FIFO requires breaking indifference over disclosure times in a highly specific timestamp-dependent way, despite the timestamp's payoff irrelevance to the sender. In contrast, the three-phase equilibrium naturally treats all evidence timestamps symmetrically. Second, delay during the stockpiling phase $[0,\underline T)$ is strictly optimal even under neutral off-path beliefs, whereas in the FIFO equilibrium delay is only weakly optimal, even under the (correct) receiver conjecture that the worst possible undisclosed evidence is being disclosed at each instant. 

Third, and perhaps most importantly, FIFO results in a weakly later start of transparency, and strictly so for high priors, as stated in Proposition \ref{prop:FIFO} and illustrated in Figure \ref{fig:FIFO_vs_3P}. Intuitively, for a fixed distribution of undisclosed evidence, the receiver's belief jumps higher after observing disclosure of a random piece of evidence rather than the oldest possible one. Maintaining sender indifference therefore requires faster improvement of $q_t$, which is accomplished by faster disclosure in the three-phase equilibrium.

\begin{figure}[ht!]
\centering
\includegraphics[scale=1]{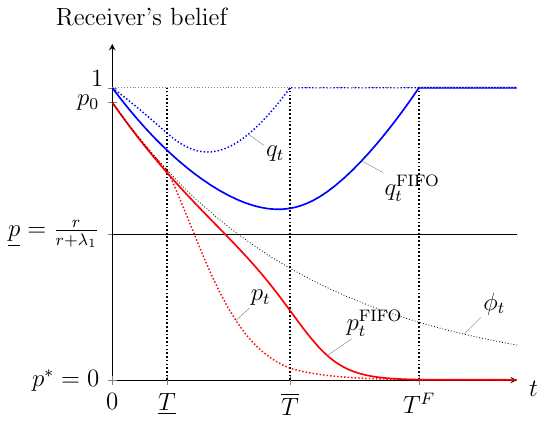}

\caption{The receiver’s beliefs in the FIFO equilibrium vs the three-phase equilibrium and the no-information benchmark. Parameter values are the same as in Figure \ref{fig:three_regions}.
}
\label{fig:FIFO_vs_3P}
\end{figure}

Finally, as $p_0\uparrow 1$, $T^F\to\infty$ and $D(t)\to\infty$ for each fixed $t>0$, so the FIFO outcome converges to the receiver's no-information benchmark. At $p_0=1$, this limit outcome can be supported in equilibrium if the receiver conjectures no disclosure and, off-path, believes any disclosed good evidence arrived at time $0$. This equilibrium is quite fragile: in addition to requiring this extreme off-path belief (which is also a counterproductive threat), the sender is indifferent over all possible disclosure times. In contrast, in the three-phase equilibrium, the start of transparency stays bounded as $p_0\uparrow 1$ (see Online Appendix).

\subsection{Modeling assumptions}
We assume the state evolves over time. Holding fixed the rest of the model, this assumption is in fact necessary for timestamps to play a nontrivial role; with a constant state, the sender would immediately disclose good evidence and never disclose bad evidence, for all values of the other parameters, with or without timestamps.

We assume that evidence only arrives once. This fits settings where producing additional evidence is infeasible or too costly, and approximates settings with multiple pieces of evidence where one has primary importance to the receiver and the others are secondary. From a technical perspective, this assumption eliminates strategic interactions across disclosures and the effect of the evidence stock on future incentives, allowing us to isolate the role of timestamps. Nonetheless, the good evidence comparison extends in part to multiple arrivals. Without timestamps, the delay under a high prior in Proposition \ref{prop:no_timestamps_necessary_delay} goes through for multiple pieces of evidence; we show this for an absorbing bad state in the Online Appendix. With timestamps, we conjecture that good evidence would continue to be disclosed immediately. We also conjecture that bad evidence would still be disclosed eventually if there are only finitely many pieces of evidence, or if the evidence arrival rate decreases sufficiently with each piece. Fully solving such a model would be more technically challenging as the receiver would have to maintain beliefs about a higher-dimensional evidence status of the sender.

We assume that the sender has no private information about the state other than the evidence he obtains. An alternative model could allow the sender to observe the state at all times. This would allow for stronger off-path belief threats; for example, the receiver could assume that if good evidence is unexpectedly disclosed, the sender must know that the state is \textit{currently} bad. However, since the sender's preferences are state-independent, all equilibrium outcomes in our model would also be supported in equilibria of that  model.

Overall, our assumptions yield a parsimonious model of dynamic disclosure of evidence about an evolving state with and without timestamps. Our results show that removing timestamps makes inferences about evidence age endogenous and fundamentally changes how good and bad evidence are disclosed.

\appendix
\section{Proofs}

\subsection{Proof of Theorem \ref{thm:timestamp_eq}}
The proof proceeds in the following main steps. First, we show that in any equilibrium, good evidence is disclosed immediately (Lemma \ref{lem:timestamps_good_immediate}). Second, we show that bad evidence with any timestamp $\tau$ is disclosed at some (possibly infinite) time $D(\tau)$ increasing in $\tau$ (Corollary \ref{cor:determ_delays}). Third, we characterize $D(\tau)$ uniquely.

For what follows, recall the definitions of $q^{G,\tau}_t$ and $q^{B,\tau}_t$, which are continuous in $t$ and $\tau$ and have  closed-form expressions given by \eqref{qG and qB}. For a fixed $t$, $q^{G,\tau}_t$ is strictly increasing in $\tau$, while $q^{B,\tau}_t$ is strictly decreasing in $\tau$. Fix an arbitrary conjecture about the sender's strategy, and let $p$ be the belief path conditional on no disclosure.

\subsubsection{Immediate disclosure of good evidence with timestamps}

\begin{lemma}\label{lem:timestamps_good_immediate}
    In any equilibrium, if the sender possesses good evidence at any time, he discloses it immediately.
\end{lemma}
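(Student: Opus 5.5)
The plan is a revealed-preference comparison between disclosing now and disclosing later, combined with an extremal argument on the pool of undisclosed good evidence. Fix an equilibrium, let $p$ be its no-disclosure belief path, and consider a sender holding good evidence with timestamp $s$ at some time $u\ge s$. Because of the timestamp, disclosure at any $t\ge u$ yields the belief $q^{G,s}_v$ for all $v\ge t$. Hence disclosing at $t$ rather than at $u$ changes the payoff only on $[u,t)$, by
\[
\int_u^t e^{-rv}\bigl(p_v-q^{G,s}_v\bigr)\,dv .
\]
If $p_v< q^{G,s}_v$ for almost every $v>u$ near $u$ (and in fact on all of $[u,\infty)$), then every delay is strictly worse and the sender discloses at once. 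It therefore suffices to show that on path, $p_v<q^{G,v'}_v$ holds for all $v'\le v$. Since $q^{G,\tau}_v$ is increasing in $\tau$, the binding case is the oldest possible evidence. This also gives a monotonicity fact: if a type with timestamp $\tau$ weakly prefers some delay, then every older good type strictly prefers a delay of the same length.

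Next I would decompose $p_v$. Conditional on no disclosure by $v$, the receiver's belief is a mixture of three components: $\phi_v$ (no evidence yet), $q^{B,\tau}_v\le p^*$ for withheld bad evidence, and $q^{G,\tau}_v$ for withheld good evidence with $\tau\le v$. From the closed forms in \eqref{qG and qB} we have $\phi_v\le q^{G,0}_v\le q^{G,s}_v$, and $q^{B,\tau}_v\le p^*<q^{G,s}_v$. So the only way $p_v$ can reach $q^{G,s}_v$ is if the pool at $v$ contains, with positive probability, good evidence that is fresher than $s$ and is being withheld.

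The core step is to rule out a positive-mass pool of withheld good evidence. Suppose the set of times at which such a pool exists is nonempty, and let $t_0$ be its infimum. Just after $t_0$, every withheld good type is within $\varepsilon$ of $t_0$. Consider the oldest such type $\tau\approx t_0$, which by the monotonicity fact is the one most willing to delay. For it to prefer delay, $p_v$ must exceed $q^{G,\tau}_v$ somewhere on its delay interval. This requires, at those later $v$, good evidence still fresher than $\tau$ in the pool. Iterating the comparison, with $p_v$ a strict mixture that puts positive weight on the no-evidence belief $\phi_v<q^{G,\tau}_v$, I would show that $p_v<\sup\{q^{G,\tau'}_v:\tau'\text{ withheld at }v\}$. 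In particular, the oldest member of the pool at each $v$ strictly loses from waiting. Integrating over the interval, the oldest withheld type's payoff from delay is strictly negative, a contradiction. I would formalize this by taking $\tau^*(v)$ to be the essential infimum of withheld good timestamps and showing that type $\tau^*$ strictly prefers immediate disclosure.

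The main obstacle, flagged in the text, is the bad-evidence channel. If bad evidence is disclosed early, removing it from the pool can push $p_v$ above $\phi_v$, and then possibly above $q^{G,s}_v$ for old $s$. So I also need an equilibrium restriction on bad disclosures: a bad type with timestamp $\tau$ never discloses at a $v$ where $q^{B,\tau}_w<p_w$ just after $v$, since waiting would strictly help it. This should yield $p_v\le$ the belief under the conjecture in which only ``worse'' types remain pooled, keeping $p_v$ below $\max\{\phi_v,p^*\}\le q^{G,s}_v$ absent withheld fresh good evidence. I would try to prove this bound jointly with the good-evidence claim via the same infimum-time argument. Off-path good types then follow directly, because the comparison uses only $p$ and $q^{G,s}$.
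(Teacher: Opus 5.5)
Your reduction is sound and matches the paper's target. With timestamps, delay changes payoffs only on $[u,t)$, so it suffices to show $p_v<q^{G,0}_v$ for all $v$. Your static decomposition also correctly shows that $p_v\ge q^{G,s}_v$ forces withheld good evidence fresher than $s$ into the pool.

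\textbf{The gap is the core step.} A strict mixture with positive weight on $\phi_v$ gives only $p_v<\sup\{q^{G,\tau'}_v:\tau'\text{ withheld at }v\}$. Since $q^{G,\tau}_v$ is increasing in $\tau$, this supremum belongs to the \emph{freshest} withheld type and says nothing about the \emph{oldest} one. So ``in particular, the oldest member strictly loses from waiting'' does not follow. What you need is $p_w<q^{G,\tau^*(v)}_w$ for all $w$ after $v$, and no single-date decomposition can deliver it.

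To see why, take the conjecture in which bad evidence is disclosed at once and all good evidence is withheld. The withheld good types, almost all with timestamps after $0$, eventually outweigh the vanishing no-evidence component, so $p_v>q^{G,0}_v$ for large $v$. This is the kind of conjecture the paper warns about. Your ``iterating'' step is an infinite regress (each withholding type needs fresher withheld types later) with nothing that terminates it. Your bad-evidence step is circular, since it bounds $p_v$ only ``absent withheld fresh good evidence,'' which is what you are trying to prove.

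\textbf{The missing idea is a dynamic bound that does not depend on how much good evidence is withheld.} Suppose no bad evidence is disclosed on $[t_0,t_1)$, and let $y$ be the Markov-switching path with $y_{t_0}=p_{t_0}$. By the law of iterated expectations,
$y_t=\mathbb{P}_{t_0}(\sigma\le t)\,\mathbb{E}_{t_0}[\theta_t\mid\sigma\le t]+\mathbb{P}_{t_0}(\sigma>t)\,p_t$.
Every disclosure in the interval is of good evidence, so $\mathbb{E}_{t_0}[\theta_t\mid\sigma\le t]\ge q^{G,0}_t$. Moreover $q^{G,0}_t>y_t$ whenever $p_{t_0}<q^{G,0}_{t_0}$, since both solve the same linear ODE. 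Hence $p_t\le y_t<q^{G,0}_t$ on the whole interval.

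Combine this with your observation that bad evidence is not disclosed near $t$ when $p_t>q^{B,0}_t$. At $t^G:=\inf\{t:p_t\ge q^{G,0}_t\}$, if finite, $p$ must jump from weakly below $q^{B,0}$ to weakly above $q^{G,0}$; otherwise the bound applies on a neighborhood of $t^G$. An upward jump requires an atom of bad-evidence disclosure at $t^G$, because a good-evidence atom would lower $p$. Yet at $t^G$ every bad type strictly prefers to wait, since $p_{t^G}\ge q^{G,0}_{t^G}>q^{B,\tau}_{t^G}$ and $p$ is right-continuous. Hence $t^G=\infty$. This is how the paper closes the argument; your proposal needs this bound and the jump/atom step in place of the pool-based contradiction.
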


In the rest of this section, we prove Lemma \ref{lem:timestamps_good_immediate}. 

\begin{lemma}
If $p_t < q^{G,0}_t$ for all times, then immediate disclosure of good evidence is optimal regardless of its timestamp.
\end{lemma}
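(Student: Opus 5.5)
The plan is a direct payoff comparison across disclosure dates. It uses the fact that with timestamps the post-disclosure belief path does not depend on when disclosure occurs. Fix a sender who obtained good evidence at timestamp $s$ and still holds it at some time $t\ge s$; this covers both on- and off-path histories. Any continuation strategy is a distribution over disclosure dates $\sigma\in[t,\infty]$. The sender's payoff is linear in this distribution, so it suffices to compare pure disclosure dates $\sigma$ with immediate disclosure $\sigma=t$.

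The first step is to pin down the receiver's belief path under disclosure at $\sigma$. Before $\sigma$, the receiver sees only nondisclosure, so her belief is the no-disclosure path $p_u$ for $u\in[t,\sigma)$. This holds whatever the sender privately holds, since $p$ is defined from the conjectured strategy alone.

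At and after $\sigma$, the receiver learns $\theta_s=1$ and the timestamp $s$. Since the evidence arrives only once, nothing further is disclosed. Her belief is therefore $\Pr(\theta_u=1\mid \theta_s=1,\text{public history})$. The point to argue is that this equals $q^{G,s}_u$ for all $u\ge\sigma$. The reason is that the sender observes nothing about the state beyond $\theta_s$, so his disclosure timing (including the delay to $\sigma$) is conditionally independent of $(\theta_u)_{u\ge s}$ given $\theta_s$. By the Markov property the public history then carries no additional information about $\theta_u$, and the same holds after an off-path disclosure because the timestamp fixes the belief. I expect this to be the step needing the most care, though it is conceptually simple.

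The second step is the payoff comparison. Disclosing at $\sigma$ yields
\[
\int_t^\sigma e^{-ru}p_u\,du+\int_\sigma^\infty e^{-ru}q^{G,s}_u\,du ,
\]
and immediate disclosure yields $\int_t^\infty e^{-ru}q^{G,s}_u\,du$. Their difference is
\[
\int_t^\sigma e^{-ru}\bigl(q^{G,s}_u-p_u\bigr)\,du ,
\]
which also covers $\sigma=\infty$. By \eqref{qG and qB}, $q^{G,\tau}_u$ is increasing in $\tau$, so $q^{G,s}_u\ge q^{G,0}_u>p_u$ by hypothesis. The integrand is therefore strictly positive, and the difference is strictly positive whenever $\sigma>t$. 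Hence immediate disclosure is optimal, indeed strictly better than any positive-probability delay, for every timestamp $s$ and every $t\ge s$. Averaging over any mixed continuation strategy preserves the conclusion.
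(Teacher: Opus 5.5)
Your proposal is correct and follows the paper's argument. The paper likewise compares flow payoffs pointwise: disclosure at $t'\ge t$ yields $p_u$ on $[t,t')$ and $q^{G,\tau}_u\ge q^{G,0}_u>p_u$ thereafter, so immediate disclosure maximizes the flow payoff at every instant, and your extra justification of the timing-independent post-disclosure belief $q^{G,s}_u$ and of the reduction to pure disclosure dates only makes explicit what the paper leaves implicit.
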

\begin{proof}
Suppose the sender has good evidence with timestamp $\tau$ and hasn't disclosed it before time $t$. If he discloses at time $t'\geq t$, his flow payoff is $p_s$ for $s\in[t,t')$ and $q^{G,\tau}_s\geq q^{G,0}_s>p_s$ for all $s\in [t',\infty)$. Immediate disclosure maximizes the flow payoff at all times. 
\end{proof}
    
\begin{lemma}\label{lem:delay_bad}
If the sender possesses bad evidence at any time $t$ and $p_t>q^{B,0}_t$, then there exists $\varepsilon>0$ such that  the sender does not disclose it before time $t+\varepsilon$.
\end{lemma}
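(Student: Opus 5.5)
The plan is a local comparison of payoff paths, using right-continuity of the no-disclosure belief $p$ together with the timestamp structure. Fix a sender holding bad evidence with timestamp $\tau\le t$ at time $t$, with $p_t>q^{B,0}_t$. If he discloses at any time $t'\geq t$, the receiver observes $(\theta_\tau=0,\tau)$. Her belief from $t'$ onward is then $q^{B,\tau}_s$ for all $s\ge t'$, because disclosure ends the game and the state evolves by Markov switching. Before $t'$ his flow payoff is $p_s$.

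\textbf{Step 1 (flow comparison).} First I will note that $q^{B,\tau}_t\le q^{B,0}_t<p_t$, since $q^{B,\tau}_t$ is strictly decreasing in $\tau$ for fixed $t$. Next I will use two continuity facts:
\begin{itemize}
\item $p$ is c\`adl\`ag, being the no-disclosure belief path (in particular it is right-continuous);
\item $q^{B,\tau}_s$ is continuous in $s$.
\end{itemize}
Together these give an $\varepsilon>0$ and a gap $\eta>0$ such that $p_s>q^{B,\tau}_s+\eta$ for all $s\in[t,t+\varepsilon)$. The same $\varepsilon$ should work uniformly over all timestamps $\tau\le t$, because $q^{B,\tau}_s\le q^{B,0}_s$ and the bound only requires comparing $p_s$ with $q^{B,0}_s$.

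\textbf{Step 2 (domination).} Consider any plan that discloses at some $t'\in[t,t+\varepsilon)$. Compare it with the plan that waits until $t+\varepsilon$ and then discloses. The two payoff paths coincide on $[t+\varepsilon,\infty)$, since both equal $q^{B,\tau}_s$ there. On $[t',t+\varepsilon)$ the first plan earns $q^{B,\tau}_s$ and the second earns $p_s$, which is strictly larger by Step 1. On $[t,t')$ the two plans agree. So disclosing at $t'$ is strictly dominated, for each $t'$ in the window.

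For mixed continuation strategies the same argument applies pointwise to every disclosure time in the support. Hence a sequentially rational $H(\cdot;0,\tau,t)$ puts zero mass on $[t,t+\varepsilon)$. Note that $t'=t$ itself is included: the gap is of positive length, so the payoff loss is strictly positive.

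\textbf{Main obstacle.} The only delicate point is ensuring the strict gap persists on a right-neighborhood, which requires right-continuity of $p$. One subtlety is whether a deviation to waiting could change $p$. It cannot: $p$ is the belief under the fixed conjectured strategy, so the sender's own deviation does not affect it.

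A second subtlety is disclosure at exactly $t'=t$, where the realized belief jumps at $t$. Since payoffs are integrals of flow payoffs, a single instant does not matter. What matters is the interval $[t,t+\varepsilon)$, on which the disclosure belief is strictly below $p$.
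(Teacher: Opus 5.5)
Your proposal is correct and follows essentially the same argument as the paper. Right-continuity of $p$ and continuity of $q^{B,0}$ give a window $[t,t+\varepsilon)$ on which $p_u>q^{B,0}_u\ge q^{B,\tau}_u$, so disclosing at $t+\varepsilon$ instead of at any $s$ in that window strictly raises the flow payoff on $[s,t+\varepsilon)$ and leaves it unchanged on $[t,s)$ and on $[t+\varepsilon,\infty)$.
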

\begin{proof}
Suppose the sender has bad evidence with timestamp $\tau$ and hasn't disclosed it before time $t$. Since $p$ and $q^B$ are right-continuous, there exists $\varepsilon>0$ such that $p_u>q^{B,0}_u\geq q^{B,\tau}_u$ for all $u\in [t,t+\varepsilon)$. 
Hence, for any $s\in [t,t+\varepsilon)$, by disclosing at time $t+\varepsilon$ instead of at $s$, the flow payoff is strictly higher on $[s,t+\varepsilon)$ and is unchanged on $[t,s)$ and $[t+\varepsilon,\infty)$.
\end{proof}

Define $t^G=\inf\{t\geq 0: p_t\geq q^{G,0}_t\}$ and $t^B=\inf\{t\geq 0: p_t\leq q^{B,0}_t\}$. By right-continuity of $p$, $q^{G,0}$, and $q^{B,0}$, if $p_0\in (0,1)$, then $t^G,t^B>0$. Right-continuity also implies $p_{t^G}\geq q^{G,0}_{t^G}$ if $t^G$ is finite, and likewise $p_{t^B}\leq q^{B,0}_{t^B}$ if $t^B$ is finite.

We now show that $t^G=\infty$.
Clearly, since $q^{G,0}_t\neq q^{B,0}_t$ for all $t$, we cannot have $t^G=t^B<\infty$. Hence, if $t^G<\infty$, either $t^G<t^B$ or $t^B<t^G$. Toward that goal, the following lemma establishes that if $t^G$ is finite, $p$ must jump above $q^{G,0}$ from below $q^{B,0}$ at time $t^G$.
\begin{lemma}\label{lem:timestamps_p jump from below}
If $t^G<\infty$, then $t^B<t^G$ and $\sup\{t<t^G:p_t\leq q^{B,0}_t\}=t^G$.
\end{lemma}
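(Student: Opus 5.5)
The plan is to argue by contradiction. Suppose $t^G<\infty$ but $\sup\{t<t^G:p_t\le q^{B,0}_t\}<t^G$; this also covers the case where no such $t$ exists. Then there is $\varepsilon>0$ such that $q^{B,0}_t<p_t<q^{G,0}_t$ for all $t\in[t^G-\varepsilon,t^G)$. I will show that $p$ cannot reach $q^{G,0}$ at $t^G$, either by a jump or continuously. Once this is shown, the supremum equals $t^G$. Because $t^B$ is an infimum of times with $p_t\le q^{B,0}_t$, it follows that $t^B\le t^G$. Since $t^B=t^G<\infty$ was already excluded, we get $t^B<t^G$.

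First, I control which types disclose near $t^G$. For each $t\in[t^G-\varepsilon,t^G)$ we have $p_t>q^{B,0}_t$, so Lemma \ref{lem:delay_bad} applies: bad evidence is not disclosed on $[t,t+\varepsilon_t)$. Covering the interval, no bad evidence is disclosed on $[t^G-\varepsilon,t^G)$.

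I also rule out an atom of bad-evidence disclosure at $t^G$. A bad type $\tau$ disclosing at $t^G$ gets $q^{B,\tau}_u\le q^{B,0}_u<q^{G,0}_u$. If instead he waits until $t^G+\eta$, right-continuity gives $p_u>q^{B,0}_u$ for $u\in[t^G,t^G+\eta)$, so waiting yields a strictly higher flow on that interval and the same flow afterwards, exactly as in Lemma \ref{lem:delay_bad}. Hence on $[t^G-\varepsilon,t^G]$ the only disclosures (atoms included) are of good evidence. Every good type's posterior is $q^{G,\tau}_t\ge q^{G,0}_t>p_{t-}$.

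Second, I rule out a jump. A jump of the Bayesian nondisclosure belief at $t^G$ requires an atom of disclosure at $t^G$. Removing an atom of types whose disclosure posterior exceeds the pool average can only lower $p$, so $p_{t^G}\le p_{t^G-}$. Thus if $p_{t^G}\ge q^{G,0}_{t^G}$, continuity of $q^{G,0}$ forces $p_{t^G-}\ge q^{G,0}_{t^G}$. In other words, $p$ approaches $q^{G,0}$ from below along $[t^G-\varepsilon,t^G)$.

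Third, I rule out a continuous approach with a differential comparison. On this interval, $p$ is the pool-average belief, and it moves for three reasons:
\begin{itemize}
\item Markov switching contributes $\lambda_0(1-p_t)-\lambda_1p_t=-(\lambda_0+\lambda_1)(p_t-p^*)$.
\item Evidence arrival only relabels types within the pool, so it does not change the average belief.
\item Outflow of good types, whose posteriors exceed $p_t$, contributes a nonpositive term. Atoms in the outflow only produce downward jumps.
\end{itemize}
Hence $p_t-p_{t^G-\varepsilon}\le\int_{t^G-\varepsilon}^t -(\lambda_0+\lambda_1)(p_u-p^*)\,du$. Meanwhile $q^{G,0}$ satisfies the corresponding equation with equality. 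The gap $g_t:=q^{G,0}_t-p_t$ therefore satisfies $g_t\ge g_{t^G-\varepsilon}-(\lambda_0+\lambda_1)\int g_u\,du$. By Grönwall, $g_{t^G-}\ge g_{t^G-\varepsilon}e^{-(\lambda_0+\lambda_1)\varepsilon}>0$, which contradicts step two.

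The main obstacle is making the ``outflow of above-average types lowers the average'' inequality rigorous for a general disclosure measure, which may mix atoms and absolutely continuous parts. I would write $p_t$ explicitly as the ratio of the joint probability of $\{\theta_t=1,\text{no disclosure by }t\}$ to the probability of no disclosure. I would then apply Bayes' rule to the measure of types, split it by evidence type and timestamp, and verify the inequality on the numerator and denominator directly.
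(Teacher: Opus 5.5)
Your argument is correct, but it is organized differently from the paper's. Your first step matches the paper. The paper also uses $p_{t^G}\ge q^{G,0}_{t^G}>q^{B,0}_{t^G}$ and right-continuity to extend the window to $[t_0,t_1)=[t^G-\varepsilon,t^G+\varepsilon)$, so that Lemma \ref{lem:delay_bad} excludes bad-evidence disclosure on all of it.

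From there the paper does not rule out a jump and a continuous approach separately. It compares $p$ with the pure Markov path $y$ started at $y_{t_0}=p_{t_0}$, using one law-of-iterated-expectations identity: $y_t=\mathbb{P}_{t_0}(\sigma\le t)\,\mathbb{E}_{t_0}[\theta_t\mid\sigma\le t]+\mathbb{P}_{t_0}(\sigma>t)\,p_t$. Every disclosure in the window is of good evidence, so the conditional term is at least $q^{G,0}_t$, which exceeds $y_t$. This forces $p_t\le y_t<q^{G,0}_t$ throughout the window, including at $t^G$ itself. The identity handles atoms and continuous disclosure together, and it removes exactly the ratio bookkeeping you flag as your main obstacle.

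Your drift accounting, once made rigorous, gives the same bound. With $\Lambda:=\lambda_0+\lambda_1$, the measure inequality $dp_t\le-\Lambda(p_t-p^*)\,dt$ says $e^{\Lambda t}(p_t-p^*)$ is nonincreasing, i.e.\ $p_t\le y_t$. So your route is the infinitesimal version of the paper's. If you run it through $t^G$ inclusive, it also absorbs your second (no-jump) step.

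One point needs tightening. Your Gr\"onwall step uses an integral inequality anchored only at $t^G-\varepsilon$. The inequality $g_t\ge g_{t_0}-\Lambda\int_{t_0}^t g_u\,du$ alone does not give $g_t\ge g_{t_0}e^{-\Lambda(t-t_0)}$, because a large early excursion of $g$ can drive the right side negative. You need the inequality from every base point $s$ in the window, which your local accounting does supply. Equivalently, you need that $e^{\Lambda t}g_t$ is nondecreasing.
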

\begin{proof}

Toward a contradiction, suppose there exists $\varepsilon>0$ such that for all $t\in[t^G-\varepsilon,t^G)$, $p_t>q^{B,0}_t$. By right-continuity, we can choose this $\varepsilon$ such that $p_t>q^{B,0}_t$ for all $t\in[t^G-\varepsilon,t^G+\varepsilon)$. To simplify notation, set $t_0:=t^G-\varepsilon$ and $t_1:=t^G+\varepsilon$. Then by Lemma \ref{lem:delay_bad}, there is no disclosure of bad evidence on $[t_0,t_1)$.
Hence, on $[t_0,t_1)$, $p$ is bounded above by the solution to $\dot{y}_t=\lambda_0(1-y_t)-\lambda_1 y_t$ with initial condition $y_{t_0}=p_{t_0}$. That is, $y_t=\mathbb{P}_{t_0}(\theta_t=1)$ is the  unconditional belief based on Markov switching, starting from $p_{t_0}$. To prove this, let $\mathbb{E}_t$ denote the receiver's expectation operator at time $t$, and $\mathbb{P}_t$ the probability measure at time $t$. By the law of iterated expectations, for all $t\in [t_0,t_1)$, 
\begin{align*}
    y_t=\mathbb{P}_{t_0}(\theta_t=1)=\mathbb{P}_{t_0}(\sigma\leq t)\mathbb{P}_{t_0}(\theta_t=1|\sigma\leq t)+\mathbb{P}_{t_0}(\sigma>t)p_t.
\end{align*}
Because $p_t>q^{B,0}_t$, no bad evidence is disclosed in $[t_0,t_1)$, so any disclosure is of good evidence. The lowest post-disclosure belief from good evidence is $q^{G,0}_t$; good evidence with any later timestamp yields a strictly higher belief. So $\mathbb{P}_{t_0}(\theta_t=1|\sigma\leq t)\geq q^{G,0}_t$.  Moreover, $y_t<q^{G,0}_t$. This follows from the comparison theorem, because $y$ and $q^G$ satisfy the same linear ODE on $[t_0,t_1)$ and differ only in their initial values,  $q^{G,0}_{t_0}>y_{t_0}$. Putting these facts together, 
\begin{align*}
y_t=\mathbb{P}_{t_0}(\sigma\leq t)\mathbb{P}_{t_0}(\theta_t=1|\sigma\leq t)+\mathbb{P}_{t_0}(\sigma>t)p_t&\geq \mathbb{P}_{t_0}(\sigma\leq t)q^{G,0}_t+\mathbb{P}_{t_0}(\sigma>t)p_t\\
&\geq \mathbb{P}_{t_0}(\sigma\leq t)y_t+\mathbb{P}_{t_0}(\sigma>t)p_t.
\end{align*}
Since $\mathbb{P}_{t_0}(\sigma>t)>0$, the last inequality implies that $p_t\leq y_t<q^{G,0}_t$ for all $t\in [t_0,t_1)$, contradicting the definition of $t^G$.
\end{proof}

\begin{lemma}
It must be that $t^G=\infty$.
\end{lemma}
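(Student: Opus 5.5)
We argue by contradiction. Suppose $t^G<\infty$. By Lemma \ref{lem:timestamps_p jump from below}, $t^B<t^G$, and there are times $t<t^G$ arbitrarily close to $t^G$ with $p_t\le q^{B,0}_t$. Since $p_{t^G}\ge q^{G,0}_{t^G}>q^{B,0}_{t^G}$, the path $p$ must jump upward at $t^G$. More precisely, $\liminf_{t\uparrow t^G} p_t\le q^{B,0}_{t^G}$ while $p_{t^G}\ge q^{G,0}_{t^G}$. The plan is to show that Bayes' rule makes such an upward jump impossible at a moment when bad evidence is being withheld.

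The first step is to describe how $p$ can jump. Evidence arrives with density, so the pool of non-disclosing types can only change discretely through an atom of disclosure at $t^G$. A jump of $p$ is therefore caused only by a positive mass of types disclosing at exactly $t^G$. Conditioning on no disclosure at $t^G$ removes these types. By the martingale identity used in Lemma \ref{lem:timestamps_p jump from below}, the pre-jump belief $p_{t^G-}$ is a convex combination of $p_{t^G}$ and the average post-disclosure belief of the types disclosing in the atom. Hence $p_{t^G}>p_{t^G-}$ requires the atom to consist, on average, of types whose disclosure induces a belief below $p_{t^G-}\le q^{B,0}_{t^G}$. The types with this property are bad-evidence types, because every good-evidence type induces a belief $q^{G,\tau}_{t^G}\ge q^{G,0}_{t^G}$. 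So a positive mass of bad evidence must be disclosed at $t^G$.

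The second step is to show this atom is not optimal for those bad-evidence types. After the jump, $p_{t^G}\ge q^{G,0}_{t^G}>q^{B,\tau}_{t^G}$ for every $\tau$. By right-continuity, there is $\varepsilon>0$ such that $p_u>q^{B,0}_u\ge q^{B,\tau}_u$ for all $u\in[t^G,t^G+\varepsilon)$. Lemma \ref{lem:delay_bad} then says a bad-evidence holder at $t^G$ strictly prefers waiting to $t^G+\varepsilon$ over disclosing at $t^G$. Its flow payoff on $[t^G,t^G+\varepsilon)$ rises from $q^{B,\tau}_u$ to $p_u$, and its payoff after $t^G+\varepsilon$ is unchanged because timestamped beliefs depend only on $\tau$. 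Thus no bad evidence is disclosed at $t^G$ in equilibrium. This contradicts the first step, so $t^G=\infty$.

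The main obstacle is making the first step rigorous. We must carefully define $p_{t^G-}$ along the subsequence where $p_t\le q^{B,0}_t$, and rule out that $p$ rises continuously but very steeply instead of jumping. To handle this, we would write $p_t$ explicitly as the ratio of the mass of non-disclosing types in state $1$ to the total mass of non-disclosing types. Both masses are monotone in the cumulative disclosure measure, and the no-arrival mass is continuous. It follows that any discontinuity of $p$ comes from an atom of disclosure. Between atoms, $p$ has continuous numerator and denominator, so the subsequence with $p_t\le q^{B,0}_t$ forces $p_{t^G-}\le q^{B,0}_{t^G}$ by continuity of $q^{B,0}$. The convex-combination argument then applies at $t^G$.
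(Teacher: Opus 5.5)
Your proof is correct and follows the paper's argument. Lemma \ref{lem:timestamps_p jump from below} forces an upward jump of $p$ at $t^G$, which requires an atom of disclosure whose average post-disclosure belief lies below $p_{t^G-}\le q^{B,0}_{t^G}<q^{G,0}_{t^G}$, hence a positive mass of bad evidence; that atom is then ruled out because $p_{t^G}>q^{B,0}_{t^G}$ and right-continuity make waiting strictly better (Lemma \ref{lem:delay_bad}). The ``main obstacle'' you flag is not really one: $p$ is c\`adl\`ag by assumption in the model, so $p_{t^G-}$ exists and equals the limit along your subsequence.
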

\begin{proof}
If $t^G<\infty$, then by Lemma \ref{lem:timestamps_p jump from below}, at time $t^G$, $p$ must jump from weakly below $q^{B,0}$ to weakly above $q^{G,0}$. Such a discontinuity requires that good or bad evidence is conjectured to be disclosed with an atom of probability at time $t^G$. Now if only an atom of good evidence disclosure is conjectured, $p$ would exhibit a downward discontinuity, since any good evidence disclosure would raise the belief given that $p_{t^G-}\leq q^{B,0}_{t^G}<q^{G,0}_{t^G}$, where $t^G-$ denotes the left limit at $t^G$ and the left limit exists because $p$ is c\`adl\`ag. Thus, at time $t^G$, the receiver must conjecture that bad evidence is disclosed with an atom of probability. But then disclosing bad evidence at time $t^G$ would not be optimal, as $p_{t^G}\ge q^{G,0}_{t^G}>q^{B,0}_{t^G}\geq q^{B,\tau}_{t^G}$ for all $\tau \in [0,t^G]$, so there exists $\varepsilon>0$ such that for all $t\in [t^G,t^G+\varepsilon)$ and all $\tau \in [0,t^G]$, $p_{t}>q^{B,\tau}_{t}$. Therefore, the sender would strictly prefer to wait and disclose at time $t^G+\varepsilon$. 
\end{proof}

Since $t^G=\infty$, we have $p_t<q^{G,0}_t$ for all $t\geq 0$, and thus it is optimal to disclose good evidence immediately, concluding the proof of Lemma \ref{lem:timestamps_good_immediate}.

\subsubsection{Delayed disclosure of bad evidence with timestamps}
If $\lambda_0=0$, it is strictly optimal to never disclose bad evidence. For the rest of the proof, assume $\lambda_0>0$.
\begin{lemma}
In equilibrium, $p$ is continuous.\label{lem:p_continuous}
\end{lemma}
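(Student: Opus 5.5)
The plan is to argue by contradiction. Since $p$ is càdlàg, a discontinuity at some $t$ means $p_{t-}\neq p_t$. The only thing that can make the no-disclosure belief jump is an atom of conjectured disclosure at $t$. Between atoms, the belief is governed by Bayes' rule with absolutely continuous disclosure flows plus Markov switching, so it moves continuously. By Lemma \ref{lem:timestamps_good_immediate}, good evidence is disclosed immediately as it arrives. Its arrival time $\tau$ is exponentially distributed, so good-evidence disclosure has no atoms. Hence a jump at $t$ requires a positive-probability atom of bad-evidence disclosure at $t$: a set $A\subseteq[0,t]$ of timestamps with positive measure (or $\tau$ with positive conditional mass) whose holders disclose at $t$.

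First I pin down the direction of the jump. Every bad-evidence type with $\tau\le t$ induces belief $q^{B,\tau}_t\le q^{B,0}_t<\phi_t$. Removing such types from the nondisclosure pool raises the pool average, so $p_t>p_{t-}$: the jump is upward. I also need that the disclosing types prefer disclosing at $t$ over waiting. If $p_t>q^{B,\tau}_t$ for some disclosing type $\tau\in A$, then by right-continuity $p_u>q^{B,\tau}_u$ on $[t,t+\varepsilon)$. Waiting until $t+\varepsilon$ then strictly raises flow payoffs on $[t,t+\varepsilon)$ and leaves them unchanged afterwards, by the argument of Lemma \ref{lem:delay_bad} applied type by type. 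So every type in the atom must satisfy $q^{B,\tau}_t\ge p_t$.

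Next I compare with $p_{t-}$. Optimality of not disclosing just before $t$ requires, for these types, that disclosing at $u<t$ be no better than waiting to $t$. Letting $u\uparrow t$ and using continuity of $q^{B,\tau}$, the sender with timestamp $\tau$ would gain by disclosing slightly before $t$ whenever $q^{B,\tau}_u>p_u$ on an interval $[t-\eta,t)$. Since $q^{B,\tau}_t\ge p_t>p_{t-}$, continuity gives $q^{B,\tau}_u>p_u$ for $u$ near $t$ from the left. Disclosing at $t-\eta$ instead of $t$ then strictly raises flow payoffs on $[t-\eta,t)$ and leaves them identical afterwards, because with timestamps the post-disclosure path $q^{B,\tau}$ does not depend on the disclosure date. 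This contradicts optimality of disclosing at $t$. For types with $\tau$ arbitrarily close to $t$, the interval $[\tau,t)$ could be tiny, so I will use $t-\eta>\tau$ with small $\eta$, which is fine because only $\tau<t$ matter except on a null set.

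The main obstacle is making the first step rigorous: showing that without a disclosure atom, $p$ has no jump. This means writing $p_t$ as a ratio of integrals against the conjectured strategy and showing that both the numerator and denominator are continuous when the marginal disclosure distribution has no atom at $t$. I would handle this by expressing $\Pr(\theta_t=1,\sigma>t)$ and $\Pr(\sigma>t)$ as integrals over $\tau$ of $H$ terms, each continuous in $t$ except at atoms. The denominator stays bounded away from zero because $e^{-\mu t}>0$.
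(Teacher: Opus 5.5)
Your argument has a genuine gap where you fix the direction of the jump. You compare the disclosing types' beliefs $q^{B,\tau}_t$ with $\phi_t$, but the relevant benchmark is the pool average $p_{t-}$. Good evidence is disclosed immediately, so just before $t$ the pool consists of two groups: no-evidence types, with belief $\phi_t$, and undisclosed bad-evidence types. An atom of bad disclosure requires the second group to have positive mass, so $p_{t-}<\phi_t$. A bad type with $q^{B,\tau}_t\in(p_{t-},\phi_t)$ therefore \emph{lowers} the pool average when it leaves.

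Your own second step shows that every type in the atom satisfies $q^{B,\tau}_t\ge p_t$. Let $\alpha>0$ be the limiting conditional probability of no disclosure on $[u,t]$ as $u\uparrow t$, and let $\gamma$ be the limiting average of $q^{B,\tau}_t$ over the atom. Bayes' rule gives $p_{t-}=\alpha p_t+(1-\alpha)\gamma$ with $\gamma\ge p_t$, hence $p_{t-}\ge p_t$. So any discontinuity must be a downward jump, the opposite of what you assert. The upward case you rule out is vacuous, and the only possible case is left untreated.

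Your third step then loses its input. From $q^{B,\tau}_t\ge p_t$ and $p_t<p_{t-}$ you cannot conclude $q^{B,\tau}_t>p_{t-}$ for an arbitrary disclosing type. The missing idea is to apply the averaging identity once more. If $p_{t-}>p_t$, then $p_{t-}=\alpha p_t+(1-\alpha)\gamma$ forces $\gamma>p_{t-}$. Hence a set of disclosing types of positive conditional mass, in particular some $\tau<t$, has $q^{B,\tau}_t>p_{t-}$.

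For such a type your left-limit argument goes through unchanged: $q^{B,\tau}_u>p_u$ on some $(t-\eta,t)$, so disclosing there strictly beats waiting until $t$. This is exactly how the paper closes the proof. The point you flagged as the main obstacle (no atom implies no jump) is simply taken for granted in the paper; the real work is getting the direction of the jump right and extracting a disclosing type above $p_{t-}$ by averaging.
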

\begin{proof}
If $p$ has a discontinuity at some time $t_0$, then it must be that at time $t_0$, evidence is disclosed with an atom of probability. Since good evidence is disclosed immediately by Lemma \ref{lem:timestamps_good_immediate} and arrives according to an atomless distribution, it must be that bad evidence is disclosed with an atom of probability at time $t_0$. Note that for any type $\tau_0$ with bad evidence that is willing to disclose at $t_0$, it must be that $q^{B,\tau_0}_{t_0}\geq p_{t_0}$, otherwise by right-continuity of $p$ and continuity of $q^{B,\tau_0}$ we would have $p_{t}>q^{B,\tau_0}_{t}$ for all $t\in [t_0,t_0+\varepsilon)$ for some $\varepsilon>0$, and the sender would strictly prefer to disclose at $t_0+\varepsilon$.
    
We now argue that if there is a discontinuity at $t_0$, it must be that $p_{t_0}<p_{t_0-}$. For any $t<t_0$, let $N$ and $G$ denote the events that there is no disclosure or there is good evidence disclosure in $[t,t_0]$ (suppressing dependence on $t$ and $t_0$). Let $B^{t_0}$ denote the event that there is bad evidence disclosure at time $t_0$, and $B^{[t,t_0)}$ the event that there is bad evidence disclosure in $[t,t_0)$. Let $\mathbb{E}_{t}$ and $\mathbb{P}_{t}$ denote expectations and probabilities conditional on no disclosure prior to $t$. By the law of iterated expectations,
\begin{align}\label{eq:law of iterated}
\mathbb{E}_{t}[\theta_{t_0}]=\mathbb{P}_{t}(N)p_{t_0}+\mathbb{P}_{t}(G)\mathbb{E}_{t}[\theta_{t_0}|G]+\mathbb{P}_{t}(B^{[t,t_0)})\mathbb{E}_{t}[\theta_{t_0}|B^{[t,t_0)}]+\mathbb{P}_{t}(B^{t_0}){\mathbb{E}_{t}[\theta_{t_0}|B^{t_0}]}.
\end{align}
As $t\uparrow t_0$, $\mathbb{P}_{t}(G)\to 0$ and $\mathbb{P}_{t}(B^{[t,t_0)})\to 0$, so $\mathbb{P}_{t}(N)+\mathbb{P}_{t}(B^{t_0})\to 1$. To simplify notation, define $\alpha:=\lim_{t\uparrow t_0}\mathbb{P}_{t}(N)$ and  $\beta:=\lim_{t\uparrow t_0} \mathbb{P}_{t}(B^{t_0})=1-\alpha$. We have $\alpha>0$ because the probability of no disclosure in $[t,{t_0})$ is at least the probability that evidence arrives after time $t_0$, which is strictly positive; moreover, $\beta>0$ because there is an atom of bad evidence disclosure at time $t_0$. Solving for $\mathbb{E}_t[\theta_{t_0}|B^{t_0}]$ in \eqref{eq:law of iterated} and taking $t\uparrow t_0$ yields  $\gamma:=\lim_{t\uparrow t_0}\mathbb{E}_{t}[\theta_{t_0}|B^{t_0}]=\frac{p_{t_0-}-\alpha p_{t_0}}{1-\alpha}$. 

Since $\mathbb{E}_{t}[\theta_{t_0}|B^{t_0}]\geq p_{t_0}$ for all $t$, we have $\gamma \geq p_{t_0}$, and therefore $p_{t_0-}\geq p_{t_0}$. A discontinuity at $t_0$ would therefore imply $p_{t_0-}>p_{t_0}$, which implies $\gamma>p_{t_0-}$, so that $\mathbb{E}_{t}[\theta_{t_0}|B^{t_0}]>p_{t_0-}$ for $t$ sufficiently close to $t_0$. As $\mathbb{E}_{t}[\theta_{t_0}|B^{t_0}]$ is a convex combination over $\tau$ of $q^{B,\tau}_{t_0}$ values, there exists some timestamp $\tau<t_0$ for bad evidence for which the sender is willing to wait until $t_0$ to disclose and  $q^{B,\tau}_{t_0}>p_{t_0-}$. By definition of $p_{t_0-}$, for any $\delta>0$, there exists $\varepsilon\in (0,t_0-\tau)$ such that if $t>t_0-\varepsilon$, $|p_{t}-p_{t_0-}|<\delta$. Choosing $\delta$ sufficiently small and $\varepsilon$ accordingly, by continuity of $q^{B,\tau}$, we have $p_t<q^{B,\tau}_{t}$ for all $t\in (t_0-\varepsilon,t_0)$. But then the sender would strictly prefer to disclose at any such $t$ rather than wait until time $t_0$, contradicting optimality. 
\end{proof}

\begin{lemma}[Single-crossing]\label{lem:single_crossing}
If $p_{t_0} \leq q^{B,\tau_0}_{t_0}$ for some $0\leq\tau_0\leq t_0$, $p_{t}<q^{B,\tau_0}_t$ for all $t>t_0$. 
\end{lemma}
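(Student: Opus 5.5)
The plan is a comparison argument between the no-disclosure belief $p$ and the post-disclosure path $q^{B,\tau_0}$. Both are driven by the same Markov-switching drift $\lambda_0(1-x)-\lambda_1 x$. The path $p$, however, carries an additional nonpositive ``no news is bad news'' term, because by Lemma \ref{lem:timestamps_good_immediate} good evidence is disclosed immediately. Disclosures of bad evidence can only push $p$ further down. So the gap $d_t:=q^{B,\tau_0}_t-p_t$ should satisfy a differential inequality of the form $\dot d_t\geq -(\lambda_0+\lambda_1)d_t+(\text{strictly positive term})$. Starting from $d_{t_0}\geq 0$, this makes $d_t>0$ for all $t>t_0$.

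\textbf{Step 1: Bayes decomposition over a short window.} Fix $t\geq t_0$ and $h>0$. As in the proof of Lemma \ref{lem:timestamps_p jump from below}, apply the law of iterated expectations to $\theta_{t+h}$ conditional on no disclosure before $t$. Split the events on $[t,t+h]$ into no disclosure, good-evidence disclosure, and bad-evidence disclosure:
\begin{align*}
y_{t+h}=\mathbb{P}_t(N)\,p_{t+h}+\mathbb{P}_t(G)\,\mathbb{E}_t[\theta_{t+h}\mid G]+\mathbb{P}_t(B)\,\mathbb{E}_t[\theta_{t+h}\mid B].
\end{align*}
Here $y$ is the Markov-switching solution started from $y_t=p_t$. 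For good evidence, $\mathbb{P}_t(G)=\mu A_t h+o(h)$, where $A_t>0$ is the conditional probability that the state is good and no evidence has yet arrived; $A_t>0$ because there is always positive probability of no arrival, and the state is good with positive probability. The conditional belief after such a disclosure is $1-O(h)$. For bad evidence, use the fact (established inside the proof of Lemma \ref{lem:p_continuous}) that any bad type willing to disclose at time $u$ has $q^{B,\tau}_u\geq p_u$. Combined with continuity of $p$ (Lemma \ref{lem:p_continuous}), this gives $\mathbb{E}_t[\theta_{t+h}\mid B]\geq p_{t+h}-o(1)$.

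\textbf{Step 2: Turn the decomposition into an upper Dini derivative bound.} Rearranging the identity in Step 1 yields
\begin{align*}
D^+p_t\leq \lambda_0(1-p_t)-\lambda_1p_t-\mu A_t(1-p_t),
\end{align*}
where the bad-evidence term has been dropped because it only lowers $p$. Since $q^{B,\tau_0}$ solves $\dot q=\lambda_0(1-q)-\lambda_1 q$, the lower Dini derivative of the gap satisfies
\begin{align*}
D_+d_t\geq-(\lambda_0+\lambda_1)d_t+\mu A_t(1-p_t).
\end{align*}
Moreover $\mu A_t(1-p_t)$ is bounded away from $0$ on compact intervals, because $p_t<q^{G,0}_t<1$ and $A_t$ is continuous and positive.

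\textbf{Step 3: Conclude by a comparison lemma for continuous functions with Dini derivatives.} Since $d$ is continuous (by Lemma \ref{lem:p_continuous}) and $d_{t_0}\geq 0$, the function $e^{(\lambda_0+\lambda_1)t}d_t$ has a strictly positive lower right Dini derivative on $[t_0,\infty)$. It is therefore strictly increasing, which gives $d_t>0$ for all $t>t_0$.

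\textbf{Main obstacle.} I expect the main difficulty to be in Steps 1--2: making the decomposition rigorous when the bad-disclosure measure may be singular. The remedy is to avoid differentiating $p$ and instead work with increments over $[t,t+h]$. The key inequality needed is that each disclosing bad type's belief weakly exceeds $p$ at its disclosure time. I would extract this inequality as a standalone claim from the argument already given in Lemma \ref{lem:p_continuous}.
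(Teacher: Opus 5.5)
\textbf{There is a gap at the step from Step 1 to Step 2.} Your route is different from the paper's and can be made rigorous, but Step 2 does not follow from Step 1 as written.

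The bound $\mathbb{E}_t[\theta_{t+h}\mid B]\geq p_{t+h}-o(1)$ controls the bad-evidence term only up to an error $\mathbb{P}_t(B)\cdot o(1)$. Your bound on $D^+p_t$ needs this error to be $o(h)$. That would be automatic if $\mathbb{P}_t(B)=O(h)$. At this stage, however (before Corollary \ref{cor:determ_delays}), nothing prevents the distribution of bad-disclosure times from being singular, so $\mathbb{P}_t(B)$ can be of larger order than $h$. ``Working with increments'' does not fix this by itself, since Step 2 still divides by $h$.

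\textbf{How to close it.} You need a quantitative estimate, which you can get in two moves.
\begin{enumerate}
\item Because $q^{B,\tau}_\cdot$ is increasing in calendar time, every bad type disclosing at $u\in(t,t+h]$ satisfies $q^{B,\tau}_{t+h}\geq q^{B,\tau}_u\geq p_u\geq m_h:=\min_{[t,t+h]}p$. So the error is at most $\mathbb{P}_t(B)\,(p_{t+h}-m_h)$.
\item $p$ can \emph{rise} only at a bounded rate. Run the same decomposition on $(u,t+h]$, with $u$ a minimizer of $p$ on $[t,t+h]$ (it exists by Lemma \ref{lem:p_continuous}). Every bad type disclosing in that window has time-$(t+h)$ belief at least $p_u$, so the bad term now has a favorable sign. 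This gives $\mathbb{P}_u(N)(p_{t+h}-p_u)\leq(\lambda_0+\mu)(t+h-u)$, and $\mathbb{P}_u(N)\geq e^{-\mu(t+h)}$.
\end{enumerate}
Hence $p_{t+h}-m_h\leq Ch$ with $C$ uniform on compacts. The error is then $Ch\,\mathbb{P}_t(B)=o(h)$, because $\mathbb{P}_t(B)\to 0$ as $h\downarrow 0$ when you condition on no disclosure through $t$. With this estimate in place, your Steps 2 and 3 go through.

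\textbf{Comparison with the paper.} The paper never differentiates $p$. It argues by contradiction at the first time $t^*$ at which $p$ returns to $q^{B,\tau_0}$. Continuity of $p$ yields the function $F(\tau)$ and a marginal timestamp $\bar\tau$ whose path $q^{B,\bar\tau}$ touches $p$ from below at some $t_1$. The law of iterated expectations on $[t_1,t^*]$ then forces disclosure of some bad type with timestamp above $\bar\tau$. For that type $p>q^{B,\tau}$ on the whole window, so it strictly prefers to wait, a contradiction. The case $t^*=t_0$ needs a separate pass of the same idea.

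Your repaired argument uses only the weak condition $q^{B,\tau}_u\geq p_u$ at disclosure times. It draws strictness from the no-news-is-bad-news drift created by immediate good-evidence disclosure (Lemma \ref{lem:timestamps_good_immediate}); the paper draws strictness from the bad types' strict incentive to delay. Your route treats both cases at once and gives a stronger conclusion. Namely, $p$ is a strict Dini subsolution of the Markov-switching ODE, so every Markov path starting weakly above $p$ stays strictly above it. The price is the extra regularity estimate above.
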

\begin{proof}
Assume $p_{t_0} \leq q^{B,\tau_0}_{t_0}$ for some $0\leq \tau_0\leq t_0$. Define $t^*=\inf\{t>t_0:p_{t}\geq q^{B,\tau_0}_{t}\}$. We aim to show that $t^*=\infty$. First, we rule out $t^*\in (t_0,\infty)$, and then we rule out $t^*=t_0$. Toward a contradiction, suppose $t^*\in (t_0,\infty)$. 

For each $\tau\leq t^*$, define $F(\tau)=\max_{t\in[\max\{t_0,\tau\},t^*]} \left(q^{B,\tau}_t-p_{t}\right)$. Since $q^B$ is continuous in both arguments and $p$ is continuous by Lemma \ref{lem:p_continuous}, $F$ is continuous by the maximum theorem. Moreover, $F$ is strictly decreasing since $q^B$ is strictly decreasing in $\tau$ and the set $[\max\{t_0,\tau\},t^*]$ is weakly decreasing in $\tau$ in the set inclusion sense;  it satisfies $F(\tau_0)>0$ since $t^*>t_0$ by assumption, and by the definition of $t^*$ we cannot have $p_t\geq q^{B,\tau_0}_t$ for any $t\in (t_0,t^*)$; and it satisfies $F(t^*)<0$ since $q^{B,t^*}_{t^*}=0<p_{t^*}$, where the latter inequality holds because $\phi_{t^*}>0$ and there is a positive probability of no evidence arrival. Thus, there exists $\overline\tau>\tau_0$ such that $F(\tau)\geq 0$ if and only if $\tau\leq \overline \tau$. Since $F(\overline\tau)=0$, there exists some $t_1\in [\max\{t_0,\overline\tau\},t^*)$ where $q^{B,\overline\tau}_{t_1}=p_{t_1}$, with $q^{B,\overline\tau}_{t}\leq p_{t}$ for all $t\in [\max\{t_0,\overline\tau\},t^*]$.

With a slight abuse of notation, let $\phi_a(x)$ denote the posterior belief based on hidden Markov switching over $a$ units of time, starting from belief $x$.
Thus, for $t\geq t_1$, $\phi_{t-t_1}(p_{t_1})$ is the time-$t$ value
of the solution to $\dot{y}_t=\lambda_0(1-y_t)-\lambda_1y_t$
subject to $y_{t_1}=p_{t_1}$. In words,
$\phi_{t-t_1}(p_{t_1})$ is the time-$t$ belief when the belief at time
$t_1$ is $p_{t_1}$. Note that $q_t^{B,\tau_0}$ solves the same ODE. Since $p_{t_1}<q^{B,\tau_0}_{t_1}$, it follows that $\phi_{t^*-t_1}(p_{t_1})<q^{B,\tau_0}_{t^*}=p_{t^*}$. Now let $N$, $G$, and $B$ denote the events that no disclosure, good disclosure, and bad disclosure, respectively, occurs in $[t_1,t^*]$. By the law of iterated expectations,
\begin{align*}
\mathbb{E}_{t_1}[\theta_{t^*}]=\phi_{t^*-t_1}(p_{t_1})=\mathbb{P}_{t_1}(N)p_{t^*}+\mathbb{P}_{t_1}(G)\mathbb{E}_{t_1}[\theta_{t^*}|G]+\mathbb{P}_{t_1}(B)\mathbb{E}_{t_1}[\theta_{t^*}|B].
\end{align*}
Now $\mathbb{E}_{t_1}[\theta_{t^*}|G]>q^{B,\tau_0}_{t^*}=p_{t^*}$. Hence,
\begin{align*}
\phi_{t^*-t_1}(p_{t_1})\geq (1-\mathbb{P}_{t_1}(B))p_{t^*}+\mathbb{P}_{t_1}(B)\mathbb{E}_{t_1}[\theta_{t^*}|B].
\end{align*}

Recall that $p_{t^*}>\phi_{t^*-t_1}(p_{t_1})$, and $\mathbb{P}_{t_1}(B)<1$, so we must have
\begin{align*}
\phi_{t^*-t_1}(p_{t_1})> (1-\mathbb{P}_{t_1}(B))\phi_{t^*-t_1}(p_{t_1})+\mathbb{P}_{t_1}(B)\mathbb{E}_{t_1}[\theta_{t^*}|B]
\implies 
\phi_{t^*-t_1}(p_{t_1})>\mathbb{E}_{t_1}[\theta_{t^*}|B],
\end{align*}
where $\mathbb{P}_{t_1}(B)>0$ as the first inequality cannot hold if $\mathbb{P}_{t_1}(B)=0$. 
Since $\mathbb{E}_{t_1}[\theta_{t^*}|B]=\mathbb{E}_{t_1}[q^{B,\tau}_{t^*}|B]$, this means there must be some timestamp $\tau$ and some time $t\in[t_1,t^*]$ for which bad evidence is disclosed at time $t$ and $q^{B,\tau}_{t^*}<\phi_{t^*-t_1}(p_{t_1})$. Since $\phi_{u-t_1}(p_{t_1})=q^{B,\overline\tau}_u$ for all $u\in [t_1,t^*]$ (as  $q^{B,\overline\tau}_\cdot$ solves the same ODE with same initial value $q^{B,\overline\tau}_{t_1}=p_{t_1}=\phi_{0}(p_{t_1})$), we have $q^{B,\tau}_{t^*}<q^{B,\overline\tau}_{t^*}$, and because $q^{B,\cdot}_s$ is strictly decreasing in the timestamp on $[0,s]$ for each fixed $s$, this implies $\tau>\overline\tau$. 
But then we have $F(\tau)<0$, so $p_t>q^{B,\tau}_t$ for all $t\in (\max\{t_0,\tau\},t^*]$. By continuity, this inequality also holds on $(t^*,t^*+\varepsilon)$ for some small $\varepsilon>0$. This implies that the sender would strictly prefer to delay disclosure until time $t^*+\varepsilon$, a contradiction. 

Next, we rule out the case $t^*=t_0$ by adapting arguments from the previous case. Toward a contradiction, suppose that $t^*=t_0$. Then for all $\varepsilon>0$ there exists $\hat t\in (t_0,t_0+\varepsilon)$ such that $p_{\hat t}\geq q^{B,\tau_0}_{\hat t}$. By arguments similar to those used in the previous case, there exists some timestamp $\overline\tau\geq \tau_0$ and time $t_1\in [\max\{\overline\tau,t_0\},\hat t)$ such that $p_{t_1}=q^{B,\overline\tau}_{t_1}$ and $p_t\geq q^{B,\overline\tau}_t$ for all $t\in [\max\{\overline\tau,t_0\},\hat t]$. Since $\phi_{u-t_1}(p_{t_1})$ and $q^{B,\overline\tau}_u$ solve the same ODE and coincide at $t_1$, we have $\phi_{u-t_1}(p_{t_1})=q^{B,\overline\tau}_u$ for all $u\geq t_1$, and in particular for $u=\hat t$.

Let $N$, $G$, and $B$ be the events of no disclosure, good evidence disclosure, and bad evidence disclosure on $[t_1,\hat t]$. By similar logic to that used in the proof for  the $t^*\in (t_0,\infty)$ case, we have $\phi_{\hat{t}-t_1}(p_{t_1})=q^{B,\overline \tau}_{\hat t}\leq q^{B,\tau_0}_{\hat t}\leq p_{\hat t}$ and also $\mathbb{E}_{t_1}[\theta_{\hat t}|G]>q^{B,\tau_0}_{\hat t}\geq q^{B,\overline \tau}_{\hat t}$. Thus, by the law of iterated expectations, 
\begin{align*}
\mathbb{E}_{t_1}[\theta_{\hat t}]=q^{B,\overline \tau}_{\hat t} &=\mathbb{P}_{t_1}(N)p_{\hat t}+\mathbb{P}_{t_1}(G)\mathbb{E}_{t_1}[\theta_{\hat t}|G]+\mathbb{P}_{t_1}(B)\mathbb{E}_{t_1}[\theta_{\hat t}|B]\\
&> \mathbb{P}_{t_1}(N)q^{B,\overline \tau}_{\hat t}+\mathbb{P}_{t_1}(G)q^{B,\overline \tau}_{\hat t}+\mathbb{P}_{t_1}(B)\mathbb{E}_{t_1}[\theta_{\hat t}|B].
\end{align*}
As in the previous case, the strict inequality implies $\mathbb{P}_{t_1}(B)>0$. Thus, $
q^{B,\overline\tau}_{\hat t}>\mathbb{E}_{t_1}[\theta_{\hat t}|B].$
Since the expectation is a convex combination of $q^{B,\tau}_{\hat t}$ values over $\tau\leq \hat t$, there must be disclosure at some time $t\in[t_1,\hat t]$ of bad evidence with some timestamp $\tau>\overline \tau$. But at $t$, we have $p_t\geq q^{B,\overline\tau}_t>q^{B,\tau}_t$. By continuity in $t$, this inequality holds for all times in an interval $[t,t+\varepsilon)$ for some $\varepsilon>0$, and the sender would strictly prefer to delay disclosure until $t+\varepsilon$, contradicting optimality. This rules out $t^*=t_0$.
\end{proof}

\begin{cor}[Deterministic delays]\label{cor:determ_delays}
In equilibrium, bad evidence with timestamp $\tau$ is disclosed at some time $D(\tau)\in (\tau,\infty]$. Moreover, there exists $\overline\tau\in [0,\infty]$ such that for $\tau \in [0,\overline\tau)$, $D(\tau)$ is strictly increasing and finite. If $\overline\tau<\infty$, then  $D(\tau)=\infty$ for all $\tau \geq \overline\tau$.
\end{cor}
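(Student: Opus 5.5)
The plan is to derive Corollary \ref{cor:determ_delays} from Lemma \ref{lem:p_continuous}, Lemma \ref{lem:single_crossing}, and the local argument in Lemma \ref{lem:delay_bad}. For each bad-evidence timestamp $\tau$, define the candidate disclosure time
\[
D(\tau):=\inf\{t\ge \tau: p_t\le q^{B,\tau}_t\},
\]
with $D(\tau)=\infty$ if this set is empty. First, $D(\tau)>\tau$. At $t=\tau$ we have $q^{B,\tau}_\tau=0<p_\tau$, because no evidence has arrived with positive probability and $\phi_\tau>0$. By continuity of $p$ (Lemma \ref{lem:p_continuous}) and of $q^{B,\tau}$, the strict inequality $p>q^{B,\tau}$ then persists on a neighborhood of $\tau$.

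Second, I show that the sender with timestamp $\tau$ discloses exactly at $D(\tau)$, so delays are deterministic. On $[\tau,D(\tau))$ we have $p_t>q^{B,\tau}_t$. Disclosing at any $s$ in this interval is strictly dominated by postponing: the argument is the one in Lemma \ref{lem:delay_bad} applied to the type-$\tau$ curve, since a strict gap near $s$ lets the sender wait and gain flow payoff. Hence no disclosure occurs before $D(\tau)$. If $D(\tau)<\infty$, continuity gives $p_{D(\tau)}=q^{B,\tau}_{D(\tau)}$. Lemma \ref{lem:single_crossing} then gives $p_t<q^{B,\tau}_t$ for every $t>D(\tau)$. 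Disclosing at $D(\tau)$ therefore weakly dominates every later date pointwise in flow payoff, and strictly dominates any date $t'>D(\tau)$ on $[D(\tau),t')$. So the unique optimal choice is $D(\tau)$, with immediate disclosure after it off path. If $D(\tau)=\infty$, then $p>q^{B,\tau}$ everywhere and never disclosing is uniquely optimal.

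Third, I prove monotonicity. Take $\tau<\tau'$ with $D(\tau')<\infty$. At $t=D(\tau')$, strict monotonicity of $q^{B,\cdot}_t$ in the timestamp gives $q^{B,\tau}_t>q^{B,\tau'}_t=p_t$. So $t$ lies in the set defining $D(\tau)$, giving $D(\tau)\le D(\tau')$. Equality is impossible, because $p_{D(\tau)}=q^{B,\tau}_{D(\tau)}$ while $q^{B,\tau}_t\neq q^{B,\tau'}_t$, so $D(\tau)<D(\tau')$. Thus finiteness of $D$ is downward closed. Setting $\overline\tau:=\sup\{\tau: D(\tau)<\infty\}$, with $\sup\emptyset=0$, gives $D$ finite and strictly increasing on $[0,\overline\tau)$ and infinite for $\tau>\overline\tau$.

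The main obstacle is the endpoint $\tau=\overline\tau$ when $\overline\tau<\infty$, where I must show $D(\overline\tau)=\infty$. My first attempt is a contradiction argument. Suppose $D(\overline\tau)<\infty$. Then $p_t<q^{B,\overline\tau}_t$ for $t>D(\overline\tau)$. Fix such a $t$; by continuity of $q^{B,\tau}_t$ in $\tau$, the inequality $p_t<q^{B,\tau}_t$ still holds for $\tau$ slightly above $\overline\tau$. Hence $D(\tau)\le t<\infty$, contradicting the definition of $\overline\tau$ as a supremum. This openness of the finite set settles the endpoint. I also need the measurability and pure-strategy points, namely that almost-sure behavior is pinned down for every type. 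These follow because the best response is unique for each $\tau$.
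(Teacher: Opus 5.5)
Your proposal is correct and follows essentially the same route as the paper. You define $D(\tau)$ as the first time $p$ falls weakly below $q^{B,\tau}$, use strict dominance of waiting before it and Lemma~\ref{lem:single_crossing} after it, get monotonicity from $q^{B,\tau}_t$ being strictly decreasing in $\tau$, and rule out $D(\overline\tau)<\infty$ by the same continuity-in-$\tau$ openness argument. Your monotonicity step, comparing at $t=D(\tau')$ and using $p_{D(\tau)}=q^{B,\tau}_{D(\tau)}$ to exclude equality, is a somewhat more explicit version of the paper's one-line justification.
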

\begin{proof}
Suppose that at some time $t_0$, the sender has bad evidence with timestamp $\tau_0\leq t_0$. Let $D(\tau_0,t_0):=\inf\{t>t_0:p_t\leq q^{B,\tau_0}_t\}$. We claim that disclosing at time $D(\tau_0,t_0)$ is optimal. If $D(\tau_0,t_0)=\infty$, then we have $p_t>q^{B,\tau_0}_t$ for all $t\geq t_0$, and clearly it is optimal for the sender to never disclose the evidence. If $D(\tau_0,t_0)<\infty$, then we have (i) $p_t>q^{B,\tau_0}_t$ for all $t<D(\tau_0,t_0)$, which implies disclosing at time $D(\tau_0,t_0)$ yields a higher payoff than disclosing at any earlier time, and (ii) $p_t<q^{B,\tau_0}_t$ for all $t>D(\tau_0,t_0)$ by Lemma \ref{lem:single_crossing}, which implies that disclosing at time $D(\tau_0,t_0)$ yields a higher payoff than disclosing at any later time. On path, it is therefore optimal to disclose at time $D(\tau):=D(\tau,\tau)$. Note that since $q^{B,\tau}_\tau=0<p_\tau$, right-continuity gives $D(\tau)>\tau$. 

Define $\overline\tau:=\inf \{\tau\geq 0: D(\tau)=\infty\}$.  Since $q^{B,\tau}_t$ is strictly increasing in $t$ and strictly decreasing in $\tau$, we have that $D(\tau)$ is increasing on $[0,\infty)$ and strictly increasing on $[0,\overline\tau)$. Finally, we have $D(\overline\tau)=\infty$, because otherwise $D(\overline\tau)<\infty$, which means $q^{B,\overline\tau}_t$ crosses $p_t$ at some finite time, and stays strictly above it after $t$ by Lemma \ref{lem:single_crossing}; hence, by continuity, $q^{B,\tau}_t$ crosses $p_t$ for all $\tau>\overline\tau$ sufficiently close to $\overline\tau$, contradicting the definition of $\overline\tau$. 
\end{proof}

We now turn to characterizing $\tau\mapsto D(\tau)$. Note that, in equilibrium, when $p_t=q^{B,s}_t$ for some $s$, then at this $t$, $p_t=p_{t,s}^{\dagger}$, where $p_{t,s}^{\dagger}$ is defined in \eqref{eq:p_dagger}.
Hence, we solve \eqref{eq:aux_belief_crossing}.

\begin{lemma}
\label{lem: belief intersect}If $\lambda_0,\lambda_1>0$, then for all $s\ge0$, there is a unique
$t^{\dagger}(s)\in (s,\infty)$ solving \eqref{eq:aux_belief_crossing}, and it is continuously differentiable and strictly increasing in $s$.
\end{lemma}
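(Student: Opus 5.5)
We will study, for each fixed $s\ge0$, the gap function
\[
h_s(t):=\Bigl(e^{-\mu t}+\int_s^t \mu e^{-\mu\tau}(1-\phi_\tau)\,d\tau\Bigr)\bigl(p^\dagger_{t,s}-q^{B,s}_t\bigr).
\]
We clear the (strictly positive) denominator of $p^\dagger_{t,s}$ so that zeros of $h_s$ are exactly the solutions of \eqref{eq:aux_belief_crossing}. Writing $k:=\lambda_0+\lambda_1$ and using \eqref{qG and qB}, we have $q^{B,\tau}_t-q^{B,s}_t=p^*\bigl(e^{-k(t-s)}-e^{-k(t-\tau)}\bigr)$. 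Substituting gives the closed form
\[
h_s(t)=e^{-\mu t}\bigl(\phi_t-q^{B,s}_t\bigr)-p^*\int_s^t \mu e^{-\mu\tau}(1-\phi_\tau)\bigl(e^{-k(t-\tau)}-e^{-k(t-s)}\bigr)\,d\tau .
\]
At $t=s$ we get $h_s(s)=e^{-\mu s}\phi_s>0$, because $q^{B,s}_s=0$ and $\phi_s>0$ when $p_0\in(0,1)$.

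\textbf{Step 1: sign change.} As $t\to\infty$, the first term vanishes. The second term, after multiplying by $e^{kt}$ and integrating, tends to a strictly negative limit, namely $-p^*\int_s^\infty \mu e^{-\mu\tau}(1-\phi_\tau)\,d\tau<0$ times a positive factor. One way to see this is to multiply $h_s$ by $e^{\mu t}$ or otherwise compare orders of decay. The cleanest route is to show that the limit of $h_s(t)$ is strictly negative, since $e^{-k(t-s)}\to0$ and $e^{-k(t-\tau)}\to0$ at comparable rates. If the limit is $0$, we instead compare leading-order terms. Alternatively, we can show directly that $p^\dagger_{t,s}\to$ a limit strictly below $p^*$ while $q^{B,s}_t\to p^*$. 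Indeed $p^\dagger_{t,s}$ becomes a weighted average, with strictly positive weight, on fresh bad evidence $q^{B,\tau}_t$ for $\tau$ near $t$, which lies well below $p^*$. This is the version we will use. It needs $\lambda_0>0$ so that $p^*>0$; with $p^*=0$ the two paths coincide in the limit. Continuity then gives existence of a root in $(s,\infty)$.

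\textbf{Step 2: uniqueness.} This is the main obstacle. We plan to show that at any zero $t$ of $h_s$ we have $\partial_t h_s(t)<0$, so that zeros are transversal and all downward, hence unique. We differentiate the unnormalized numerator and denominator of $p^\dagger$. Let $N_t,M_t$ denote the numerator and denominator in \eqref{eq:p_dagger}. Then
\[
\dot N_t=-\mu e^{-\mu t}\phi_t+e^{-\mu t}\dot\phi_t+\mu e^{-\mu t}(1-\phi_t)\cdot 0+\int_s^t\mu e^{-\mu\tau}(1-\phi_\tau)\dot q^{B,\tau}_t\,d\tau,
\]
using $q^{B,t}_t=0$, and $\dot M_t=-\mu e^{-\mu t}\phi_t$. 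Since all beliefs follow the linear drift $\lambda_0-ky$, we obtain
\[
\frac{d}{dt}p^\dagger_{t,s}=\lambda_0-kp^\dagger_{t,s}-\frac{\mu e^{-\mu t}\phi_t}{M_t}\,p^\dagger_{t,s}.
\]
Here $\mu e^{-\mu t}\phi_t/M_t>0$ is the ``no news is bad news'' term. Meanwhile $\dot q^{B,s}_t=\lambda_0-kq^{B,s}_t$. So at a crossing $p^\dagger=q^{B,s}>0$, the difference of derivatives is $-\mu e^{-\mu t}\phi_t p^\dagger/M_t<0$. Thus $p^\dagger_{\cdot,s}-q^{B,s}$ crosses zero only from above, which yields a unique $t^\dagger(s)$.

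\textbf{Step 3: regularity and monotonicity in $s$.} $h_s(t)$ is $C^1$ jointly in $(s,t)$, and $\partial_t$ of the gap is nonzero at the root by Step 2. The implicit function theorem therefore gives $t^\dagger\in C^1$ with $t^{\dagger\prime}(s)=-\partial_s(\text{gap})/\partial_t(\text{gap})$. For the sign, note that increasing $s$ raises $p^\dagger_{t,s}$, because it removes the lowest-belief (oldest-excluded) component $q^{B,s}_t<p^\dagger$ from the average. At the crossing that component equals the average, so we must be careful. Increasing $s$ also strictly lowers $q^{B,s}_t$. Hence $\partial_s(p^\dagger-q^{B,s})>0$ (the first effect is at least zero at the crossing, and the second is strict). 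Combined with the negative $t$-derivative, this gives $t^{\dagger\prime}(s)>0$.
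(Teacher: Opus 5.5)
Your Steps 2 and 3 follow the paper's argument. Every crossing is transversal because the ``no news is bad news'' term enters only $\dot p^\dagger_{t,s}$. The implicit function theorem then applies with $\partial_s p^\dagger_{t,s}=0$ at the crossing, so only $\partial_s q^{B,s}_t<0$ survives.

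\textbf{The gap is in Step 1, in the route you explicitly adopt.} For fixed $s$ it is false that $p^\dagger_{t,s}$ tends to a limit strictly below $p^*$.
\begin{itemize}
\item The denominator $M_t$ converges to $\int_s^\infty\mu e^{-\mu\tau}(1-\phi_\tau)\,d\tau>0$.
\item Bad evidence with timestamp in $[t-\varepsilon,t]$ has mass at most $e^{-\mu t}(e^{\mu\varepsilon}-1)$, so its relative weight in the pool vanishes.
\item Dominated convergence, using $q^{B,\tau}_t\to p^*$ for each $\tau$, then gives $p^\dagger_{t,s}\to p^*$. This is the same limit as $q^{B,s}_t$.
\end{itemize}
The equilibrium path $p_t=p^\dagger_{t,\tau(t)}$ does converge below $p^*$, but only because its lower limit $\tau(t)$ moves with $t$. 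Your other sketches also misstate the limits; in particular $h_s(t)$ itself tends to $0$. So the sign of the gap for large $t$ is a comparison of rates of convergence. You mention this (``compare leading-order terms'') but never carry it out.

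\textbf{The repair is a rescaling.} With $k=\lambda_0+\lambda_1$ as in your notation, $e^{kt}(\phi_t-q^{B,s}_t)=p_0+p^*(e^{ks}-1)$ is constant. Your closed form therefore gives
\[
e^{kt}h_s(t)=e^{-\mu t}\bigl(p_0+p^*(e^{ks}-1)\bigr)-p^*\int_s^t\mu e^{-\mu\tau}(1-\phi_\tau)\bigl(e^{k\tau}-e^{ks}\bigr)\,d\tau .
\]
The first term vanishes, while the second is negative and strictly decreasing for $t>s$. Hence $h_s(t)<0$ for large $t$, and the intermediate value theorem gives a root.

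The paper instead multiplies by $e^{\mu t}$. This turns the no-evidence term into $\phi_t-q^{B,s}_t\to0$. Bad evidence from $[t-\varepsilon,t]$ now carries weight $\mu e^{\mu(t-\tau)}\ge\mu$. Its contribution stays bounded away from zero because $q^{B,s}_t\to p^*>0$ while $q^{B,t-\varepsilon}_t<p^*(\lambda_0+\lambda_1)\varepsilon$. That is where your ``fresh evidence'' intuition is valid: fresh bad evidence is comparable in mass to the no-evidence type, not to the whole pool.

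\textbf{Two smaller slips.}
\begin{itemize}
\item The correct drift is $\dot p^\dagger_{t,s}=\lambda_0-kp^\dagger_{t,s}-\frac{\mu e^{-\mu t}\phi_t}{M_t}(1-p^\dagger_{t,s})$, not the version with factor $p^\dagger_{t,s}$, since the outflow is pure good-state mass. The sign at a crossing is unchanged.
\item $q^{B,s}_t$ is the highest, not the lowest, bad-evidence component. This is harmless, because what you actually use is $\partial_s p^\dagger_{t,s}=\frac{\mu e^{-\mu s}(1-\phi_s)}{M_t}(p^\dagger_{t,s}-q^{B,s}_t)$, which is exactly zero at the crossing.
\end{itemize}
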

\begin{proof}
Expanding \eqref{eq:aux_belief_crossing} and multiplying through by the denominator yields
\begin{align}
e^{-\mu t}\phi_{t}+\int_{s}^{t}\mu e^{-\mu\tau}(1-\phi_{\tau})q^{B,\tau}_t d\tau&=q^{B,s}_t\left[e^{-\mu t}+\int_{s}^{t}\mu e^{-\mu\tau}(1-\phi_{\tau})d\tau\right]\notag\\
\iff\phi_t-q^{B,s}_t&=\int_{s}^{t}\mu e^{-\mu(\tau-t)} (1-\phi_{\tau})(q^{B,s}_t-q^{B,\tau}_t)d\tau.\label{eq:belief equal IVT}
\end{align}

We establish existence by the intermediate value theorem. At $t=s$, the left-hand side is $\phi_s-0>0$, while the right-hand side vanishes. As $t\to\infty$, the left-hand side tends to $p^*-p^*=0$. Meanwhile, for fixed $\varepsilon>0$, the right-hand side is bounded below by
\begin{align}
    \int_{t-\varepsilon}^t\mu e^{-\mu(\tau-t)} (1-\phi_{\tau})(q^{B,s}_t-q^{B,\tau}_t)d\tau\geq \int_{t-\varepsilon}^t\mu e^{-\mu\varepsilon} (1-\phi_{\tau})(q^{B,s}_t-q^{B,t-\varepsilon}_t)d\tau.\label{eq:belief equal IVT 2}
\end{align}
As $t\to\infty$, we have $\phi_t\to p^*<1$ given that $\lambda_1>0$, and we also have $q^{B,s}_t\to p^*>0$ under the assumption that $\lambda_0>0$. In addition, $q^{B,t-\varepsilon}_t<p^*(\lambda_0+\lambda_1)\varepsilon$. Hence, for small $\varepsilon$ and large $t$ the integrand on the right-hand side of \eqref{eq:belief equal IVT 2} is positive and uniformly bounded away from zero, so the right-hand side of \eqref{eq:belief equal IVT} is also bounded away from zero for large $t$. Since both the left- and right-hand side of \eqref{eq:belief equal IVT} are continuous in $t$, there exists a solution $t^{\dagger}(s)\in (s,\infty)$.

Uniqueness follows from showing that $p^{\dagger}_{t,s}$ crosses $q^{B,s}_t$ from above at $t^{\dagger}(s)$.  Given Markov switching, $\dot{q}^{B,s}_t = \lambda_0 (1 - q^{B,s}_t) - \lambda_1 q^{B,s}_t$. For fixed $s$, we can write the evolution of $p^{\dagger}_{t,s}$ as
\begin{align}\label{pdagger derivative}
    \dot{p}^{\dagger}_{t,s} = \lambda_0 (1 - p^{\dagger}_{t,s}) - \lambda_1 p^{\dagger}_{t,s} - \frac{\mu e^{-\mu t} \phi_t}{e^{-\mu t} + \int_s^t \mu e^{-\mu \tau} (1 - \phi_\tau)\, d\tau}\, (1 - p^{\dagger}_{t,s}).
\end{align}
At $t^\dagger(s)$, $p^{\dagger}_{t^\dagger(s),s} = q^{B,s}_{t^\dagger(s)}$, so
$$ \dot{q}^{B,s}_{t^\dagger(s)} -\dot{p}^{\dagger}_{t^\dagger(s),s} = \frac{\mu e^{-\mu t^\dagger(s)} \phi_{t^\dagger(s)}}{e^{-\mu t^\dagger(s)} + \int_s^{t^\dagger(s)} \mu e^{-\mu \tau} (1 - \phi_\tau)\, d\tau}\, (1 - p^{\dagger}_{t^\dagger(s),s}) > 0.$$

Now the crossing condition \eqref{eq:aux_belief_crossing} can be written in the form $K(t^\dagger(s),s)=0$, where $K(t,s):=q^{B,s}_t-p^\dagger_{t,s}$. In addition, we have just shown that $K_t(t^\dagger(s),s)>0$ for all $s$. Hence, by the implicit function theorem, $t^\dagger(s)$ is continuously differentiable, and $t^{\dagger\prime}(s)=-\frac{K_s(t^\dagger(s),s)}{K_t(t^\dagger(s),s)}$. We have $K_s(t^\dagger(s),s)=\frac{\partial q^{B,s}_{t^\dagger(s)}}{\partial s}-\frac{\partial p^\dagger_{t^\dagger(s),s}}{\partial s}$. We claim that $\frac{\partial p^\dagger_{t^\dagger(s),s}}{\partial s}=0$. Directly calculating this derivative yields 
\begin{align*}
   \frac{\partial p^\dagger_{t^\dagger(s),s}}{\partial s}=(p^\dagger_{t,s}-q^{B,s}_t) \frac{\mu e^{-\mu s}(1-\phi_s)}{e^{-\mu t}+\int_{s}^{t}\mu e^{-\mu\tau}(1-\phi_{\tau})d\tau}|_{t=t^\dagger(s)}=0
\end{align*}
where we have used \eqref{eq:aux_belief_crossing} at $t^\dagger(s)$. Thus $K_s(t^\dagger(s),s)=\frac{\partial q^{B,s}_{t^\dagger(s)}}{\partial s}<0$, and $t^{\dagger\prime}(s)>0$.
\end{proof}

It follows that for each timestamp $s$ for which bad evidence is eventually disclosed, i.e. for each $s\in [0,\bar\tau)$ in Corollary \ref{cor:determ_delays}, the optimal time to disclose is $D(s)=t^\dagger(s)$. We now argue that (under $\lambda_0,\lambda_1>0$), $\bar\tau=\infty$. First, we cannot have $\bar\tau=0$, because in this case $p_t=p^{\dagger}_{t,0}$ at all times, but we have already shown that $t^\dagger(0)<\infty$, so disclosure at time $D(0)=t^\dagger(0)$ is optimal. Next, we rule out $\bar\tau\in (0,\infty)$. By way of contradiction, suppose $\bar\tau\in (0,\infty)$. Since $t^\dagger$ is continuous, $\lim_{s\uparrow\bar\tau}D(s)=\lim_{s\uparrow\bar\tau}t^\dagger(s)=T:=t^\dagger(\bar\tau)<\infty$, while $D(\bar\tau)=\infty$. This means that by time $T$, bad evidence has been disclosed if and only if it arrived strictly before $\bar\tau$, and since evidence arrival is atomless $p_T=p^\dagger_{T,\bar\tau}=q^{B,\bar\tau}_{T}$, and thus it is optimal to disclose bad evidence from $\bar\tau$ at time $T$, a contradiction. Given that $\bar\tau=\infty$, $t^\dagger(s)$ fully characterizes the sender's disclosure of bad evidence for all $s\in [0,\infty)$. Finally, since $t^\dagger(s)$ is strictly increasing and continuous with range $[t^\dagger(0),\infty)$, for each $t$ in this interval, we have $t=t^\dagger(s)$ for some $s\geq 0$, and therefore $p_t=p^\dagger_{t^\dagger(s),s}$. For $t\in [0,t^\dagger(0))$, we have $p_t=p^\dagger_{t,0}$. The strategy can be easily recovered in CDF form: for all $\tau\leq t<\infty$ and $s\geq t$, $H(s;G,\tau,t)=1$, and $H(s;B,\tau,t)=\mathbbm{1}_{\{s\geq t^{\dagger}(\tau)\}}$.

\subsection{Proof of Proposition \ref{prop:timestamp properties}}
By definition, $t^{\dagger}(s) = s + a^{*}(s)$. Part (i) is established by Lemma \ref{lem: belief intersect}.

For part (ii), we sign $a^{*\prime}(s)$. Using calculations from the proof of Lemma \ref{lem: belief intersect}, 
\begin{align*}
 a^{*\prime}(s)=   t^{\dagger\prime}(s)-1=-\frac{K_s(t^\dagger(s),s)}{K_t(t^\dagger(s),s)}-1&=\frac{-\partial q^{B,s}_{t^\dagger(s)}/\partial s-K_t(t^\dagger(s),s)}{K_t(t^\dagger(s),s)}\\
   &=\frac{\partial q^{B,s}_{t^\dagger(s)}/\partial t-K_t(t^\dagger(s),s)}{K_t(t^\dagger(s),s)}=\frac{\partial p^{\dagger}_{t^\dagger(s),s}/\partial t}{K_t(t^\dagger(s),s)}.
\end{align*}
Thus, $a^{*\prime}(s)$ has the same sign as $\partial p^{\dagger}_{t^\dagger(s),s}/\partial t$.
Now fix $a=t^\dagger(s)-s$. Then $$\frac{dp_{s+a,s}^{\dagger}}{ds}=\frac{\partial p_{t,s}^{\dagger}}{\partial t}\bigg|_{t=s+a}+\frac{\partial p_{t,s}^{\dagger}}{\partial s}\bigg|_{t=s+a}\implies \frac{dp_{s+a,s}^{\dagger}}{ds}\bigg|_{a=a^{*}(s)}=\frac{\partial p_{t,s}^{\dagger}}{\partial t}\bigg|_{t=s+a^{*}(s)},$$
where we have used that $\frac{\partial p_{t,s}^{\dagger}}{\partial s}|_{t=s+a^{*}(s)}=0$. 
Then, it suffices to derive necessary and sufficient conditions under which $dp_{s+a,s}^{\dagger}/ds|_{a=a^{*}(s)}$ is (strictly) positive for any $s\ge0$. By definition,
$$p_{s+a,s}^{\dagger}=\frac{e^{-\mu(s+a)}\phi_{s+a}+\int_{s}^{s+a}\mu e^{-\mu\tau}(1-\phi_{\tau})q_{s+a}^{B,\tau}d\tau}{e^{-\mu(s+a)}+\int_{s}^{s+a}\mu e^{-\mu\tau}(1-\phi_{\tau})d\tau}.$$
By a change of variables, let $\tau=s+v$ where $v\in[0,a]$. Then $d\tau=dv$ (since $s$ is fixed). After some simplifying, we get
\begin{align}
p_{s+a,s}^{\dagger}=\frac{e^{-\mu a}\phi_{s+a}+\int_{0}^{a}\mu e^{-\mu v}(1-\phi_{s+v})q_{s+a}^{B,s+v}dv}{e^{-\mu a}+\int_{0}^{a}\mu e^{-\mu v}(1-\phi_{s+v})dv}.\label{eq:pdagger a}
\end{align}
Note that $q_{s+a}^{B,s+v}=q_{a}^{B,v}=p^{*}\left(1-e^{-(\lambda_{0}+\lambda_{1})(a-v)}\right)$ is independent of $s$ because the Markov process is time-homogeneous. This means $p_{s+a,s}^{\dagger}$ changes with $s$ only through its changes in $\phi_{\cdot}$. Differentiate both sides with respect to $s$ and note that $d\phi_{s+x}/ds=d\phi_{t}/dt|_{t=s+x}=:\dot{\phi}_{s+x}$. After simplifying, $$\frac{dp_{s+a,s}^{\dagger}}{ds}=\frac{e^{-\mu a}\dot{\phi}_{s+a}+\int_{0}^{a}\mu e^{-\mu v}\dot{\phi}_{s+v}\left(p_{s+a,s}^{\dagger}-q_{s+a}^{B,s+v}\right)dv}{e^{-\mu a}+\int_{0}^{a}\mu e^{-\mu v}(1-\phi_{s+v})dv}.$$
Evaluate at $a=a^{*}(s)$. Because the denominator is positive, $\frac{dp_{s+a,s}^{\dagger}}{ds}\big|_{a=a^{*}(s)}$ has the same sign as the numerator: $$e^{-\mu a^{*}(s)}\dot{\phi}_{s+a^{*}(s)}+\int_{0}^{a^{*}(s)}\mu e^{-\mu v}\dot{\phi}_{s+v}\left(p_{s+a^{*}(s),s}^{\dagger}-q_{s+a^{*}(s)}^{B,s+v}\right)dv.$$
Because $p_{s+a^{*}(s),s}^{\dagger}=q_{s+a^{*}(s)}^{B,s}$ so $p_{s+a^{*}(s),s}^{\dagger}-q_{s+a^{*}(s)}^{B,s+v}=q_{s+a^{*}(s)}^{B,s}-q_{s+a^{*}(s)}^{B,s+v}\ge0$ for all $v\in[0,a^{*}(s)]$ with equality only at $v=0$. Moreover, $\dot{\phi}_{t}=-(\lambda_{0}+\lambda_{1})(p_{0}-p^{*})e^{-(\lambda_{0}+\lambda_{1})t}$ for all $t$, so $\dot{\phi}_{t}>0$ if and only if $p_{0}<p^{*} $, and $\dot{\phi}_{s+a^{*}(s)}$ and $\dot{\phi}_{s+v}$ have the same sign. Thus, the numerator is positive if and only if $p_{0}<p^{*}$, negative if $p_0>p^*$, and $0$ if $p_0=p^*$. This proves (ii). 

For (iii), define $a^*(s,p_0)=t^\dagger(s,p_0)-s$, making dependence on $p_0$ explicit. By part (ii), $s\mapsto a^*(s,p_0)$ is strictly decreasing if $p_0> p^*$, constant if $p_0=p^*$, and strictly increasing if $p_0<p^*$. For $p_0\geq p^*$, the monotonicity and boundedness from below give the existence of a nonnegative limit denoted by $\alpha^*(p_0)$. Using a change of variables $v=\tau-s$ in \eqref{eq:aux_belief_crossing}, we have
\begin{align*}
    &p^\dagger_{s+a^*(s,p_0),s}=q^{B,s}_{s+a^*(s,p_0)}\\
    \iff &\frac{e^{-\mu a^*(s,p_0)}\phi_{s+a^*(s,p_0)} +\int_0^{a^*(s,p_0)} \mu e^{-\mu v}(1-\phi_{v+s})q^{B,v}_{a^*(s,p_0)}\,dv}{e^{-\mu a^*(s,p_0)}+\int_0^{a^*(s,p_0)} \mu e^{-\mu v}(1-\phi_{v+s})\,dv}=p^*(1-e^{-(\lambda_0+\lambda_1)a^*(s,p_0)}).
\end{align*}

Taking $s\to\infty$ and arguing via dominated convergence yields
\begin{align*}
    \frac{e^{-\mu \alpha^*(p_0)}p^* +\int_0^{\alpha^*(p_0)} \mu e^{-\mu v}(1-p^*)q^{B,v}_{\alpha^*(p_0)}\,dv}{e^{-\mu \alpha^*(p_0)}+\int_0^{\alpha^*(p_0)} \mu e^{-\mu v}(1-p^*)\,dv}=p^*(1-e^{-(\lambda_0+\lambda_1)\alpha^*(p_0)}).
\end{align*}
This equation is independent of $p_0$ and is exactly the equation for $p_0=p^*$ at $s=0$, which has a unique solution $\alpha^*:=\alpha^*(p^*)\in (0,\infty)$. Thus, we have $\alpha^*(p_0)=\alpha^*$ for all $p_0\in [p^*,1)$.

Next, consider $p_0<p^*$, and let $\alpha^*(p_0)$ again denote the (possibly infinite) limit. If the limit is finite, the same argument above shows that it must equal $\alpha^*$, and we are done. Hence, we need only rule out $\alpha^*(p_0)=+\infty$. Choose any finite $K>\alpha^*$. Then by uniqueness of $\alpha^*$ and the direction of crossing, we have $p^{\dagger}_{K,0}<q^{B,0}_K,$ and thus for sufficiently large $s$ (by convergence as $s\to \infty$ for fixed $K$), $p^{\dagger}_{s+K,s}<q^{B,s}_{s+K}$. 
Hence, for sufficiently large $s$, $a^*(s,p_0)\leq K$. This implies that the limit is indeed finite, so it must be $\alpha^*$.

\subsection{Proof of Corollary \ref{cor:no-disclosure-belief}}
By Proposition \ref{prop:timestamp properties}(i), $s+a^*(s)=t^\dagger(s)>s$ is strictly increasing in $s$; moreover, it is continuously differentiable and has range $[a^*(0),\infty)$ by the proof of Lemma \ref{lem: belief intersect}, which implies that for each $t\ge a^*(0)$, there is a unique $\tau(t)\ge 0$ satisfying $\tau(t)+a^*(\tau(t))=t$. Moreover $\tau(t)$ is strictly increasing and continuous in $t$. The equilibrium condition at each $t$ says
\[
p_t=p_{t,\tau(t)}^{\dagger}=q_t^{B,\tau(t)}
=p^*\left(1-e^{-(\lambda_0+\lambda_1)(t-\tau(t))}\right)
=p^*\left(1-e^{-(\lambda_0+\lambda_1)a^*(\tau(t))}\right).
\]
Since the right-hand side is strictly increasing in $a^*(\tau(t))$ and $\tau(t)$ itself is  strictly increasing, $p_t$ inherits the monotonicity of $a^*(s)$. Proposition \ref{prop:timestamp properties} (ii) then implies that $p_t$ is strictly increasing when $p_0<p^*$, strictly decreasing when $p_0>p^*$, and constant when $p_0=p^*$.

Finally, as $t\to\infty$, $\tau(t)\to\infty$, so by Proposition \ref{prop:timestamp properties} (iii), $a^*(\tau(t))\to\alpha^*$ and $p_t\to p^*\left(1-e^{-(\lambda_0+\lambda_1)\alpha^*}\right)
<p^*,$
where the inequality follows because $\alpha^*$ is finite.

\subsection{Proof of Proposition \ref{prop:min_princ}}
For $\lambda_0=0$, define $a^*(0)=+\infty$ (and retain the definition from Theorem \ref{thm:timestamp_eq} for $\lambda_0>0$). For each $t\geq a^*(0)$, define $\tau(t)$ as in the proof of Corollary \ref{cor:no-disclosure-belief}. For $t<a^*(0)$, set $\tau(t)=-\infty$. 

Now fix any $t\geq 0$. Let $\nu^G_s$ and $\nu^B_s$ denote the probabilities in equilibrium that good evidence and bad evidence, respectively, obtained at time $s$ are disclosed by time $t$. Fix an arbitrary alternative disclosure strategy and let $\tilde{\nu}^G_s$ and $\tilde{\nu}^B_s$ denote the analogous probabilities. Note that $\nu^G_s=1$ for all $s\in [0,t]$, and $\nu^B_s=1$ for all $s\leq \tau(t)$ and $\nu^B_s=0$ for all $s>\tau(t)$. Let $p_t$ and $\tilde{p}_t$ denote the beliefs conditional on no disclosure in the equilibrium and under the alternative disclosure strategy, respectively. Finally, let $N_t$ and $\tilde{N}_t$ denote the probabilities of no disclosure under the equilibrium and alternative strategies, respectively. 

By construction, we have (i) $p_t\leq q^{G,s}_t$ for all $s\in [0,t]$,  (ii) $p_t\leq q^{B,s}_t$ for $s\in [0,\tau(t)]$, and (iii) $q^{B,s}_t\leq p_t$ for $s\in(\tau(t),t]$.

By the law of iterated expectations, 
\begin{align}
\phi_t&=N_t p_t + \int_{0}^t\phi_s \mu e^{-\mu s}\nu^G_s q^{G,s}_t\,ds+\int_{0}^t(1-\phi_s) \mu e^{-\mu s}\nu^B_s q^{B,s}_t\,ds\label{eq:min_princ_LIE_1}\\
\text{and}\qquad \phi_t&=\tilde{N}_t \tilde{p}_t + \int_0^t\phi_s \mu e^{-\mu s}\tilde{\nu}^G_s q^{G,s}_t\,ds+\int_0^t(1-\phi_s) \mu e^{-\mu s}\tilde{\nu}^B_s q^{B,s}_t\,ds,\label{eq:min_princ_LIE_2}
\end{align}
where the belief at time $t$ after disclosure (at any time) of good or bad evidence with timestamp $s$, namely $q^{G,s}_t$ or $q^{B,s}_t$, is independent of the conjectured strategy. Equating the right hand sides of \eqref{eq:min_princ_LIE_1} and \eqref{eq:min_princ_LIE_2} and subtracting $p_t$ from both yields
\begin{align}
&0+  \int_0^t\phi_s \mu e^{-\mu s}\nu^G_s (q^{G,s}_t-p_t)\,ds+\int_0^t(1-\phi_s) \mu e^{-\mu s}\nu^B_s (q^{B,s}_t-p_t)\,ds \notag \\  =&\tilde{N}_t (\tilde{p}_t-p_t) + \int_0^t\phi_s \mu e^{-\mu s}\tilde{\nu}^G_s (q^{G,s}_t-p_t)\,ds+\int_0^t(1-\phi_s) \mu e^{-\mu s}\tilde{\nu}^B_s (q^{B,s}_t-p_t)\,ds\notag\\
\begin{split}\iff &\tilde{N}_t (\tilde{p}_t-p_t)=\int_0^t\phi_s \mu e^{-\mu s}(\nu^G_s -\tilde{\nu}^G_s)(q^{G,s}_t-p_t)\,ds\\&\qquad\qquad+\int_0^t(1-\phi_s) \mu e^{-\mu s}(\nu^B_s -\tilde{\nu}^B_s) (q^{B,s}_t-p_t)\,ds.\label{eq:min_princ_diff}\end{split}
    \end{align}
Now for all $s\in [0,t]$, we have $1=\nu^G_s\geq \tilde{\nu}^G_s$ and $q^{G,s}_t-p_t\geq 0$, so the first integral on the right hand side of \eqref{eq:min_princ_diff} is nonnegative. We then turn to the integrand of the second integral. For $s\in [0,\tau(t)]$, we have $1=\nu^B_s\geq \tilde{\nu}^B_s$ and $q^{B,s}_t\geq p_t$, so the integrand is nonnegative. And for $s\in (\tau(t),t]$, we have $0=\nu^B_s\leq \tilde{\nu}^B_s$ and $q^{B,s}_t\leq p_t$, so again the integrand is nonnegative. Hence, the right hand side of \eqref{eq:min_princ_diff} is nonnegative. As
$\tilde{N}_t\geq e^{-\mu t}>0$, we must have $\tilde p_t\geq p_t$.

\subsection{Proof of Proposition \ref{prop:no_timestamps_necessary_delay}}
Define $\Lambda:=\lambda_0+\lambda_1$. By way of contradiction, suppose there exists an equilibrium and some $\delta>0$ such that $S_t=0$ for all $t\in (0,\delta)$, where $S_t$ is defined, as in Section \ref{subsec:no_timestamps_good}, as the probability that the sender has good evidence at time $t$ conditional on no disclosure by $t$. This means that for almost all $\tau\in (0,\delta)$, good evidence arriving at $\tau$ is disclosed immediately, and for almost all dates $t\in (0,\delta)$, good evidence disclosed at $t$ is interpreted as fresh, i.e. $q_t=1$. Now suppose the sender obtains good evidence at time $0$. Let $F(T)$ denote the sender's discounted payoff from disclosing at time $T$. By disclosing immediately, the sender's payoff is $F(0)=\int_0^\infty e^{-rt}\left[p^*+(1-p^*)e^{-\Lambda t}\right]\,dt=\frac{\lambda_0+r}{r(\Lambda+r)}$. By instead disclosing at some time $\varepsilon \in (0,\delta)$ (among those for which $q_\varepsilon=1$), the sender's payoff is $F(\varepsilon)=\int_0^\varepsilon e^{-rt}p_t\,dt+\int_\varepsilon^\infty e^{-rt}\left[p^*+(1-p^*)e^{-\Lambda(t-\varepsilon)}\right]\,dt=\int_0^\varepsilon e^{-rt}p_t\,dt+e^{-r\varepsilon}F(0)$. If $p_0>\frac{\lambda_0+r}{\Lambda+r}$, then since $p_t$ is right-continuous, we have for sufficiently small $\varepsilon>0$ that $$\int_0^\varepsilon e^{-rt}p_t\,dt>\int_0^\varepsilon e^{-rt}\left(\frac{\lambda_0+r}{\Lambda+r}\right)\,dt=\frac{1-e^{-r\varepsilon}}{r}\frac{\lambda_0+r}{\Lambda+r}=(1-e^{-r\varepsilon})F(0).$$ Hence, $F(\varepsilon)>(1-e^{-r\varepsilon})F(0)+e^{-r\varepsilon}F(0)=F(0)$. Thus, delay until $\varepsilon$ is strictly preferable to immediate disclosure. By continuity, delay until $\varepsilon$ is also strictly preferable to immediate disclosure for all types in $[0,\eta)$ for some sufficiently small $\eta\in (0,\varepsilon)$, a contradiction.

\subsection{Proof of Lemma \ref{lem:no_good_bang_bang}}
Let $S$, $G_{0}$, and $G_1$ be defined as in Section \ref{subsec:construction}. 
For all finite $\tau\geq 0$ and all $T\in [\tau,\infty]$, define $U(T;\tau)$ as the payoff from time $\tau$ of disclosing good evidence at time $T\geq \tau$. 
For each $\tau\in [0,\infty)$, define $\mathcal{T}(\tau):=\{T\in [\tau,\infty]:U(T;\tau)\geq U(T';\tau) \text{ for all } T'\in [\tau,\infty]\}$, and define $T^*(\tau)=\sup(\mathcal{T}(\tau))$. Since equilibrium requires a sender with evidence arriving at any $\tau$ to play a best response, $\mathcal{T}(\tau)$ must be nonempty for each $\tau$. If $T^*(\tau)=\tau$ for all $\tau$, then there is immediate disclosure almost surely at all times, so we have $S_t\equiv 0$ and we are done. Hence, assume that $T^*(\tau_0)>\tau_0$ for some $\tau_0$. If the set of such $\tau_0$ is bounded away from $0$, we are also done, because in this case we would again have $S_t=0$ on some interval $[0,\varepsilon)$ for $\varepsilon>0$. Hence, assume there exist such $\tau_0$ arbitrarily close to $0$. There are three possibilities: (i) $T^*(\tau_0)\in (\tau_0,\infty)$ for some $\tau_0$, (ii) $T^*(\tau_0)=\infty$ and $\infty\notin \mathcal{T}(\tau_0)$ for some $\tau_0$, and (iii) $\infty\in \mathcal{T}(\tau_0)$ for all $\tau_0$. Also note that since timestamps are payoff irrelevant for the sender, for any two timestamps $\tau<\tau'$ and times $T,T'\geq \tau'$, $U(T';\tau')-U(T;\tau')=U(T';\tau)-U(T;\tau)$, since flow payoffs prior to $\tau'$ are the same if disclosure occurs at $\tau'$ or later. It follows that for all $\tau<\tau'$, $\mathcal{T}(\tau)\cap [\tau',\infty]\subseteq \mathcal{T}(\tau')$. Moreover, if $\mathcal{T}(\tau)\cap[\tau',\infty]$ is nonempty, then $\mathcal{T}(\tau)\cap[\tau',\infty]=\mathcal{T}(\tau')$.

In case (i), for any date $\tau\in (\tau_0,T^*(\tau_0))$ at which good evidence arrives, since $T^*(\tau_0)$ is defined as a supremum, the set $[\tau,\infty]\cap \mathcal{T}(\tau_0)$ is nonempty and therefore is the same as $\mathcal{T}(\tau)$. The former set is bounded above by $T^*(\tau_0)$, and this implies that for all such $\tau$ there is no optimal disclosure time after $T^*(\tau_0)$. Now among good evidence types with arrival in $[0,\tau_0]$, if any do not disclose by time $\tau_0$ with probability one, they must disclose by time $T^*(\tau_0)$ because their preferences from time $\tau_0$ are the same as the $\tau_0$ type, who must disclose by $T^*(\tau_0)$. Moreover, there is zero probability of arrival at date $T^*(\tau_0)$ itself. So $S_{T^*(\tau_0)}=0$. 

We can rule out case (ii) as follows. In case (ii), there exists an increasing sequence $T_n$ in $\mathcal{T}(\tau_0)$ with $T_n\to \infty$. Now $T_n\to \infty$ implies $U(T_n;\tau_0)\to U(\infty;\tau_0)$. Also, $U(T_n;\tau_0)=u$ for some constant $u$. It follows that $U(\infty;\tau_0)=u$, and since each $T_n$ is a maximizer, $\infty$ must also be, contradicting the definition of case (ii).

In case (iii), toward deriving a contradiction, we argue that the sender must not disclose any bad evidence at any time.  Since the age of good evidence disclosed at time $t$ is at most $t$, we have $q^G_t\geq p^*+(1-p^*)e^{-\Lambda t}>p^*$. Let $F(q)=\frac{p^*}{r}+\frac{q-p^*}{\Lambda+r}$ denote the payoff from disclosing evidence (good or bad) that results in posterior $q$, and note that $F$ is strictly increasing. It follows that disclosing good evidence at time $t$ gives a payoff of $F(q^G_t)>F(p^*)$, and since the sender is willing to wait, it must be that $U(\infty;t)\geq F(q^G_t)$, giving $U(\infty;t)>F(p^*)$. Meanwhile, any bad evidence disclosure generates a posterior of at best $p^*(1-e^{-\Lambda t})\leq p^*$, so the payoff from disclosure of bad evidence is at most $F(p^*)$. It follows that the sender would never disclose bad evidence. 

Under any strategy in which bad evidence is never disclosed, we have $p_t\leq \phi_t$ for all $t>0$. But for all $\tau\in[0,t]$, $q^{G,\tau}_t>\phi_t$. Hence, after disclosing good evidence, even under the belief that it is from time $0$, the sender's resulting flow payoff is at least $q^{G,0}_t>\phi_t\geq p_t$ for all $t>0$. It follows that at all times, disclosing good evidence immediately would be strictly preferred to never disclosing, a contradiction.

We prove the second part of the lemma in the Online Appendix.

\subsection{Proof of Theorem \ref{thm:three_phase_eqbm}}
\subsubsection{Setup and Laws of Motion}
We use the notation and laws of motion introduced in Section \ref{subsec:construction}. In particular, $G_i$ denotes the joint probability that there is undisclosed good evidence and the time-$t$ state is $i$; $S_t=G_{0,t}+G_{1,t}$ denotes the total stock of undisclosed good evidence; $q_t=G_{1,t}/S_t$ is the receiver’s belief after disclosure of good evidence; and $p_t$ is the no-disclosure belief. 

During either the stockpiling or purging phase, $(G_1,G_0,p)$ have the following laws of motion \eqref{eq:G1_ODE_mix}-\eqref{eq:p_ODE_mix}. Change variables to $S=G_0+G_1$ and $q=G_1/S$ and differentiate to obtain 
\begin{align}
    \dot{S}_t &=(p_t-S_t q_t)\mu-S_t \kappa_t(1-S_t)\label{eq:S_ODE}\\
    \dot{q}_t &=-\lambda_1 q_t +(1-q_t)\mu\frac{p_t-S_t q_t}{S_t}\label{eq:q_ODE_chain_rule}
\end{align}

During the transparency phase, we have $q_t\equiv 1$ and $p$ evolves according to
 \begin{align}
     \dot{p}_t=-\lambda_1 p_t-\mu p_t(1-p_t).\label{eq:p_law_transparency}
\end{align}
Note that $p$ is decreasing during the transparency phase.

\subsubsection{Incentive compatibility}
Suppose the sender possesses good evidence at time $t_0$ and plans to disclose it at time $t\geq t_0$. Then for $s\geq t$, we have $\pi_s=q_t e^{-\lambda_1(s-t)}$. The sender's continuation payoff is
\begin{align*}
    F(t;t_0):=\int_{t_0}^t e^{-r(s-t_0)}p_s\,ds+ \int_t^\infty e^{-r(s-t_0)}q_t e^{-\lambda_1 (s-t)}\,ds.
\end{align*}
Indifference between disclosing at $t_0$ and at time  $t_0+\varepsilon$ for all sufficiently small $\varepsilon$ requires $F_t(t;t_0)|_{t=t_0}=0$; equivalently,
\begin{align*}
    0&=p_{t_0}-q_{t_0}+\int_{t_0}^\infty (\lambda_1 q_{t_0}+\dot{q}_{t_0})e^{-r(s-t_0)}e^{-\lambda_1(s-t_0)}\,ds=p_{t_0}-q_{t_0}+(\lambda_1 q_{t_0}+\dot{q}_{t_0})\frac{1}{\lambda_1+r}.
\end{align*}
Rearranging and replacing $t_0$ with $t$ yields the local indifference condition 
\begin{align}
    \dot{q}_t=r q_t-(
    \lambda_1+r)p_t,\label{eq:local_indiff}
\end{align}
which must hold in a purging phase. In a stockpiling phase, we must have $\dot{q}_t\geq r q_t-(\lambda_1+r)p_t$. In a transparency phase, we must have $\dot{q}_t\leq r q_t-(\lambda_1+r)p_t$. 
These conditions are sufficient for global optimality since they ensure that $F(t;t_0)$ is nondecreasing during a stockpiling phase, constant in a purging phase, and nonincreasing in a transparency phase.

\subsubsection{Identifying $\pund$ and $\pbar$}
In this section we prove the following lemma. Let $\pbar:=\frac{r+\lambda_1/2}{\lambda_1+r}$ and $\pund:=\frac{r}{\lambda_1+r}$.
\begin{lemma}\label{lem:pund_and_pbar}
    Any three-phase equilibrium must start in the stockpiling phase if $p_0 > \pbar$, the transparency phase if $p_0 \leq \pund$, and the purging phase if $p_0\in (\pund,\pbar]$.
\end{lemma}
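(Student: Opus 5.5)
The plan is to read off the initial phase from the local incentive conditions derived above, evaluated as $t\downarrow 0$. A stockpiling phase requires $\dot q_t\ge rq_t-(\lambda_1+r)p_t$, a purging phase requires equality \eqref{eq:local_indiff}, and a transparency phase requires $\dot q_t\le rq_t-(\lambda_1+r)p_t$. At $t=0$ we have $S_0=0$, $q_0=1$, and $p$ right-continuous at $p_0$. So the whole question is the value of $\dot q$ (more precisely, of $\lim_{t\downarrow0}(1-q_t)/t$) that each phase generates near $0$. The three phases are mutually exclusive at the start, so it suffices to show that each phase is feasible at $t=0$ only on the stated interval of $p_0$.

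\emph{Transparency.} If the equilibrium starts in transparency, then $q\equiv1$ and $\dot q=0$ on an initial interval. The condition becomes $0\le r-(\lambda_1+r)p_t$ for small $t$. Letting $t\downarrow0$ gives $p_0\le\pund$. Conversely, if $p_0\le\pund$, then $p$ is decreasing under \eqref{eq:p_law_transparency}, so $p_t\le\pund$ for all $t$. The condition therefore holds globally, and starting in transparency is consistent (this also anticipates the third bullet of Theorem~\ref{thm:three_phase_eqbm}). If $p_0>\pund$, transparency fails on an initial interval by continuity of $p$.

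\emph{Stockpiling.} Set $\kappa=0$. From \eqref{eq:S_ODE}, $S_t=p_0\mu t+o(t)$. Write $q_t=1-ct+o(t)$ and substitute into \eqref{eq:q_ODE_chain_rule}. The arrival term gives $(1-q_t)\mu(p_t-S_tq_t)/S_t\to c$, so $-c=-\lambda_1+c$, i.e.\ $\dot q_0=-\lambda_1/2$. (Alternatively, this follows from the closed-form stockpiling solution.) The delay condition at $0$ is then $-\lambda_1/2\ge r-(\lambda_1+r)p_0$, i.e.\ $p_0\ge\pbar$. For $p_0>\pbar$ the inequality is strict, so stockpiling is strictly optimal on an initial interval; hence purging cannot start at $0$ (see below) and transparency is excluded since $\pbar>\pund$. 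The boundary case $p_0=\pbar$ is the delicate step. There, stockpiling would have to hold with equality at $0$, so I will expand one order further, computing $\ddot q_0$ from the closed form and comparing it with $r\dot q_0-(\lambda_1+r)\dot p_0$ where $\dot p_0=-\lambda_1 p_0$. The aim is to show that the stockpiling inequality fails for small $t>0$, so that $\Tund=0$. I expect this second-order sign computation to be the main obstacle, since it requires care with the $o(t)$ terms in $q$.

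\emph{Purging.} Now $\kappa_t>0$ is free. The key observation is that $\kappa$ does not appear in \eqref{eq:q_ODE_chain_rule}; it affects $q$ only through $S$. Since $\kappa\ge0$, \eqref{eq:S_ODE} gives $S_t\le p_0\mu t+o(t)$. Writing $S_t\approx kt$ with $k\in(0,p_0\mu]$, the same expansion yields $\dot q_0=-c$ with $c=\lambda_1k/(p_0\mu-k)$ for $k<p_0\mu$, so $c$ ranges over $(0,\infty)$ as $k$ ranges over $(0,p_0\mu)$. Hence any $\dot q_0<0$ is attainable by an appropriately explosive $\kappa_t\sim\text{const}/t$ near $0$; this is admissible because $\kappa$ is only required to be continuous on the open interval $(\Tund,\Tbar)$. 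The indifference \eqref{eq:local_indiff} at $0$ requires $\dot q_0=r-(\lambda_1+r)p_0$. The constraint $S_t\le p_0\mu t$ should force $\dot q_0\ge-\lambda_1/2$, and $\dot q_0<0$ is needed since $q\le1$ and $q\not\equiv1$ in purging. These two requirements together give $p_0\in(\pund,\pbar]$. Combining the three cases then yields the lemma. I will double-check the direction of the $k$-to-$c$ correspondence against the bound $\dot q_0\ge-\lambda_1/2$, since that is where the endpoint $\pbar$ enters.
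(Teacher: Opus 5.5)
Your transparency step matches the paper's, and the overall logic (each phase can start only on its interval, plus exhaustiveness) is right. The step excluding a stockpiling start for $p_0\le\pbar$, however, has a genuine gap.

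\textbf{The stockpiling step.} You treat the delay inequality $\dot q_t\ge rq_t-(\lambda_1+r)p_t$ as necessary at dates $t\downarrow 0$ of a stockpiling phase. It is sufficient for delay, but sequential rationality does not imply it at interior dates.

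A sender holding good evidence at $t_0<\Tund$ only needs his plan to beat $F(t;t_0)$ for $t\in[t_0,\Tund]$. By indifference during purging, that plan is worth $F(\Tund;t_0)$. Moreover $\partial_tF(t;t_0)=e^{-r(t-t_0)}c(t)/(\lambda_1+r)$, where $c(t):=\dot q_t-rq_t+(\lambda_1+r)p_t$. Your first-order computation gives $c<0$ on an initial interval for $p_0<\pbar$. Your second-order computation does the same for $p_0=\pbar$, and it is correct: $\ddot q_0=\lambda_1(\lambda_1-\mu)/6$, so $c'(0)=-\lambda_1(3r+2\lambda_1+\mu)/6<0$. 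But $c<0$ near $0$ is no contradiction if $F(\cdot\,;0)$ climbs back above $F(0;0)$ by $\Tund$.

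What equilibrium does force is $c(\Tund)\ge0$ at the \emph{end} of stockpiling. Otherwise types arriving just before $\Tund$ would strictly prefer to disclose on arrival. Along the closed-form path this reads $p_0\ge b(\Tund)$ with $b(t)=e^{\lambda_1t}(rq_t-\dot q_t)/(\lambda_1+r)$, and $\Tund$ could a priori lie anywhere in $(0,\infty)$. You therefore need $b(t)>b(0)=\pbar$ for \emph{every} $t>0$. Expansions at $0$ give this only on a neighborhood of $0$.

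That global monotonicity is the missing idea and the real content of the paper's proof. There, $b'(t)$ is a positive factor times an exponential polynomial $P(t)$. This $P$ has three sign changes (so at most three real roots counting multiplicity), a triple root at $0$, and a positive leading term, hence $P>0$ on $(0,\infty)$. The same fact makes $c$ single-crossing along the stockpiling path, which is exactly what would legitimize your pointwise reading of the incentive condition. Since the first bullet follows by exhaustion, this exclusion is load-bearing for the other two bullets.

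\textbf{The purging step.} Once repaired, this is a valid alternative to the paper's argument, but it needs three fixes.
\begin{itemize}
\item Your expansion has a slip. Substituting $q_t=1-ct+o(t)$ and $S_t=kt+o(t)$ into \eqref{eq:q_ODE_chain_rule} gives $-c=-\lambda_1+c\mu p_0/k$, i.e.\ $c=\lambda_1k/(k+\mu p_0)$. This is increasing in $k$ and equals $\lambda_1/2$ at $k=\mu p_0$. So not every negative slope is attainable, and $S_t\le\mu p_0t$ forces $c\le\lambda_1/2$, i.e.\ $\dot q_{0+}\ge-\lambda_1/2$. This ODE bound on the stock replaces the paper's argument, which bounds the mean age of disclosed evidence by noting that the arrival-date density of the undisclosed stock is nondecreasing and comparing it with the uniform density.
\item Do not posit $S_t\approx kt$; derive it. Indifference pins $\dot q_{0+}=-c$, hence $\lim(1-q_t)/t=c$. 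For $0<c<\lambda_1$, \eqref{eq:q_ODE_chain_rule} then pins $\lim S_t/t=c\mu p_0/(\lambda_1-c)$. The case $c\ge\lambda_1$ is ruled out by positivity of the arrival term together with $S_t\le\mu p_0t$.
\item ``$q\le1$ forces $\dot q_0<0$'' does not exclude $p_0=\pund$, where $\dot q_{0+}=0$. Instead use $q_t\ge e^{-\lambda_1t}$ and $p_t<p_0e^{-\lambda_1t}$ (strict since $\kappa>0$). These give $\dot q_t>e^{-\lambda_1t}(\lambda_1+r)(\pund-p_0)\ge0$ near $0$, so $q_t>1$, a contradiction.
\end{itemize}
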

\begin{proof}
    By definition, any three-phase equilibrium must start in one of the three phases. We consider each possibility separately and derive necessary conditions on $p_0$. We then conclude the proof by observing that the conditions are mutually exclusive, so that $p_0$ determines the starting phase as in the proposition.

\noindent\textbf{Starting in stockpiling phase:}
We argue that if the game starts in the stockpiling phase, then a sender who obtains good evidence at time $0$ must be willing to wait an instant, and this implies $p_0>\pbar=\frac{r+\lambda_{1}/2}{\lambda_1+r}$. 

A necessary and sufficient condition to start in the stockpiling phase is that for some time $\underline T>0$, waiting until $\underline T$ to disclose must be weakly preferred to disclosing at time $t$ for a sender who obtains good evidence at time $t\in [0,\underline T)$. Hence, recalling \eqref{eq:local_indiff}, a necessary condition for starting in stockpiling is that for some $\underline T>0$, 
\begin{align}
    \dot{q}_{\underline T}\geq r q_{\underline T}-(\lambda_1+r)p_{\underline T}.
\end{align}

Using $\kappa_t\equiv 0$, the laws of motion for beliefs admit a (unique) closed-form solution:
\begin{align}
    G_{1,t}&=e^{-\lambda_1 t}p_0(1-e^{-\mu t})\label{eq:G1_delay_closed_form}\\
    G_{0,t}&=e^{-\lambda_1 t}p_0\left(\frac{\lambda_1 e^{-t\mu}+\mu e^{\lambda_1 t}}{\mu+\lambda_1}-1\right)\label{eq:G0_delay_closed_form}\\
    p_t &=e^{-\lambda_1 t}p_0\label{eq:p_delay_closed_form}\\
    q_t &=\frac{G_{1,t}}{G_{0,t}+G_{1,t}}=\frac{(e^{\mu t}-1)(\lambda_1+\mu)}{(e^{(\lambda_1+\mu)t}-1)\mu} \text{ for $t>0$}.\label{eq:q_delay_closed_form}
\end{align}
Recall that $q_0=1$; it is easy to calculate the right derivative $\dot{q}_0=-\frac{\lambda_1}{2}$. 
   
Using \eqref{eq:p_delay_closed_form}, we get the necessary condition 
\begin{align*}
   p_0 &\geq b(t):=\frac{e^{\lambda_1 t}(rq_t-\dot{q}_t)}{\lambda_1+r}
\end{align*}
for some $t>0$,  
where $q_t$ is given by  \eqref{eq:q_delay_closed_form}. Define $b(0):=\lim_{t\downarrow 0}b(t)=\frac{r\cdot q_0-\dot{q}_0}{\lambda_1+r} =\frac{r\cdot 1+\frac{\lambda_1}{2}}{\lambda_1+r}=\pbar.$ 
We claim that $b'(t)>0$ for all $t>0$. Direct calculation shows that
\begin{align*}
    b'(t)&=\frac{e^{\lambda_1 t}(\lambda_1+\mu)}{(e^{(\lambda_1+\mu)t}-1)^3(\lambda_1+r)\mu}P(t),\qquad \text{where}\\
    P(t)&:=
\mu(\lambda_1+\mu+r)\,e^{t(2\lambda_1+2\mu)} -(\lambda_1+\mu)(2\lambda_1+\mu+r)\,e^{t(\lambda_1+2\mu)}  \\
&+(2\lambda_1^2+3\lambda_1\mu+\lambda_1 r+\mu^2-\mu r)\,e^{t(\lambda_1+\mu)}  -(\lambda_1+\mu)(\mu-r)\,e^{t\mu} -\lambda_1 r.
\end{align*}
For each of two cases $\mu\geq r$ and $\mu<r$, the exponential polynomial $P(t)$ has three sign changes and thus at most three real roots counting multiplicity. It is also easy to check that there is a triple root at $t=0$. Hence, there are no other real roots. Also, since the leading term is positive, $P(t)$ is  positive for sufficiently large $t$. We conclude that for all $t>0$, we have $P(t)>0$ and therefore $b'(t)>0$. It follows that $p_0\geq b(t)$ for some $t>0$ only if $p_0 > b(0) =\pbar.$
This is exactly the inequality asserted in Lemma \ref{lem:pund_and_pbar}.

For later use, let us define $c(t)=\dot{q}_t-rq_t+(\lambda_1+r)p_t$. Assume $p_0>\pbar$. We show that $c$ crosses $0$ from above at the unique point $t$ where $p_0=b(t)$. We have 
\begin{align*}
    c(t)&=(\lambda_1+r)e^{-\lambda_1 t}(p_0-b(t))\\
    \implies c'(t)&=-\lambda_1 c(t)-b'(t)(\lambda_1+r)e^{-\lambda_1 t}.
\end{align*}
When $c(t)=0$, the right hand side has the same sign as $-b'(t)<0$, as desired. Moreover, it is easy to show that $\lim_{t\to \infty}b(t)=\frac{\lambda_1+\mu}{\mu}>1>p_0$. Hence, there is a unique crossing point, which we denote by $\Tund$.

\noindent\textbf{Starting in transparency phase:}
In this case $q\equiv 1$ and $\dot{q}\equiv 0$, so at time $0$ the left hand side of \eqref{eq:local_indiff} is $0$ and the right hand side is $r-(\lambda_1+r)p_0$. Now for immediate disclosure to be optimal, we must have $0\leq r-(\lambda_1+r)p_0$, or equivalently, $p_0 \leq \frac{r}{\lambda_1+r}=\pund$.

\noindent\textbf{Starting in purging phase:} 
The sender must be indifferent between disclosing good evidence at time $0$ and disclosing at time $t$ for all sufficiently small $t>0$. Therefore, the local IC constraint \eqref{eq:local_indiff} must hold for all $t\in (0,\varepsilon)$ for sufficiently small $\varepsilon>0$. 

At $t=0$, we have $q_0=1$, and $q_t\geq e^{-\lambda_1 t}$ since good evidence is no older than $t$ units of time. Hence $q$ is continuous at $0$. Also, $p$ is continuous at $0$. First, we have $p_t< \phi_t=p_0 e^{-\lambda_1 t}$ for all $t\in (0,\varepsilon)$ since $\kappa_t>0$ on $(0,\varepsilon)$. Second, we have $p_t \geq \tilde{p}_t$, where $\tilde{p}$ is the solution to $\dot{p}_t=-\lambda_1 p_t-\mu p_t(1-p_t)$ with initial value $p_0$, which is the no-disclosure belief assuming a strategy of immediate disclosure of good evidence. For later, also note that we have strict inequality $p_t>\tilde{p}_t$ for all $t\in (0,\varepsilon)$ for sufficiently small $\varepsilon>0$. Both bounds $\phi$ and $\tilde{p}$ are continuous and equal $p_0$ at time $0$, so $\lim_{t\downarrow 0}p_t=p_0$.  

Now by \eqref{eq:local_indiff}, we have $\lim_{t\downarrow 0}\dot{q}_t=r-(\lambda_1+r)p_0$; since $q$ is continuous at $0$, this implies $\dot{q}_{0+}=r-(\lambda_1+r)p_0$. Hence, $\dot{q}_{t}=r q_t-(\lambda_1+r)p_t>r e^{-\lambda_1 t}-(\lambda_1+r)p_0 e^{-\lambda_1 t}=e^{-\lambda_1 t}(\lambda_1+r)(\pund-p_0)$ for all $t\in (0,\varepsilon)$. We must have $p_0> \pund$, otherwise we would have $\dot{q}_t>0$ for all $t\in (0,\varepsilon)$, and given $q_0=1$, and continuity at $0$, this would imply $q_t>1$ for some $t$.

We now argue that $\dot{q}_{0+}\geq -\frac{\lambda_1}{2}$.
Let $X_t=t-\tau$ be the (possibly random) age of evidence disclosed at $t$. Then since $\dot{q}_{0+}$ exists by the preceding argument,
\begin{align*}
    \dot{q}_{0+}=\lim_{t\to 0}\frac{\mathbb{E}[e^{-\lambda_1 X_t}]-1}{t}=\lim_{t\to 0}\frac{-\lambda_1 \mathbb{E}[X_t]+O(t^2)}{t}=-\lambda_1 \lim_{t\to 0}\frac{\mathbb{E}[X_t]}{t}.
\end{align*}
Expanding and writing $\gamma_s:=\phi_s \mu e^{-\mu s}$ yields
\begin{align*}
    \lim_{t\to 0}\frac{\mathbb{E}[X_t]}{t} = \lim_{t\to 0}\frac{\int_0^t (t-s)\gamma_s e^{-\int_s^t \kappa_u \,du}\,ds}{t \int_0^t \gamma_s e^{-\int_s^t \kappa_u \,du}\,ds}.
\end{align*}
Since $\gamma$ is continuous and $\gamma_0>0$, for any $\varepsilon>0$, we have that for all sufficiently small $t$, $\gamma_s\in ((1-\varepsilon)\gamma_0,(1+\varepsilon)\gamma_0)$ for all $s\in [0,t]$. Hence, 
\begin{align*}
     \lim_{t\to 0}\frac{\int_0^t (t-s)\gamma_s e^{-\int_s^t \kappa_u \,du}\,ds}{t \int_0^t \gamma_s e^{-\int_s^t \kappa_u \,du}\,ds}
\leq \frac{1+\varepsilon}{1-\varepsilon} \lim_{t\to 0}\frac{\int_0^t (t-s)e^{-\int_s^t \kappa_u \,du}\,ds}{t \int_0^t  e^{-\int_s^t \kappa_u \,du}\,ds}
\end{align*}
Since the density $\frac{e^{-\int_s^t \kappa_u \,du}}{\int_0^t  e^{-\int_x^t \kappa_u \,du}\,dx}$ is nondecreasing in $s$, while the function $t-s$ is decreasing, the weighted average is bounded above by replacing the density with a uniform density. Thus,
\begin{align*}
    \lim_{t\to 0}\frac{\mathbb{E}[X_t]}{t}\leq \frac{1+\varepsilon}{1-\varepsilon}\lim_{t\to 0}\frac{\int_0^t (t-s)\,ds}{t \int_0^t \,ds}=\frac{1+\varepsilon}{1-\varepsilon}\lim_{t\to 0}\frac{\frac{t^2}{2}}{t^2}=\frac{1+\varepsilon}{1-\varepsilon}\frac{1}{2}.
\end{align*}
Since $\varepsilon$ is arbitrary, we conclude that $\dot{q}_{0+}\geq -\frac{\lambda_1}{2}$, and thus
\begin{align*}
    p_0=\frac{r-\dot{q}_{0+}}{\lambda_1+r}\leq \frac{r+\frac{\lambda_1}{2}}{\lambda_1+r}.
\end{align*}

\noindent\textbf{Summary}
We have shown that starting in the stockpiling phase requires $p_0 >\pbar$, the purging phase $p_0\in (\pund,\pbar]$, and the transparency phase $p_0\leq \pund$. Since a three-phase equilibrium must start in one of these three phases, the only possibility is stockpiling for $p_0>\pbar$, transparency for $p_0\leq \pund$, and purging for $p_0\in (\pund,\pbar]$.
\end{proof}

We now establish existence and uniqueness within the class of three-phase equilibria for $p_0$ in each of the three intervals. 

\subsubsection{Low $p_0$.}
Suppose $p_0\leq \frac{r}{\lambda_1+r}$ and that the receiver conjectures that bad evidence is never disclosed, while good evidence is disclosed immediately. 

Then $q_t\equiv 1$ and $p$ evolves according to \eqref{eq:p_law_transparency}. Note that because $p$ is strictly decreasing, if $p_0\leq \pund$ then this inequality remains true at all future times. Hence, immediate disclosure of evidence is optimal at all histories, including after off-path delay of disclosure.

\subsubsection{High $p_0$.} Suppose that $p_0>\frac{r+\lambda_1/2}{r+\lambda_1}$. Fix any three-phase equilibrium. Since it must start in the stockpiling phase, the phases are characterized by cutoff times $0<\Tund_0\leq\Tbar_0<\infty$. Moreover, the possibility $\Tund_0=\Tbar_0$ is ruled out by the no-atom argument of Lemma \ref{lem:no_good_bang_bang}, which doesn't require the support restriction since there is permanent transparency on $[\Tbar_0,\infty)$. Let $(\kappa_t)_{t\in [\Tund_0,\Tbar_0)}$ be the disclosure intensity during the purging phase. We derive necessary conditions that uniquely determine these objects and then verify that the sender's strategy is optimal. 

Let $(p,q,S)$ denote the belief/stock variables in the candidate equilibrium. It is useful to define auxiliary variables $(p^0,S^0,q^0)$ as the belief/stock variables under an alternative conjecture that good evidence and bad evidence are never disclosed (with $q^0$ determined by neutral belief updating). 

Since both conjectures prescribe no disclosure on $[0,\underline T_0)$, and the variables $(p_t,q_t,S_t)$ and $(p_t^0,q_t^0,S_t^0)$ are continuous at $\underline T_0$ (since there are no disclosure atoms at $\underline T_0$), we have
\[
(p_t,q_t,S_t)=(p_t^0,q_t^0,S_t^0)\qquad \text{for all }t\le \underline T_0.
\]
These have closed-form expressions given by \eqref{eq:G1_delay_closed_form}-\eqref{eq:q_delay_closed_form} where $q_0=1$ and $q_t=G_{1,t}/(G_{0,t}+G_{1,t})$ for $t>0$.

\begin{lemma}\label{lem:unique_T_und_neutral_beliefs}
    We have $\Tund_0=\Tund$, where $\Tund$ is the unique time at which local IC \eqref{eq:local_indiff} holds under the no-disclosure beliefs:
    \begin{align*}
        \dot{q}^0_t= r q^0_t -(\lambda_1+r) p^0_t.
    \end{align*}
\end{lemma}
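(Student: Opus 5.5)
We need to show that the end of stockpiling in any three-phase equilibrium (with $p_0>\pbar$) is exactly the time $\Tund$ identified in the proof of Lemma \ref{lem:pund_and_pbar}. That is the unique zero of $c(t)=\dot q^0_t-rq^0_t+(\lambda_1+r)p^0_t$, at which $c$ crosses zero from above. The plan is to prove the two inequalities $\Tund_0\le\Tund$ and $\Tund_0\ge\Tund$ separately, using the equalities $(p,q,S)=(p^0,q^0,S^0)$ on $[0,\Tund_0]$ noted above.

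\textbf{Step 1 ($\Tund_0\le\Tund$).} During the stockpiling phase, the incentive condition for delay requires $\dot q_t\ge rq_t-(\lambda_1+r)p_t$ for almost every $t\in(0,\Tund_0)$. Since the equilibrium variables coincide with the no-disclosure variables there, this says $c(t)\ge 0$ on $(0,\Tund_0)$. We showed that $c(t)=(\lambda_1+r)e^{-\lambda_1t}(p_0-b(t))$ with $b$ strictly increasing, so $c<0$ on $(\Tund,\infty)$. Hence $\Tund_0\le\Tund$.

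\textbf{Step 2 ($\Tund_0\ge\Tund$).} Suppose, toward a contradiction, that $\Tund_0<\Tund$; then $c(\Tund_0)>0$. On $(\Tund_0,\Tbar_0)$ the purging phase requires the local indifference condition \eqref{eq:local_indiff} to hold exactly.

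First we establish continuity of $(p,q)$ and one-sided derivatives at $\Tund_0$. Disclosure is atomless, so $p$ and $S$ are continuous. Also $S_{\Tund_0}>0$, so $q=G_1/S$ is continuous. From \eqref{eq:q_ODE_chain_rule}, $\dot q$ depends only on $(p,q,S)$ and not on $\kappa$. Therefore the right derivative $\dot q_{\Tund_0+}$ equals $\dot q^0_{\Tund_0}$, and $p_{\Tund_0}=p^0_{\Tund_0}$.

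Taking $t\downarrow\Tund_0$ in \eqref{eq:local_indiff} then gives $\dot q^0_{\Tund_0}=rq^0_{\Tund_0}-(\lambda_1+r)p^0_{\Tund_0}$, i.e.\ $c(\Tund_0)=0$. This contradicts $c(\Tund_0)>0$ and establishes $\Tund_0\ge\Tund$.

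\textbf{Main obstacle.} The delicate step is justifying this limit. The hazard $\kappa_t$ is only assumed continuous and positive on the open interval $(\Tund_0,\Tbar_0)$ and might be unbounded near $\Tund_0$. I would handle this by noting that \eqref{eq:q_ODE_chain_rule} involves $\kappa$ only through $S$, so $q$ is $C^1$ wherever $S>0$, independently of how $\kappa$ behaves.

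The continuity of $p$ and $S$ comes from the absence of an atom, which the definition of the purging phase imposes at $\Tund_0$. Given that continuity, $\dot q_t\to\dot q^0_{\Tund_0}$ and $p_t\to p^0_{\Tund_0}$ as $t\downarrow\Tund_0$, so the limit in Step 2 is legitimate.

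If instead one only has local indifference in the integrated sense (constancy of $F(t;t_0)$), the same conclusion follows by differentiating $F$ at $t_0\downarrow\Tund_0$. Combining Steps 1 and 2 gives $\Tund_0=\Tund$, where $\Tund$ is unique by the crossing property of $c$ established earlier.
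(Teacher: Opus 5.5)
Your proof is correct and follows the paper's argument. The decisive step is your Step 2: $\dot q$ in \eqref{eq:q_ODE_chain_rule} depends only on $(p,q,S)$, and these coincide with $(p^0,q^0,S^0)$ at $\Tund_0$, so local indifference at the start of purging forces $c(\Tund_0)=0$; the unique zero of $c$ then gives $\Tund_0=\Tund$. That derivation never uses the hypothesis $\Tund_0<\Tund$, so your Step 1 and the contradiction framing are redundant (though harmless), while your care about atomlessness and the right-derivative limit usefully fills in detail the paper treats briefly.
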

\begin{proof}
In the purging phase, $q$ must satisfy local indifference,
\begin{equation}\label{eq:IC_purging_altbeliefs}
    \dot q_t = r q_t-(\lambda_1+r)p_t\qquad \text{for }t\in [\underline T_0,\overline T_0),
\end{equation}
where we use the right-derivative. Also, by \eqref{eq:q_ODE_chain_rule} at $t=\underline T_0$,
\begin{align}\label{eq:Bayes_cutoff1}
    \dot{q}_{\underline T_0}&=-\lambda_1 q_{\underline T_0} +(1-q_{\underline T_0})\mu\frac{p_{\underline T_0}-S_{\underline T_0} q_{\underline T_0}}{S_{\underline T_0}},\\
    \dot{q}^0_{\underline T_0}&=-\lambda_1 q^0_{\underline T_0} +(1-q^0_{\underline T_0})\mu\frac{p^0_{\underline T_0}-S^0_{\underline T_0} q^0_{\underline T_0}}{S^0_{\underline T_0}}.\label{eq:Bayes_cutoff2}
\end{align}
Since 
$(p_{\underline T_0},S_{\underline T_0},q_{\underline T_0})=(p^0_{\underline T_0},S^0_{\underline T_0},q^0_{\underline T_0})$, \eqref{eq:Bayes_cutoff1} and \eqref{eq:Bayes_cutoff2} imply $
\dot q_{\underline T_0}=\dot q^0_{\underline T_0}$. 
Combining this with \eqref{eq:IC_purging_altbeliefs} and again using $(p_{\underline T_0},S_{\underline T_0},q_{\underline T_0})=(p^0_{\underline T_0},S^0_{\underline T_0},q^0_{\underline T_0})$ gives
\begin{align}
\dot q^0_{\underline T_0} = r q^0_{\underline T_0}-(\lambda_1+r)p^0_{\underline T_0}.\label{eq:stockpile_end}
\end{align}
Recall that from the construction of $\pbar$ in the proof of Lemma \ref{lem:pund_and_pbar}, for $p_0>\pbar=b(0)$, there is a unique value of $\underline T_0>0$, denoted $\underline T$, at which $p_0=b(\underline T_0)$, which is equivalent to \eqref{eq:stockpile_end}. Hence, in any three-phase equilibrium, $\underline T_0=\underline T$; in other words, the stockpiling phase must have length $\underline T$.
\end{proof}

We now characterize the purging phase solution: $\kappa_t$ and $\overline T$. The initial values for $(p_{\Tund},G_{1,\Tund},G_{0,\Tund},q)$ are already given from the end of the stockpiling phase. 
In the purging phase, $(p,G_1,G_0,q,\kappa)$ must satisfy the local indifference condition \eqref{eq:local_indiff} and the laws of motion \eqref{eq:G1_ODE_mix}-\eqref{eq:p_ODE_mix}. In addition, $q$ must also satisfy, by definition, $q_t=G_{1,t}/(G_{1,t}+G_{0,t})$ at each $t$ in $[\Tund_0,\Tbar_0)$.\footnote{Since $\Tund_0>0$, $G_{1,\Tund_0}/(G_{1,\Tund_0}+G_{0,\Tund_0})$ is well defined.} 

This system first reduces to the system in $(p,q,S,\kappa)$ given by \eqref{eq:p_ODE_mix}, \eqref{eq:S_ODE}, \eqref{eq:q_ODE_chain_rule}, and \eqref{eq:local_indiff}. In the next two steps, we eliminate $S_t$ and $\kappa_t$.

Next, we solve for $S_t$ by using the right side of \eqref{eq:local_indiff} to replace $\dot{q}_t$ in \eqref{eq:q_ODE_chain_rule}:
\begin{align}
    \dot{q}_t&=-\lambda_1 q_t +(1-q_t)\mu\frac{p_t-S_t q_t}{S_t}=rq_t-(\lambda_1+r)p_t\notag \\
    \implies S_t&=S(p_t,q_t):=\frac{\mu p_t(1-q_t)}{q_t[\lambda_1+r+\mu(1-q_t)]-(\lambda_1+r)p_t}.\label{eq:S_of_p_q}
\end{align}
A sufficient condition for $S(p_t,q_t)$ to be well defined is that $p_t< q_t\leq 1$, which we will verify.  

To solve for $\kappa_t$,  differentiate \eqref{eq:S_of_p_q}, substitute in the expressions for $\dot{p}$ and $\dot{q}$ from \eqref{eq:p_ODE_mix} and \eqref{eq:local_indiff}, and equate the resulting expression to the right hand side of \eqref{eq:S_ODE}. Solving for $\kappa_t$ yields
\begin{align}
\kappa_t&=\kappa(p_t,q_t):=\frac{A_t(q_t[\lambda_1+r+\mu(1-q_t)]-(\lambda_1+r)p_t)}{(q_t-p_t)(1-q_t)[\lambda_1+r+\mu(1-q_t)]}\label{eq:kappa_of_p_q}, \qquad \text{where}\\
A_t&:=A(p_t,q_t):=r+\lambda_1+\mu+(r-\mu)q_t-2p_t(\lambda_1+r).\label{eq:A_of_p_q}
\end{align}
As long as $0<p_t<q_t<1$, which we will verify in our solution, the denominator will be positive, so $\kappa(p_t,q_t)$ will be well defined. We will also need to verify that $\kappa_t> 0$ at all times. Since $0<p_t<q_t<1$ implies the second factor in the numerator is positive, it will be sufficient to show that $A_t\geq 0$ for all $t\in [\Tund,\Tbar)$.

After eliminating $S_t$ and $\kappa_t$ in this way, we obtain a reduced IVP consisting of the ODEs
\begin{align}
\dot{p}_t&=p_t\frac{2(\lambda_1+r)\mu p_t+\mu q_t(\lambda_1+\mu-r)-(\lambda_1+\mu)(\lambda_1+r+\mu)}{\lambda_1+r+\mu(1-q_t)}\label{eq:mix_IVP_p}\\
    \dot{q}_t &=rq_t-(\lambda_1+r)p_t\label{eq:mix_IVP_q},
\end{align}
with initial conditions $(p_{\Tund},q_{\Tund})$ from the end of the stockpiling phase.

Note that the right hand sides of \eqref{eq:mix_IVP_p} and \eqref{eq:mix_IVP_q} are time-invariant, $C^1$ functions of $(p_t,q_t)$ on the domain $U:=\{(p,q)\in \mathbb{R}^2:q<1+\frac{\lambda_1+r}{\mu}\}$. In particular, they are locally Lipschitz continuous, uniformly in time. By the Picard-Lindel{\"o}f theorem, there exists a unique solution to the IVP with initial conditions on some interval $[\Tund,\Tund+\varepsilon]$ with $\varepsilon>0$. Let $[\underline T,T^{\max})$ denote the maximal interval of existence (and uniqueness), where $T^{\max}$ is possibly $\infty$. 

Define $R\subset U$ by
\begin{align*}
    R:=\{(p,q):0<p<q<1 \text{ and } A(p,q)> 0\}.
\end{align*}
Then on $R$, (i) $\kappa_t$ is well defined via \eqref{eq:kappa_of_p_q} and strictly positive, and (ii) $S_t$ is well defined via \eqref{eq:S_of_p_q}. Define the (possibly infinite) time $T:=\inf\{t\in [\Tund,T^{\max}):(p_t,q_t)\notin R\}.$ 

\begin{lemma}\label{lem:purging_starts_in_R}
    We have that $(p_{\Tund},q_{\Tund})$ lies in the interior of $R$. Hence, $T > \Tund$.
\end{lemma}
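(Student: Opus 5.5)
The plan is to verify directly that the terminal point of the stockpiling phase, $(p_{\Tund},q_{\Tund})$, satisfies each defining inequality of $R$ strictly. All ingredients come from the closed forms \eqref{eq:G1_delay_closed_form}--\eqref{eq:q_delay_closed_form} and from the characterization of $\Tund$ in Lemma \ref{lem:unique_T_und_neutral_beliefs}. Since $R$ is open (it is cut out by strict inequalities of continuous functions) and the solution of \eqref{eq:mix_IVP_p}--\eqref{eq:mix_IVP_q} is continuous, interiority immediately gives $T>\Tund$.

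\textbf{Step 1 (easy inequalities).} Because $\Tund>0$, we have $p_{\Tund}=p_0e^{-\lambda_1\Tund}>0$. Also $q_{\Tund}<1$, since $G_{0,\Tund}>0$ by \eqref{eq:G0_delay_closed_form}; equivalently, $\frac{(e^{\mu t}-1)(\lambda_1+\mu)}{(e^{(\lambda_1+\mu)t}-1)\mu}<1$ for $t>0$.

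\textbf{Step 2 ($p<q$).} I use the Bayesian law of motion \eqref{eq:q_ODE_chain_rule} with $\kappa=0$:
\begin{align*}
\dot q_t+\lambda_1 q_t=(1-q_t)\mu\,\frac{p_t-G_{1,t}}{S_t}.
\end{align*}
The right-hand side is strictly positive at $\Tund$, because $q_{\Tund}<1$ and $p_t-G_{1,t}=p_0e^{-\lambda_1 t}e^{-\mu t}>0$. At $\Tund$ the indifference condition \eqref{eq:stockpile_end} holds, so $(\lambda_1+r)p_{\Tund}=rq_{\Tund}-\dot q_{\Tund}$. Combining this with $\dot q_{\Tund}>-\lambda_1 q_{\Tund}$ gives $(\lambda_1+r)p_{\Tund}<(\lambda_1+r)q_{\Tund}$.

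\textbf{Step 3 ($A>0$).} This is the main obstacle. Substituting the indifference relation $2(\lambda_1+r)p=2rq-2\dot q$ into \eqref{eq:A_of_p_q} gives
\begin{align*}
A=(r+\mu)(1-q)+\lambda_1+2\dot q.
\end{align*}
The crude bound $\dot q>-\lambda_1 q$ from Step 2 is not enough when $q$ is near $1$, so I would instead exploit the strict crossing direction established in the proof of Lemma \ref{lem:pund_and_pbar}. There, $c(t)=\dot q_t-rq_t+(\lambda_1+r)p_t$ satisfies $c(\Tund)=0$ and $c'(\Tund)=-b'(\Tund)(\lambda_1+r)e^{-\lambda_1\Tund}<0$.

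The key claim is that, at a point where $c=0$, the positive part of the derivation of \eqref{eq:kappa_of_p_q} is exactly this quantity. Concretely, differentiating the identity $S=S(p,q)$ along the $\kappa$-augmented dynamics and solving for $\kappa$ yields $\kappa=A\cdot(\text{positive factor})$. In that linear relation, the $\kappa$-free terms equal the rate at which the indifference residual drifts under pure stockpiling. That drift is $c'(\Tund)$ up to a positive multiple that I would identify explicitly, using the positive denominator $q[\lambda_1+r+\mu(1-q)]-(\lambda_1+r)p$ of \eqref{eq:S_of_p_q}; this denominator is positive by Steps 1--2.

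I would therefore carry out this computation at $\Tund$, evaluating both sides with $\kappa=0$ and the stockpiling laws of motion. The goal is to show $A(p_{\Tund},q_{\Tund})=-C\,c'(\Tund)$ for some explicit $C>0$, and hence $A>0$ by $b'(\Tund)>0$.

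As a fallback, I would plug the closed forms into $A$ and reduce $A>0$ at $t=\Tund$, where $p_0=b(\Tund)$, to a sign statement about an exponential polynomial in $t$. I would then handle that statement by the same sign-change (Descartes-type) argument used for $P(t)$.
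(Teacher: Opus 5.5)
Your proposal is correct and follows the route the paper prepares in the proof of Lemma \ref{lem:pund_and_pbar}, where $c$ is introduced ``for later use'' and shown to cross zero strictly from above at $\Tund$: Steps 1--2 are immediate, and the Step 3 identity does hold, since with $D:=q[\lambda_1+r+\mu(1-q)]-(\lambda_1+r)p$ the residual is $c=\mu p(1-q)/S-D$ (so $c=0$ exactly when $S=S(p,q)$ from \eqref{eq:S_of_p_q}), and a direct computation shows that at any zero of $c$ its derivative along the stockpiling dynamics is $-\frac{(\lambda_1+r)(q-p)}{1-q}\,A(p,q)$. Evaluating at $\Tund$ and using $c'(\Tund)=-b'(\Tund)(\lambda_1+r)e^{-\lambda_1\Tund}$ gives $A(p_{\Tund},q_{\Tund})=\frac{(1-q_{\Tund})\,b'(\Tund)\,e^{-\lambda_1\Tund}}{q_{\Tund}-p_{\Tund}}>0$, so the explicit constant you were after is $C=\frac{1-q_{\Tund}}{(\lambda_1+r)(q_{\Tund}-p_{\Tund})}$ and the exponential-polynomial fallback is unnecessary.
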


For the proof (and other omitted proofs), see the Online Appendix. The following lemma establishes that if the solution exits $R$ at finite time $T$, it does so through $q_T=1$. 
\begin{lemma}
If $T$ is finite, then $q_T>p_T>0$ and $A_T> 0$. Hence, $q_T=1$.\label{lem:purging_exit_through_q=1}
\end{lemma}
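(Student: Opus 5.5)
The plan is to look at the boundary of $R$ at the finite exit time $T$. By continuity, $(p_T,q_T)$ lies in the closure of $R$ and hence on its boundary. That boundary is made of four pieces: $\{p=0\}$, $\{p=q\}$, $\{q=1\}$, and $\{A=0\}$. We rule out the first, second and fourth pieces in turn, which leaves $q_T=1$. Throughout, we work on $[\Tund,T)$, where $(p,q)\in R$, so $S_t$ and $\kappa_t>0$ are well defined. There the original laws of motion \eqref{eq:p_ODE_mix}, \eqref{eq:S_ODE}, \eqref{eq:q_ODE_chain_rule} hold together with \eqref{eq:local_indiff}. One point to settle first: $T<T^{\max}$, or at least the solution extends continuously to $T$. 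This follows because the solution stays in the bounded set $\bar R\subset U$ on $[\Tund,T)$, and $U$ contains $q\le1$.

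\textbf{Step 1: $p_T>0$.} From \eqref{eq:mix_IVP_p}, $\dot p_t=p_t\,g(p_t,q_t)$ with $g$ continuous, hence bounded on the compact closure of $R$. Therefore $p_T\ge p_{\Tund}e^{-C(T-\Tund)}>0$.

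\textbf{Step 2: $q_T>p_T$.} Set $\delta_t=q_t-p_t$. Subtracting \eqref{eq:p_ODE_mix} from \eqref{eq:local_indiff} and substituting $G_1=qS$ gives
\[
\dot\delta_t=rq-(\lambda_1+r)p+\lambda_1 p+\kappa S(q-p)=\delta_t(r+S_t\kappa_t),
\]
which is \eqref{eq:mix_IVP_delta}. Since $S_t\kappa_t\ge0$ on $[\Tund,T)$, we get $\delta_t\ge\delta_{\Tund}e^{r(t-\Tund)}>0$. Here we use $\delta_{\Tund}>0$ from Lemma \ref{lem:purging_starts_in_R}. Passing to the limit gives $q_T-p_T>0$.

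\textbf{Step 3: $A_T>0$ (the main obstacle).} I would compute $\dot A$ along \eqref{eq:mix_IVP_p}--\eqref{eq:mix_IVP_q}:
\[
\dot A=(r-\mu)\dot q-2(\lambda_1+r)\dot p .
\]
On the set $\{A=0\}$, substitute $q=\bigl(2p(\lambda_1+r)-r-\lambda_1-\mu\bigr)/(r-\mu)$, or alternatively solve for $p$ when $r=\mu$. The aim is to show that $\dot A>0$ whenever $A=0$ and $0<p<q\le1$, so the trajectory cannot reach $\{A=0\}$ from inside $R$. There are two routes I would try.

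\emph{Direct route.} After substitution, $\dot A$ becomes a rational function of $p$ alone, with positive denominator $\lambda_1+r+\mu(1-q)$. One then checks the sign of the numerator polynomial on the admissible range of $p$. That range is cut out by $p<q\le1$, which in turn bounds $p$ relative to $(\lambda_1+r+\mu+(r-\mu))/(2(\lambda_1+r))$ type expressions.

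\emph{Economic route (fallback).} As $A\downarrow0$ with $q<1$, we have $\kappa\to0$, so the dynamics approach the stockpiling dynamics. Using \eqref{eq:S_ODE} together with $S=S(p,q)$, the condition $\kappa=0$ means $S$ evolves without outflow. One would then show that this is inconsistent with $\dot q=rq-(\lambda_1+r)p$ continuing to hold, mirroring the single crossing of $c(t)$ in the proof of Lemma \ref{lem:pund_and_pbar}, where $c$ crosses zero only from above.

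I expect the algebraic sign check in Step 3 to be the hardest part, and would handle it by splitting into the cases $r\ge\mu$ and $r<\mu$.

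\textbf{Conclusion.} Given $0<p_T<q_T$ and $A_T>0$, the only way to leave $R$ is $q_T\ge1$. Continuity, together with $q_t<1$ for $t<T$, then gives $q_T=1$.
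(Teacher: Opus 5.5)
Your decomposition of $\partial R$, Steps 1 and 2, and the conclusion are correct as written. This includes the observation that $T<T^{\max}$: the path stays in the compact set $\bar R\subset U$, so $(p,q)$ is $C^1$ through $T$.

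What is missing is Step 3 itself. You state the barrier claim that $\dot A>0$ on $\{A=0,\ 0<p<q\le 1\}$ but never verify it, and the ``economic route'' is a heuristic rather than an argument. The claim is true, and it is much easier than the case split you anticipate. The key fact is that the numerator of \eqref{eq:mix_IVP_p} equals $-\lambda_1\bigl(\lambda_1+r+\mu(1-q)\bigr)-\mu A$. Equivalently, $S\kappa(q-p)=\mu pA/(\lambda_1+r+\mu(1-q))$ by \eqref{eq:S_of_p_q}--\eqref{eq:kappa_of_p_q}. In other words,
\[
\dot p=-\lambda_1 p-\frac{\mu pA}{\lambda_1+r+\mu(1-q)}.
\]
So on $\{A=0\}$ you simply have $\dot p=-\lambda_1 p$, and no rational function appears. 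Using $A=0$ to eliminate $(r-\mu)q$,
\[
\dot A=(r-\mu)\bigl(rq-(\lambda_1+r)p\bigr)+2\lambda_1(\lambda_1+r)p=(\lambda_1+r)(r+\mu+2\lambda_1)\,p-r(r+\lambda_1+\mu).
\]

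To sign this, parametrize $\{A=0\}$ by $q$ rather than $p$. This is always possible because $p$ enters $A$ with coefficient $-2(\lambda_1+r)\neq 0$, so no split on the sign of $r-\mu$ is needed. The line is $p=\frac{r+\lambda_1+\mu+(r-\mu)q}{2(\lambda_1+r)}$, and on it $p<q$ if and only if $q>q^\circ:=\frac{r+\lambda_1+\mu}{r+\mu+2\lambda_1}$. Along the line, $\dot A$ is affine in $q$:
\begin{itemize}
\item at $q=q^\circ$ (where $p=q^\circ$) it equals $\lambda_1(r+\lambda_1+\mu)>0$;
\item at $q=1$ (where $p=\pbar$) it equals $\lambda_1\bigl(\lambda_1+\tfrac{3r+\mu}{2}\bigr)>0$.
\end{itemize}
Hence $\dot A>0$ on the whole segment $\{A=0,\ p\le q\le 1\}$. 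By your Steps 1--2 and $q_T\le 1$, this segment contains $(p_T,q_T)$ if $A_T=0$. Since $A_t>0$ for $t<T$ and the path is $C^1$ at $T$, $A_T=0$ would force $\dot A_T\le 0$, a contradiction. This gives $A_T>0$, even in the corner case $q_T=1$, and your conclusion $q_T=1$ then follows.
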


We now show that $T$ is finite, so Lemma \ref{lem:purging_exit_through_q=1} indeed applies.

If $T=\infty$, the only possibility is that $T^{\max}=\infty$; otherwise, the solution would have to exit every compact subset of $U$ in finite time. Since $R\subset [0,1]^2\subset U$, this contradicts  $T=\infty$. The following lemma rules out this possibility. 
\begin{lemma}\label{lem:exit_time_finite}
It does not hold that $T=T^{\max}=\infty$.
\end{lemma}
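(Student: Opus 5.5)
The plan is to argue by contradiction using the growth identity \eqref{eq:mix_IVP_delta}, $\dot\delta_t=\delta_t(r+S_t\kappa_t)$ with $\delta_t=q_t-p_t$. Suppose $T=T^{\max}=\infty$, so that $(p_t,q_t)\in R$ for all $t\geq\Tund$. I would first verify \eqref{eq:mix_IVP_delta} directly from the reduced system. Subtracting \eqref{eq:p_ODE_mix} from \eqref{eq:mix_IVP_q} gives
\[
\dot q_t-\dot p_t=rq_t-(\lambda_1+r)p_t+\lambda_1p_t+\kappa_tS_t(q_t-p_t)=(r+\kappa_tS_t)(q_t-p_t).
\]
Here I use $G_1=qS$, so that $-G_1\kappa+pS\kappa=-\kappa S(q-p)$.

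Since $(p_t,q_t)\in R$, we have $\delta_t>0$, $S_t=S(p_t,q_t)>0$ and $\kappa_t>0$ for all $t$. Hence $\dot\delta_t\geq r\delta_t$, and therefore
\[
\delta_t\geq\delta_{\Tund}e^{r(t-\Tund)}.
\]
Lemma \ref{lem:purging_starts_in_R} gives $\delta_{\Tund}>0$, so $\delta_t\to\infty$. But on $R$ we have $0<p<q<1$, so $\delta_t<1$ for all $t$. This is a contradiction.

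The only point needing care is that $S$ and $\kappa$ along the reduced solution coincide with the stock and hazard defined in the original system. That holds by construction, since \eqref{eq:S_of_p_q} and \eqref{eq:kappa_of_p_q} were derived so that \eqref{eq:S_ODE} and \eqref{eq:q_ODE_chain_rule} hold. Alternatively, I can avoid the issue entirely by checking $\dot q-\dot p$ straight from \eqref{eq:mix_IVP_p}–\eqref{eq:mix_IVP_q}. Doing so shows $\dot q_t-\dot p_t\geq r(q_t-p_t)$ whenever $(p_t,q_t)\in R$, because the extra term equals $S\kappa\delta>0$ there. This algebraic verification is the main step; it is routine once $S\kappa$ is computed from \eqref{eq:S_of_p_q} and \eqref{eq:kappa_of_p_q}. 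The conclusion then uses only $r>0$ and the boundedness of $R$.
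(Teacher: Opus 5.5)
Your proposal is correct and is essentially the paper's own argument: on $R$ one has $S_t>0$ and $\kappa_t>0$, so $\dot\delta_t=\delta_t(r+S_t\kappa_t)$ gives $\delta_t\geq \delta_{\Tund}e^{r(t-\Tund)}$, which eventually exceeds $1$ and contradicts $0<p_t<q_t<1$. Your explicit derivation of the $\delta$-identity is an addition to what the paper writes out, since the paper only asserts it; you obtain it from the $p$-law of motion using $G_1=qS$, or equivalently from the reduced system via $S\kappa\delta=\mu pA/(\lambda_1+r+\mu(1-q))>0$.
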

\begin{proof}
Suppose $T=T^{\max}=\infty$. Then a unique solution exists for all time and remains in $R$. But the ODE \eqref{eq:mix_IVP_delta} for $\delta=q-p$ has solution
\begin{align*}
\delta_t=\delta_{\underline T} e^{\int_{\underline T}^t(r+S_s\kappa_s)\,ds}\geq \delta_{\underline T}e^{r(t-\underline T)}
\end{align*} 
where we have used that $S_t>0$ and $\kappa_t>0$ for all $t$, since  $(p_t,q_t)\in R$. In particular, $\delta$ reaches $1$ in finite time. This contradicts the fact that in $R$, $0<p_t<q_t<1$.
\end{proof}

To summarize, we have $T<\infty$, and the unique maximal solution exits $R$ at time $T$ with $q_T=1$. We define $\overline T=T$. It follows that as $t\uparrow \overline T$, $\kappa_t\to \infty$ and $S_t\to 0$. We then recover $G_1=q S$ and $G_0=S-G_1$, and $(G_1,G_0,p,\kappa)$ solve the original filtering equations \eqref{eq:G1_ODE_mix}-\eqref{eq:p_ODE_mix}. Note that with $\kappa_t>0$, standard comparison theorem arguments show that the belief/stock variables are feasible throughout the purging phase. To recover the strategy in CDF form, for $t\leq s<\Tbar$, we have $H(s;G,\tau,t)=1-e^{-\int_t^s\kappa_u\,du}$ where $\kappa_u=0$ for $u\in [0,\Tund)$, and for $s\geq \Tbar$ $H(s;G,\tau,t)=1$.

Because $q_t$ hits $1$ from below at $\overline T$, we have $\dot{q}_{\overline T-}\geq 0$. Moreover, we have $\dot{q}_{\overline T-}> 0$. To see this, suppose $\dot{q}_{\overline T-}=0$, which implies $p_{\overline T}=\frac{r}{\lambda_1+r}$. Evaluating the $p$-ODE at $(p_{\overline T},q_{\overline T})=(\frac{r}{\lambda_1+r},1)$ yields $\dot{p}_{\overline T-}<0$. But then using $q_{\overline T}=1$ and $\dot{q}_{\overline T-}=0$ yields $\ddot{q}_{\overline T-}=-(\lambda_1+r)\dot{p}_{\overline T-}>0$, which would imply $q_t>1$ for sufficiently large $t<\overline T$, a contradiction. Given that $\dot{q}_{\overline T-}>0$, we have $p_{\overline T}<\frac{r}{\lambda_1+r}$. From time $\overline T$ onward, we have $q_t=1$, and optimality of the sender's strategy follows exactly as before.

\subsubsection{Moderate $p_0$}
Suppose $p_0 \in (\pund,\pbar]$. In this case, the equilibrium must begin in the purging phase. The arguments are almost identical to those for high $p_0$, except that technical care is needed at the very first instant since $q_0=1$, so $\kappa(p_0,q_0)$ need not be well defined at time $0$. We address these details in the Online Appendix.

\subsection{Preliminaries for Section \ref{subsec:no_timestamps_bad}}
We begin with preliminary definitions. We define $q^B_t$ as the belief immediately after disclosure of bad evidence. Recall that $q^{B,s}_t$ is the belief at time $t$ conditional on bad evidence arriving at time $s$. 
Let $\sigma^G$ and $\sigma^B$ denote the time at which good or bad evidence, respectively, is disclosed (with the restriction that at most one of these is finite (and the other $+\infty$), since only one piece of evidence can arrive); we have $\sigma=\min\{\sigma^G,\sigma^B\}$. 

We define the following stock variables conditional on no disclosure up to time $t$:
\begin{align*}
    B_{i,t}&=\Pr(\tau \leq t \text{ and } \theta_\tau = 0 \text{ and } \theta_t=i | \sigma>t) \qquad \text{for $i\in \{0,1\}$},\\
    B_t&=\Pr(\tau \leq t \text{ and } \theta_\tau = 0 | \sigma>t)=B_{0,t}+B_{1,t}\\
    \bar{q}^B_t&=\begin{cases} \frac{B_{1,t-}}{B_{t-}}\qquad &\text{if }B_t>0\\
    0 \qquad &\text{if }B_t=0.
    \end{cases}
\end{align*}
It is also sometimes convenient to work with the unnormalized (i.e., not dividing by the probability of nondisclosure) stock variables. We define
\begin{align*}
    \Bun_{i,t}&=\Pr(\tau \leq t \text{ and } \theta_\tau = 0 \text{ and } \theta_t=i \text{ and } \sigma>t)\\
    \Bun_t&=\Pr(\tau \leq t \text{ and } \theta_\tau = 0 \text{ and } \sigma>t)\\
    \Nun_t &= \Pr(\tau > t).
\end{align*}

Before proving Proposition \ref{prop:no_ts_long_delay_bad}, we state and prove the following supporting lemma, which provides a useful lower bound on the evolution of $p_t$. Intuitively,  under age-independent bad evidence disclosure, such disclosure can only lower the belief, so nondisclosure of bad evidence can only raise it. The only remaining sources of downward pressure on the nondisclosure belief are therefore (i) Markov switching from state $1$ to $0$, and (ii) the fact that under immediate good evidence disclosure, no news is bad news. Since the proof is unrelated to the sender's optimization, we defer it to the Online Appendix.
\begin{lemma}\label{lem:long_delay_drift_bound}
Suppose that good evidence is disclosed immediately. Then under any age-independent bad evidence disclosure strategy, for all $t,s$ we have $p_{t+s}\geq p_t-(\lambda_1+\mu)s$. By implication, $p_t\geq p_{t-}$. 
\end{lemma}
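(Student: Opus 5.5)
The plan is to write the law of motion for $p$ (the no-disclosure belief) on an arbitrary interval and split the drift into its sources. Good evidence is disclosed immediately, so conditional on no disclosure by $t$ the sender holds either no evidence or undisclosed bad evidence. I will work with the unnormalized quantities $\Nun_t$, $\Bun_{0,t}$, $\Bun_{1,t}$, together with the unnormalized mass $M^1_t:=\Nun_t\,\Pr(\theta_t=1\mid\tau>t)$ of ``no evidence and state $1$''. The belief is then
\[
p_t=\frac{M^1_t+\Bun_{1,t}}{\Nun_t+\Bun_t}.
\]
Absent bad-evidence disclosure, each component evolves by Markov switching. In addition, $M^1$ loses mass at rate $\mu$ (good evidence arrives and is disclosed at once), and the mass $\mu(\Nun-M^1)$ moves from the no-evidence pool into $\Bun_0$. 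Since age-independent disclosure can still carry atoms, I will handle both hazard-type and atomic disclosure within one framework: an age-independent disclosure at time $u$ removes the same fraction $\eta_u\in[0,1]$ from $\Bun_{0}$ and from $\Bun_{1}$.

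The first key step is to show that bad-evidence disclosure can only push $p$ upward. Under age independence, removing a fraction $\eta$ of the bad stock removes mass whose state-$1$ share is exactly $\bar q^B_t=\Bun_{1}/\Bun$. To compare this with $p$, write $p$ as a mixture: with weight $\Nun/(\Nun+\Bun)$ it equals the no-evidence conditional belief $\Pr(\theta_t=1\mid\tau>t)=\phi_t$, and with weight $\Bun/(\Nun+\Bun)$ it equals $\bar q^B_t$. The claim therefore reduces to $\bar q^B_t\le \phi_t$, in which case $\bar q^B_t\le p$. I will prove this by noting that $\bar q^B_t$ is an age-weighted average of $q^{B,s}_t$, and that $q^{B,s}_t\le p^*(1-e^{-\Lambda(t-s)})$ lies below both $p^*$ and $\phi$ (where $\Lambda:=\lambda_0+\lambda_1$). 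One caveat is that $\phi_t<p^*$ is possible when $p_0<p^*$, so I will need a direct comparison. For this I will use $q^{B,s}_t\le \phi^{(s)}_t$, where $\phi^{(s)}_t$ is the chain started at $s$ from $\phi_s\ge0$; by monotonicity of the Markov flow in its initial condition, $\phi^{(s)}_t=\phi_t$. Hence $q^{B,s}_t\le\phi_t$, and therefore $\bar q^B_t\le\phi_t$. Removing mass whose average lies weakly below the current mean weakly raises the mean, so each bad disclosure, continuous or atomic, satisfies $p_{u}\ge p_{u-}$. This gives the second claim of the lemma.

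The second step bounds the remaining drift from below. Between disclosures, i.e.\ the drift excluding the bad-disclosure term, the quotient rule gives
\[
\dot p=\lambda_0(1-p)-\lambda_1 p-\mu\frac{M^1}{\Nun+\Bun}+\mu p\frac{M^1}{\Nun+\Bun}.
\]
Here the $\mu\,\Nun\to\Bun$ transfer leaves the denominator unchanged and only reshuffles mass. This yields $\dot p\ge -\lambda_1 p-\mu\frac{M^1}{\Nun+\Bun}(1-p)\ge-(\lambda_1+\mu)$, because $M^1\le \Nun+\Bun$ and $p\le1$. Combining the two steps, $p$ is a c\`adl\`ag process whose continuous part has derivative at least $-(\lambda_1+\mu)$ and whose jumps, and bad-disclosure-induced motion, are nonnegative. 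Integrating from $t$ to $t+s$ then gives $p_{t+s}\ge p_t-(\lambda_1+\mu)s$.

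I expect the main obstacle to be rigor in the general case, since the disclosure strategy is an arbitrary age-independent family of CDFs that may combine atoms, densities, and singular parts. To avoid differentiating everything, I would first treat a general disclosure measure by a discretization or time-change argument. Concretely, I would write $p_{t+s}-p_t$ as the sum of the evolution under ``arrival plus switching'' and the effect of the disclosure measure. I would then show that each incremental removal raises $p$, using the mixture argument above on each small interval, and take limits.
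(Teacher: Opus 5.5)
Your proposal is correct and follows essentially the same route as the paper: age-independent disclosure removes bad-evidence mass whose state-$1$ share is $\bar q^B_t\le\phi_t$, and $p_t$ is a mixture of $\phi_t$ and $\bar q^B_t$, so $\bar q^B_t\le p_t$ and bad-evidence disclosure can only raise $p$; the remaining drift from Markov switching and the no-news-is-bad-news effect, $\lambda_0(1-p_t)-\lambda_1p_t-\mu(p_t-B_{1,t})(1-p_t)$, is bounded below by $-(\lambda_1+\mu)$. For the comparison $\bar q^B_t\le\phi_t$ you can skip the flow-monotonicity detour, since directly $q^{B,s}_t\le q^{B,0}_t=\phi_t-p_0e^{-\Lambda t}$, as the paper uses elsewhere; note also that $q^{B,s}_t$ equals $p^*(1-e^{-\Lambda(t-s)})$ rather than merely being bounded by it.
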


\subsection{Proof of Proposition \ref{prop:no_ts_long_delay_bad}}
The time-$t$ payoff from disclosing bad evidence is 
    \begin{align}\label{eq:long_delay_disclose_now}
    \int_0^\infty e^{-rs}[p^*-(p^*-q^B_t)e^{-\Lambda s}]\,ds=\frac{p^*}{r}-\frac{p^*-q^B_t}{r+\Lambda}.
\end{align}
The payoff of disclosing at time $t+\varepsilon$ for any $\varepsilon>0$ is at least
\begin{align}\label{eq:long_delay_disclose_wait}
    \int_0^\varepsilon e^{-rs}p_{t+s}\,ds,
\end{align}
since the continuation flow utility after $t+\varepsilon$ is nonnegative. 

After subtracting \eqref{eq:long_delay_disclose_now} from \eqref{eq:long_delay_disclose_wait} and scaling by $r$, the benefit of delay by $\varepsilon$ is at least
\begin{align}\label{eq:delay_benefit_scaled}
    \int_0^\varepsilon r e^{-rs}p_{t+s}\,ds -p^*+\frac{r}{r+\Lambda}(p^*-q^B_t).
\end{align}
By Lemma \ref{lem:long_delay_drift_bound}, this is at least
\begin{align*}
    \int_0^\varepsilon r e^{-rs}[p_t-(\lambda_1+\mu)s]\,ds -p^*+\frac{r}{r+\Lambda}(p^*-q^B_t).
\end{align*}
As $r\to\infty$, this converges uniformly over $t$ and bad evidence disclosure strategies to 
$p_t-p^*+(p^*-q^B_t)=p_t-q^B_t.$ 
All that remains is to show that at all on-path bad evidence disclosure times $t\in [0,T]$, $p_t-q^B_t$ is uniformly bounded away from $0$.

In an equilibrium with age-independent strategies, if disclosure is on path at time $t$, then $q^B_t=\bar{q}^B_{t-}=\frac{\Bun_{1,t-}}{\Bun_{t-}}$ at all times, with $q^B_0=0$. We also have $p_t\geq p_{t-}$ by Lemma \ref{lem:long_delay_drift_bound}, where
    \begin{align}
        p_{t-}=\frac{\Nun_t \phi_t+\Bun_{1,t-}}{\Nun_t +\Bun_{t-}}.
    \end{align}
    Hence, 
    \begin{align*}
        p_t-q^B_t=p_t-\bar{q}^B_{t-}\geq p_{t-}-\bar{q}^B_{t-}
        &=\frac{\Nun_t}{\Nun_t +\Bun_{t-}}(\phi_t-\bar{q}^B_{t-})\geq \Nun_t(\phi_t-\bar{q}^B_{t-})=e^{-\mu t}(\phi_t-\bar{q}^B_{t-}).
    \end{align*}
We also have $\bar{q}^B_{t-}\leq q^{B,0}_t=p^*(1-e^{-\Lambda t})$ and $\phi_t=p^*-(p^*-p_0)e^{-\Lambda t}$, so
$\phi_t-\bar{q}^B_{t-}\geq p_0 e^{-\Lambda t}.$
Together, for all $t\in [0,T]$, these inequalities imply
\begin{align*}
    p_t-q^B_t \geq e^{-\mu t}p_0 e^{-\Lambda t}\geq p_0 e^{-(\Lambda+\mu)T}>0.
\end{align*}
Hence, by the uniform convergence before, for all sufficiently large $r$, the scaled benefit of delay by $\varepsilon$ is uniformly strictly positive over all $t\in [0,T]$ and all age-independent bad evidence strategies. Thus, delay is strictly optimal at every on-path bad evidence disclosure time in $[0,T]$.

\subsection{Proof of Lemma \ref{lem:bad_eventually_disclosed}}
We prove the second part of the lemma first. Specifically, we prove the following lemma.

\begin{lemma}\label{lem:no_timestamps_suff_cond_no_bad_disc}
    Assume $p_0\in \left(\frac{\lambda_0}{\Lambda+r},\frac{\lambda_0+r}{\Lambda+r}\right)$ and that $\mu<\left(\frac{\Lambda+r}{\lambda_1+r}\right)r$. Then there exists an equilibrium with immediate good evidence disclosure and no bad evidence disclosure.
\end{lemma}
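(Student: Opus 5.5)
The approach is to construct the equilibrium explicitly and verify the two incentive constraints under a threat belief: the receiver conjectures immediate disclosure of good evidence and no disclosure of bad evidence, and after any off-path bad-evidence disclosure she believes the evidence is fresh, $q^B_t=0$.

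\textbf{Step 1: beliefs under the conjecture.} Only good evidence is ever disclosed, so nondisclosure pools the no-evidence type with all bad-evidence types, and $p_t$ is an explicit average:
\[
p_t=\frac{e^{-\mu t}\phi_t+\int_0^t \mu e^{-\mu\tau}(1-\phi_\tau)q^{B,\tau}_t\,d\tau}{e^{-\mu t}+\int_0^t\mu e^{-\mu\tau}(1-\phi_\tau)\,d\tau}.
\]
In particular $p_t\le \phi_t$, and $p_t$ evolves by the $\lambda$-drift minus a ``no news is bad news'' term, as in \eqref{pdagger derivative} with $s=0$.

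\textbf{Step 2: bad evidence is never disclosed.} With $q^B_t=0$, disclosing at $t$ gives the value $\frac{p^*}{r}-\frac{p^*}{\Lambda+r}=\frac{\lambda_0}{r(\Lambda+r)}$. Never disclosing gives $\int_t^\infty e^{-r(s-t)}p_s\,ds$. It therefore suffices to show that $p_s\ge \frac{\lambda_0}{\Lambda+r}$ for all $s$, since then never disclosing yields at least $\frac{\lambda_0}{r(\Lambda+r)}$. Disclosing later than $t$ is also covered: it yields nondisclosure flows followed by the same constant value, so it is dominated in the same way. To get the lower bound, I would show the set $\{p\ge \frac{\lambda_0}{\Lambda+r}\}$ is forward invariant. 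At $p=\frac{\lambda_0}{\Lambda+r}$ the Markov drift is $\lambda_0-\Lambda p=\frac{\lambda_0 r}{\Lambda+r}$. The outflow from good-evidence disclosure is $\mu\cdot\Pr(\text{no evidence}\mid\text{nondisclosure})\,\phi_t(1-p_t)$, which is at most $\mu p(1-p)$ when $p_t=\phi_t$. Because the weight on the no-evidence type and $\phi_t$ enter together, I expect the bound to be $\mu p_t(1-p_t)$ up to a factor of at most one. Plugging in, the drift at the boundary is nonnegative if $\frac{\lambda_0 r}{\Lambda+r}\ge \mu\frac{\lambda_0}{\Lambda+r}\cdot\frac{\lambda_1+r}{\Lambda+r}$, i.e.\ $\mu\le \frac{(\Lambda+r)r}{\lambda_1+r}$. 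This is exactly the assumed condition, which strongly suggests this comparison is the intended route. Together with $p_0>\frac{\lambda_0}{\Lambda+r}$, it gives $p_t\ge \frac{\lambda_0}{\Lambda+r}$ for all $t$.

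\textbf{Step 3: good evidence is disclosed immediately.} Good evidence disclosed at $t$ is interpreted as fresh, so $q_t=1$ and the value is $F(1)=\frac{p^*}{r}+\frac{1-p^*}{\Lambda+r}=\frac{\lambda_0+r}{r(\Lambda+r)}$, independent of $t$. By the local argument of Proposition~\ref{prop:no_timestamps_necessary_delay}, delay is unprofitable whenever $p_s\le \frac{\lambda_0+r}{\Lambda+r}$ for all $s$. Indeed, delaying to any $T$ gives $\int_t^T e^{-r(s-t)}p_s\,ds+e^{-r(T-t)}F(1)$, which is at most $F(1)$ under that bound. The bound holds because $p_s\le\phi_s$, and $\phi_s$ moves monotonically from $p_0<\frac{\lambda_0+r}{\Lambda+r}$ toward $p^*<\frac{\lambda_0+r}{\Lambda+r}$, so $\phi_s\le\max\{p_0,p^*\}$ stays below the threshold.

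\textbf{Main obstacle.} The hard part is Step 2: controlling the no-news-is-bad-news outflow sharply enough to get invariance, namely proving that the outflow rate is at most $\mu p_t(1-p_t)$ (or whatever exact form makes the given condition on $\mu$ tight). I would first write the ODE for $p$ using the stocks $\Bun_{i,t}$ and $\Nun_t$ and bound the ratio $\Nun_t\phi_t/(\Nun_t+\Bun_t)$ by $p_t$, using that the bad-evidence pool has belief below $p^*$. Finally, the second part of Lemma~\ref{lem:bad_eventually_disclosed} for large $r$ follows because both conditions of this lemma hold for $r$ large once $p_0\in(0,1)$ is fixed.
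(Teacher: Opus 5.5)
Your proposal is correct and follows the paper's proof. Both use threat beliefs $q^B\equiv 0$ and reduce the two incentive constraints to keeping $p_t$ inside $\bigl(\frac{\lambda_0}{\Lambda+r},\frac{\lambda_0+r}{\Lambda+r}\bigr)$. Both get the lower bound from $\dot p_t\ge \lambda_0(1-p_t)-\lambda_1p_t-\mu p_t(1-p_t)$, whose value at $\frac{\lambda_0}{\Lambda+r}$ is positive exactly under the $\mu$ condition; your use of $p_t\le\phi_t$ for the upper bound and of global rather than local payoff comparisons are only cosmetic differences.

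The step you flag as the main obstacle is immediate. The outflow is $\mu(p_t-B_{1,t})(1-p_t)$ with $p_t-B_{1,t}=\Nun_t\phi_t/(\Nun_t+\Bun_t)$, so bounding it by $\mu p_t(1-p_t)$ only needs $B_{1,t}\ge 0$. The fact that the bad-evidence pool's belief lies below $p^*$, which you planned to invoke, is not what is needed here.
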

\begin{proof}
We verify optimality of the sender's strategy assuming the receiver correctly conjectures immediate disclosure of good evidence and no disclosure of bad evidence, and that following off-path disclosure of bad evidence at time $t$, the receiver assumes the evidence arrived at time $t$.

Let $q^B_t$ (identically $0$ under our off-path belief assumption) denote the receiver's posterior immediately after disclosure of bad evidence.

Delaying bad evidence disclosure is locally incentive compatible if and only if
\begin{align*}
     \dot{q}^B_t\geq (\Lambda+r)(q^B_t-p_t)+\Lambda(p^{*}-q^B_t).
\end{align*}
With $q^B\equiv 0$, this reduces to
\begin{align}
    p_t \geq \frac{\lambda_0}{\Lambda+r},\label{eq:q:bound_no_bad_disc}
\end{align}
and hence to establish that it is optimal to never disclose bad evidence, it suffices to show that $p_t$ always satisfies \eqref{eq:q:bound_no_bad_disc}. By assumption, it is satisfied strictly at time $0$.

Let $q^G_t$ (identically $1$ under the conjectured strategy) denote the receiver's belief immediately after disclosure of good evidence. Delaying disclosure of good evidence is not locally incentive compatible if
\begin{align*}
     \dot{q}^G_t&<(\Lambda+r)(q^G_t-p_t)+\Lambda(p^{*}-q^G_t)=r q^G_t-(\Lambda+r)p_t+\lambda_0.
\end{align*}
With $q^G\equiv 1$, this reduces to
\begin{align}
    p_t <\frac{\lambda_0+r}{\Lambda+r}.\label{eq:bound_disclose_good_imm}
\end{align}
Hence, whenever \eqref{eq:bound_disclose_good_imm} holds, the sender prefers to disclose good evidence immediately rather than at any future time. By assumption, \eqref{eq:bound_disclose_good_imm} is satisfied at time $0$.

We now show that under the condition on $\mu$ in the proposition, $p_t \in \left( \frac{\lambda_0}{\Lambda+r},\frac{\lambda_0+r}{\Lambda+r}\right)$ for all $t\geq 0$ conditional on no disclosure. 

Let $B_{x,t}$ denote the probability that the sender has bad evidence and $\theta_t=x\in \{0,1\}$, conditional on no disclosure yet. The laws of motion for $(B_1,B_0,p)$ under the receiver's conjecture are
\begin{align}
    \dot{B}_{1,t}&=B_{0,t}\lambda_0-B_{1,t}\lambda_1+B_{1,t}(p_t-B_{1,t})\mu.\label{eq:B1_ODE_immediate_disc}\\
    \dot{B}_{0,t}&=-B_{0,t}\lambda_0+B_{1,t}\lambda_1+(1-p_t-B_{0,t})\mu+B_{0,t} (p_t-B_{1,t})\mu\label{eq:B0_ODE_immediate_disc}\\
    \dot{p}_t&=\lambda_0(1-p_t)-\lambda_1 p_t - \mu (p_t-B_{1,t})(1-p_t).\label{eq:p_ODE_immediate_disc}
    \end{align}
    with $B_{0,0}=B_{1,0}=0$ and $p_0\in (0,1)$ exogenously given.
Standard arguments using the comparison theorem show that there is a unique solution $(B_1,B_0,p)$ to this system for all time, and for $t>0$, all variables lie in $(0,1)$ and $p_t>B_{1,t}$. By inspection, the right hand side of \eqref{eq:p_ODE_immediate_disc} is negative when $p_t$ exceeds $p^*=\frac{\lambda_0}{\lambda_0+\lambda_1}$, which lies in $\left( \frac{\lambda_0}{\Lambda+r},\frac{\lambda_0+r}{\Lambda+r}\right)$. Hence, $p_t$ cannot hit the upper end of this interval in finite time. 

Next, note that the right hand side of \eqref{eq:p_ODE_immediate_disc} is bounded below by
\begin{align*}
    f(p_t):=\lambda_0(1-p_t)-\lambda_1 p_t - \mu (p_t-0)(1-p_t).
\end{align*}
Let $\tilde p$ denote the solution to $\dot{\tilde p}_t=f(\tilde p_t)$ with initial condition $\tilde p_0=p_0$. It is easy to show that $p_t\geq \tilde p_t$ at all times, and that $\tilde p_t$ converges monotonically to a steady state $\tilde p^*\in (0,1).$ The condition $\mu<\left(\frac{\lambda_0+\lambda_1+r}{\lambda_1+r}\right)r$ ensures that this steady state is above $\frac{\lambda_0}{\lambda_0+\lambda_1+r}$. Therefore, $p_t$ does not cross this threshold.
\end{proof}

For the first part of Lemma \ref{lem:bad_eventually_disclosed}, suppose the receiver conjectures immediate disclosure of good evidence and no disclosure of bad evidence. We state the explicit formulas for belief stock variables, which we refer to in later proofs as well:
\begin{align}
    B_{1,t}&=\frac{\int_0^t \mu e^{-\mu s}(1-\phi_s)q^{B,s}_t\,ds}{e^{-\mu t}+\int_0^t \mu e^{-\mu s}(1-\phi_s)\,ds}\label{eq:B1_closed}\\
    B_{0,t}&=\frac{\int_0^t \mu e^{-\mu s}(1-\phi_s)(1-q^{B,s}_t)\,ds}{e^{-\mu t}+\int_0^t \mu e^{-\mu s}(1-\phi_s)\,ds}\label{eq:B0_closed}\\
        \Bun_{1,t}&=\int_0^t \mu e^{-\mu s}(1-\phi_s)q^{B,s}_t \,ds\label{eq:B1un_closed}\\
    \Bun_{0,t}&=\int_0^t \mu e^{-\mu s}(1-\phi_s)(1-q^{B,s}_t) \,ds\label{eq:B0un_closed}\\
    \Bun_t&=\int_0^t \mu e^{-\mu s}(1-\phi_s) \,ds\label{eq:Bun_closed}\\
    \Nun_t &=e^{-\mu t}\label{eq:Nun_closed}.
\end{align}

Under neutral beliefs, the belief immediately after disclosure of bad evidence at time $t>0$ is $q^B_t=\bar{q}^B_t=\frac{B_{1,t}}{B_{0,t}+B_{1,t}}$, where $B_{1,t}$ and $B_{0,t}$ are given by \eqref{eq:B1_closed} and \eqref{eq:B0_closed}. We  also have 
\begin{align}
    p_t=\frac{\phi_t \Nun_t+\Bun_{1,t}}{\Nun_t+\Bun_{t}}\label{eq:p_identity}\\
    \bar{q}^B_t=\frac{B_{1,t}}{B_t}=\frac{\Bun_{1,t}}{\Bun_{t}}\label{eq:qbar_identity}.
\end{align}
A useful identity is
\begin{align}
    p_t-\bar{q}^B_t=\frac{\Nun_t}{\Nun_t+\Bun_t}(\phi_t-\bar{q}^B_t).\label{eq:p_barq^B_diff}
\end{align}
Note that this is positive, since $\bar{q}^B_t\leq q^{B,0}_t< \phi_t$. 
Moreover, $(\Bun_1,\Bun)$ satisfy the system
\begin{align}
    \dBun_{1,t}&=\lambda_0 (\Bun_t-\Bun_{1,t})-\lambda_1 \Bun_{1,t}\label{eq:dBun_1}\\
    \dBun_t&=\mu e^{-\mu t}(1-\phi_t)\label{eq:dBun}.
\end{align}
Thus $\bar{q}^B$ satisfies
\begin{align}
    \dot{\bar q}^B_t&=\frac{\dBun_{1,t}}{\Bun_t}-\bar{q}^B_t\frac{\dBun_t}{\Bun_t}=\lambda_0(1-\bar{q}^B_t)-\lambda_1 \bar{q}^B_t-\bar{q}^B_t\frac{\dBun_t}{\Bun_t}.\label{eq:dbarq^B_Bayes}
\end{align}

The local IC constraint for delaying disclosure is
\begin{align}\label{eq:local_IC_eventual_bad_disc}
    \dot{q}^B_t \geq  (\lambda_0+\lambda_1+r) (q^B_t-p_t)+(\lambda_0+\lambda_1)(p^*-q^B_t)=r q^B_t-(\Lambda+r)p_t+\lambda_0.
\end{align}
Note that if this condition fails for all $s\geq t$, then disclosure at $t$ is optimal. We now show that there exists such $t$. 

Using that $q^B_t=\bar{q}^B_t$ and  \eqref{eq:dbarq^B_Bayes}, \eqref{eq:local_IC_eventual_bad_disc} becomes
\begin{align}
  \lambda_0(1-\bar{q}^B_t)-\lambda_1 \bar{q}^B_t-\bar{q}^B_t\frac{\dBun_t}{\Bun_t}  &\geq r \bar{q}^B_t-(\Lambda+r)p_t+\lambda_0\notag\\
  \iff \bar{q}^B_t\frac{\dBun_t}{\Bun_t} &\leq (\Lambda+r)(p_t-\bar{q}^B_t).\label{eq:local_IC_eventual_bad_disc2}
\end{align}
By \eqref{eq:p_barq^B_diff} and rearranging, this is equivalent to
\begin{align}
   \Lambda+r\geq \bar{q}^B_t\frac{\dBun_t}{\Bun_t}\frac{\Nun_t+\Bun_t}{\Nun_t(\phi_t-\bar{q}^B_t)} = \bar{q}^B_t\frac{\mu(1-\phi_t)(\Nun_t+\Bun_t)}{\Bun_t(\phi_t-\bar{q}^B_t)}.\label{eq:bad_eventually_disclosed_limit}
\end{align}
On the right hand side, assuming $\lambda_1,\lambda_0>0$, we have the following limits. First, $\phi_t\to p^*\in (0,1)$, $\Nun_t\to 0$,  $\Bun_t\to \Bun_\infty:=\int_0^\infty \mu e^{-\mu s}(1-\phi_s)\,ds\in (0,1)$, and for each $s$, $\lim_{t\to \infty}q^{B,s}_t=p^*$. Next, by dominated convergence, $\Bun_{1,t} = \int_0^\infty \mathbbm{1}_{s\leq t} \mu e^{-\mu s}(1-\phi_s)q^{B,s}_t \,ds\to p^*\int_0^\infty \mu e^{-\mu s}(1-\phi_s) \,ds=p^*\Bun_\infty$. Hence, $\bar{q}^B_t\to p^*$, which implies $\phi_t-\bar{q}^B_t\to 0$ from above. 
Thus, the right-hand side of \eqref{eq:bad_eventually_disclosed_limit} diverges and the inequality fails for sufficiently large $t$, as desired.

\subsection{Proof of Lemma \ref{lem:no_bad_immediate}}
Toward a contradiction, suppose bad evidence is disclosed immediately at all $t\geq T$. Then all bad evidence disclosed at $t>T$ is assumed to be fresh: $q^B_t\equiv 0$ for all $t>T$. This implies that after disclosing bad evidence at any $t>T$, the sender's flow payoff at each $s\geq t$ is $\phi_{s-t}(0)$. Conditional on nondisclosure after $T$, the receiver believes that with probability $1$, the sender does not possess evidence of either type. Hence $p_t\equiv \phi_t$ for all $t>T$. It follows that by not disclosing bad evidence, the sender's flow payoff for $s\geq t$ is $\phi_s=\phi_{s-t}(\phi_t)>\phi_{s-t}(0)$. Thus, disclosing bad evidence at $t$ is not optimal for any $t>T$, a contradiction.

\subsection{Proof of Theorem \ref{thm:stationary_bursts}}
We show that such an equilibrium exists for large $r$, supported by neutral off-path beliefs after bad evidence disclosure strictly between bursts. 

We first establish optimality of immediately disclosing good evidence. Immediately after a burst, the receiver believes that the sender does not possess evidence, and thus beliefs coincide with $\phi_t=p^*=p_0$ given the stationary prior. In between bursts, the possibility that the sender possesses undisclosed bad evidence means that $p_t<p^*$. Hence, at all times we have $p_t\leq p^*<\frac{\lambda_0+r}{\Lambda+r}$, and by the argument in the proof of Lemma \ref{lem:no_timestamps_suff_cond_no_bad_disc}, disclosing good evidence immediately is indeed optimal when it is conjectured.

We now show that for large $r$, there exists $\Delta$ such that the bad evidence disclosure policy is optimal when it is conjectured. 

Suppose the bursts are conjectured to be spaced $\Delta$ units apart for some $\Delta>0$. Immediately after any burst, normalizing the current time to $0$, the belief variables evolve exactly as in \eqref{eq:B1_closed}-\eqref{eq:Nun_closed} and \eqref{eq:p_identity}-\eqref{eq:qbar_identity}. 

The first optimality condition is that delay must be optimal between bursts. Under the neutral off-path beliefs, disclosing at time $t \in (0,\Delta)$ results in a posterior $\bar{q}^B_t$, or at the next burst, $t=\Delta$, results in a posterior of $\bar{q}^B_\Delta$. Hence, the local delay IC condition is still \eqref{eq:local_IC_eventual_bad_disc} or equivalently \eqref{eq:local_IC_eventual_bad_disc2}. At time $0$, we have $\dot{q}^B_0>0$ while the RHS of \eqref{eq:local_IC_eventual_bad_disc} reduces to $-(\lambda_0+\lambda_1+r)p^*+(\lambda_0+\lambda_1)p^*=-rp^*<0$. Hence, by continuity, delay is strictly optimal initially. Moreover, we have already seen that  \eqref{eq:local_IC_eventual_bad_disc2} fails for large $t$; define $\Dr$ as:
\begin{align*}
    \Dr:=\inf\{t>0: \text{\eqref{eq:local_IC_eventual_bad_disc2} fails}\}.
\end{align*}
By definition, delay is locally incentive compatible in between bursts at regular intervals of length $\Delta$ if and only if $\Delta\leq \Dr$. 

The second optimality condition is that disclosure must be optimal at each burst. Suppose the time is $n\Delta$ and the sender has bad evidence. By disclosing it immediately, the sender obtains payoff calculated earlier as 
\begin{align*}
    \Pi(q^B_{\Delta}):=\frac{p^*}{r}-\frac{p^*-q^B_\Delta}{r+\Lambda},
\end{align*}
where we have used that $q^B_{n\Delta}=q^B_{\Delta}$ by the cyclical property of beliefs.

By waiting until $(n+1)\Delta$ to disclose, the sender obtains
\begin{align*}
    \int_{n\Delta}^{(n+1)\Delta}e^{-r(s-n\Delta)}p_s\,ds +e^{-r\Delta}\Pi(q^B_{\Delta}). 
\end{align*}
Disclosure at time $\Delta$ is therefore optimal if and only if
\begin{align}
    \Gamma(\Delta):=\Pi(q^B_{\Delta})(1-e^{-r\Delta}) - \int_{0}^{\Delta}e^{-rs}p_s\,ds \geq 0,\label{eq:burst_participation}
\end{align}
where we have used that beliefs are cyclical with period $\Delta$. 
Since the sender's problem is identical at each burst, this condition ensures that if the sender has bad evidence at any burst date, he is willing to disclose it.

The condition \eqref{eq:burst_participation} fails for all sufficiently small $\Delta$. In particular, we have equality at $\Delta=0$ while the derivatives of the left and right hand sides with respect to $\Delta$ at $\Delta=0$ are $r\Pi$ and $p^*$, respectively, where $\Pi<p^*/r$. However, \eqref{eq:burst_participation} holds for all sufficiently large $\Delta$; taking $\Delta\to\infty$, we have $q^B_{\Delta}\to p^*$ and therefore $\Pi(q^B_\Delta)\to \Pi(p^*)= p^*/r$, while the value of concealing forever is $\int_0^\infty e^{-rs}p_s\,ds<\int_0^\infty e^{-rs}p^*=p^*/r$. 

The rest of the proof consists of showing that for sufficiently large $r$, $\Gamma(\Dr)>0$, so that our conjectured strategy profile with $\Delta=\Dr$ is an equilibrium.

Let $y_t:=p^*-p_t$. For later use, note that $y_0=0, \dot{y}_0=\mu p^*(1-p^*)$, and $\ddot{y}_0=\mu p^*(1-p^*)(\mu(2p^*-1)-\Lambda)$. Then $\Gamma(\Dr)$ rearranges to
\begin{align*}
    \int_0^{\Dr} e^{-rs} y_s\,ds-(1-e^{-r\Dr})\frac{p^*-\bar{q}^B_{\Dr}}{\Lambda+r}
\end{align*}

\begin{lemma}\label{lem:r_Gamma_limits}
    As $r\to\infty$, we have $\Dr\to+\infty$, $r(p^*-\bar{q}^B_{\Dr})\to \dot{y}_0$, and $r^2\Gamma(\Dr)\to 0$.
\end{lemma}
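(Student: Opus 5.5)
The key observation is that, for a conjectured burst spacing, every belief object between bursts ($p_t$, $\bar q^B_t$, $\Bun_t$, $\Nun_t$) is computed from \eqref{eq:B1_closed}--\eqref{eq:Nun_closed} and \eqref{eq:p_identity}--\eqref{eq:qbar_identity} with $\phi\equiv p^*$, and none of them depends on $r$. Only the threshold in the local IC depends on $r$. So I will rewrite \eqref{eq:local_IC_eventual_bad_disc2}, using \eqref{eq:p_barq^B_diff}, as $h(t)\le r$, where
\begin{align*}
h(t):=\frac{\bar q^B_t\,\mu e^{-\mu t}(1-p^*)(\Nun_t+\Bun_t)}{\Bun_t\,\Nun_t\,(p^*-\bar q^B_t)}-\Lambda
\end{align*}
is an $r$-independent function that is continuous on $(0,\infty)$. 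Then $\Dr=\inf\{t>0:h(t)>r\}$, and by continuity $h(\Dr)=r$ whenever $\Dr$ is finite. The proof of Lemma \ref{lem:bad_eventually_disclosed} already shows that $h(t)\to\infty$ as $t\to\infty$, so $\Dr<\infty$.

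\emph{Step 1: $\Dr\to\infty$.} It suffices to show that $h$ is bounded on every interval $(0,T]$, because then $\Dr>T$ once $r>\sup_{(0,T]}h$. On compact subsets of $(0,\infty)$, $h$ is bounded by continuity, since there $\Bun_t>0$ and $p^*-\bar q^B_t\ge p^*-q^{B,0}_t>0$. The only issue is the behavior as $t\downarrow0$, which I handle by Taylor expansion. We have $\Bun_t\approx\mu(1-p^*)t$ and $\Nun_t\to1$. The bad evidence in the pool has average age about $t/2$, so $\bar q^B_t\approx p^*\Lambda t/2$, and $p^*-\bar q^B_t\to p^*$. Hence $h(t)\to \frac{p^*\Lambda/2}{p^*}-\Lambda=-\Lambda/2$, which is finite. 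I expect this boundary expansion, i.e.\ checking that the $1/t$ singularity in $\dBun_t/\Bun_t$ is exactly cancelled by $\bar q^B_t=O(t)$, to be the main technical point. It is routine from the closed forms.

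\emph{Step 2: the limit of $r(p^*-\bar q^B_{\Dr})$.} The IC binds at $\Dr$, which gives
\begin{align*}
p^*-\bar q^B_{\Dr}=\frac{\bar q^B_{\Dr}\,\mu(1-p^*)e^{-\mu\Dr}(\Nun_{\Dr}+\Bun_{\Dr})}{(\Lambda+r)\,\Bun_{\Dr}\,\Nun_{\Dr}}.
\end{align*}
Since $\Nun_t=e^{-\mu t}$, the factors $e^{-\mu\Dr}$ and $\Nun_{\Dr}$ cancel. By Step 1 we have $\Dr\to\infty$, and by the limits computed in the proof of Lemma \ref{lem:bad_eventually_disclosed} (now with $\phi\equiv p^*$) we have $\bar q^B_{\Dr}\to p^*$ and $\Bun_{\Dr},\,\Nun_{\Dr}+\Bun_{\Dr}\to 1-p^*$. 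Multiplying by $r$ and using $r/(\Lambda+r)\to1$ gives $r(p^*-\bar q^B_{\Dr})\to\mu p^*(1-p^*)=\dot y_0$.

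\emph{Step 3: $r^2\Gamma(\Dr)\to0$.} I split $r^2\Gamma(\Dr)$ into two terms. The first is $r^2\int_0^{\Dr}e^{-rs}y_s\,ds$. Here $y$ is a fixed, $r$-independent, bounded and smooth function with $y_0=0$, so a Watson/Laplace expansion gives $r^2\int_0^\infty e^{-rs}y_s\,ds=\dot y_0+O(1/r)$. The tail beyond $\Dr$ is at most $r^2e^{-r\Dr}/r\to0$, since $\Dr\to\infty$. The second term is
\begin{align*}
r(1-e^{-r\Dr})\,\frac{r}{\Lambda+r}\,(p^*-\bar q^B_{\Dr}),
\end{align*}
which converges to $\dot y_0$ by Step 2, because $r\Dr\to\infty$ and $r/(\Lambda+r)\to1$. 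Their difference therefore tends to $0$. The subsequent argument in the paper will then presumably refine these expansions to second order, using $\ddot y_0$, to sign $\Gamma(\Dr)$.
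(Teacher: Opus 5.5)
Your proof is correct and follows essentially the same route as the paper. The $1/t$ singularity of $\dBun_t/\Bun_t$ is cancelled by $\bar q^B_t=O(t)$, so the delay condition \eqref{eq:local_IC_eventual_bad_disc2} holds uniformly on compacts for large $r$, giving $\Dr\to\infty$; the binding condition at $\Dr$, together with $\bar q^B_t\to p^*$ and $\Bun_t\to 1-p^*$, gives $r(p^*-\bar q^B_{\Dr})\to\dot y_0$; and $r^2\Gamma(\Dr)$ is the difference of two terms that each converge to $\dot y_0$, consistent with the paper's remark that this limit has the form $\dot y_0-\dot y_0$ before the second-order refinement in Lemma \ref{lem:r^3Gamma_split}.
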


In light of Lemma \ref{lem:r_Gamma_limits}, to show eventual positivity of $\Gamma(\Dr)$, we show that
\begin{align*}
    0<\lim_{r\to\infty} r^3\Gamma(\Dr)=\lim_{r\to\infty} \left(r^3\int_0^{\Dr} e^{-rs} y_s\,ds-r^3(1-e^{-r\Dr})\frac{p^*-\bar{q}^B_{\Dr}}{\Lambda+r}\right).
\end{align*}
Since the limit $r^2 \Gamma(\Dr)$ had the form $\dot{y}_0-\dot{y}_0$, taking limits on each term alone would yield $\infty-\infty$; so instead, we recenter by $\dot{y}_0$:
\begin{align}
r^3\Gamma(\Dr)=\underbrace{r^3\left(\int_0^{\Dr} e^{-rs} y_s\,ds-\frac{\dot{y}_0}{r^2}\right)}_{:=X(r)}-\underbrace{r^3\left((1-e^{-r\Dr})\frac{p^*-\bar{q}^B_{\Dr}}{\Lambda+r}-\frac{\dot{y}_0}{r^2}\right)}_{=:Y(r)}.\label{eq:r^3Gamma}
\end{align}

\begin{lemma}\label{lem:r^3Gamma_split}
    As $r\to \infty$, $X(r)\to \ddot{y}_0=\dot{y}_0(\mu(2p^*-1)-\Lambda)$ and $Y(r)\to \dot{y}_0(\max\{\Lambda-\mu,0\}-2\Lambda-\mu(1-p^*))<\ddot{y}_0$. Hence, $\lim_{r\to \infty}r^3\Gamma(\Dr)>0$.
\end{lemma}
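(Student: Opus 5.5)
The plan is to treat $X(r)$ and $Y(r)$ separately. $X(r)$ is a Laplace-transform expansion. $Y(r)$ needs a second-order expansion of the equation that defines $\Dr$. Throughout I use the stationary prior, so $\phi_t\equiv p^*$. Then the objects in the closed forms \eqref{eq:B1un_closed}--\eqref{eq:Nun_closed} are explicit:
\[
\Nun_t=e^{-\mu t},\qquad \Bun_t=(1-p^*)(1-e^{-\mu t}),
\]
and $\Bun_{1,t}=(1-p^*)p^*\int_0^t\mu e^{-\mu s}\bigl(1-e^{-\Lambda(t-s)}\bigr)ds$ is a combination of $1$, $e^{-\mu t}$ and $e^{-\Lambda t}$ (with a $te^{-\mu t}$ term if $\mu=\Lambda$). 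Consequently $p_t$, $\bar q^B_t$ and $y_t=p^*-p_t$ are explicit exponential sums that do not depend on $r$. Only $\Dr$ depends on $r$.

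\textbf{Limit of $X(r)$.} Since $y$ is smooth with bounded derivatives of all orders and $y_0=0$, Watson's lemma (or three integrations by parts) gives
\[
\int_0^\infty e^{-rs}y_s\,ds=\frac{\dot y_0}{r^2}+\frac{\ddot y_0}{r^3}+O(r^{-4}).
\]
The truncation error from cutting the integral at $\Dr$ is at most $\sup|y|\,e^{-r\Dr}/r$. Because $\Dr\to\infty$ by Lemma \ref{lem:r_Gamma_limits}, $r^3e^{-r\Dr}/r\to0$. Hence $X(r)\to\ddot y_0$, which equals $\dot y_0(\mu(2p^*-1)-\Lambda)$ by the formulas stated before the lemma.

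\textbf{Limit of $Y(r)$.} The factor $e^{-r\Dr}$ is again negligible. Suppose we have the refined expansion
\[
r(p^*-\bar q^B_{\Dr})=\dot y_0+\frac{c}{r}+o(1/r).
\]
Then, using $\frac1{\Lambda+r}=\frac1r\bigl(1-\frac{\Lambda}{r}+O(r^{-2})\bigr)$, we get $Y(r)\to c-\Lambda\dot y_0$. So the target is
\[
c=\dot y_0\bigl(\max\{\Lambda-\mu,0\}-\Lambda-\mu(1-p^*)\bigr).
\]

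To find $c$, I would evaluate \eqref{eq:local_IC_eventual_bad_disc2} with equality at $t=\Dr$. Note that failure of \eqref{eq:local_IC_eventual_bad_disc2} is first reached through equality, by continuity. Substituting \eqref{eq:p_barq^B_diff} and the explicit stocks gives
\[
(\Lambda+r)(p^*-\bar q^B_D)=\bar q^B_D\,\mu(1-p^*)\,\frac{\Nun_D+\Bun_D}{\Bun_D}.
\]
At leading order this reproduces $r(p^*-\bar q^B)\to\mu p^*(1-p^*)=\dot y_0$, as in Lemma \ref{lem:r_Gamma_limits}. At the next order there are three corrections of relative size $1/r$ or smaller.
\begin{itemize}
\item The factor $\Lambda/r$ on the left-hand side.
\item Replacing $p^*$ by $\bar q^B_D=p^*-\dot y_0/r+\dots$ on the right-hand side, which produces the $-\mu(1-p^*)$-type term after dividing by $p^*$.
\item The ratio $(\Nun_D+\Bun_D)/\Bun_D=1+e^{-\mu D}/(1-p^*)+O(e^{-2\mu D})$.
\end{itemize}

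The third correction is where the $\max$ comes from. From the explicit form, $p^*-\bar q^B_D$ is asymptotically a combination of $e^{-\Lambda D}$ and $e^{-\mu D}$, dominated by whichever decays more slowly. Setting it equal to roughly $\dot y_0/r$ determines $D$ to leading order. If $\mu<\Lambda$, then $e^{-\mu D}\asymp 1/r$ and the $\Nun$ correction contributes at order $1/r$. If $\mu\ge\Lambda$, then $e^{-\mu D}=o(1/r)$ and it drops out. Collecting the terms in each case should give the stated $c$. The strict inequality $Y<\ddot y_0$ then reduces to checking
\[
\max\{\Lambda-\mu,0\}-2\Lambda-\mu(1-p^*)<\mu(2p^*-1)-\Lambda,
\]
that is, $\max\{\Lambda-\mu,0\}<\Lambda+\mu p^*$. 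This holds because $\mu p^*>0$, and in the case $\Lambda-\mu>0$ also $\Lambda-\mu<\Lambda$.

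\textbf{Main obstacle.} The hardest step is the second-order asymptotics of $\Dr$ and of $\bar q^B_{\Dr}$. The relevant exponents compete, so the cases $\mu<\Lambda$, $\mu>\Lambda$ and $\mu=\Lambda$ (with its polynomial factor) must be handled separately. I would first write $p^*-\bar q^B_D$ and $\Nun_D/\Bun_D$ exactly, then solve the defining equation perturbatively for $e^{-\min\{\mu,\Lambda\}D}$ in powers of $1/r$.
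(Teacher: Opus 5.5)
Your proposal is correct and takes essentially the same route as the paper, built on the decomposition \eqref{eq:r^3Gamma}: a Laplace (Watson) expansion of $\int_0^{\Dr}e^{-rs}y_s\,ds$ gives $X(r)$, and expanding the equality case of \eqref{eq:local_IC_eventual_bad_disc2} at $\Dr$ gives $Y(r)$, with the term $\max\{\Lambda-\mu,0\}$ coming from whether the $\Nun/\Bun$ correction $e^{-\mu\Dr}$ is of order $1/r$. The one constant you leave implicit is $\lim_{r\to\infty} r e^{-\mu\Dr}=(1-p^*)\max\{\Lambda-\mu,0\}$; it follows at once from $r(p^*-\bar q^B_{\Dr})\to\dot y_0$ and the exact formula $p^*-\bar q^B_t=\frac{p^*\mu}{\Lambda-\mu}\cdot\frac{e^{-\mu t}-e^{-\Lambda t}}{1-e^{-\mu t}}$ (taking its limiting form when $\mu=\Lambda$), and because the defining equation is linear in $p^*-\bar q^B_{\Dr}$ once $e^{-\mu\Dr}$ is fixed, you need only this first-order fact about $\Dr$, not a second-order expansion or a series in powers of $1/r$.
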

By Lemma \ref{lem:r^3Gamma_split}, $\Gamma(\Dr)>0$ for sufficiently large $r>0$, so the construction is complete.

\bibliographystyle{jpe} 
\bibliography{timestampsbib}

\end{document}